\theoremstyle{plain}
\newtheorem{assumption}{\protect\assumptionname}
\theoremstyle{plain}
\newtheorem{thm}{\protect\theoremname}
\theoremstyle{definition}
\newtheorem{condition}{\protect\conditionname}
\theoremstyle{plain}
\newtheorem{lem}{\protect\lemmaname}
\theoremstyle{definition}
 \newtheorem{example}{\protect\examplename}
\theoremstyle{plain}
\newtheorem{cor}{\protect\corollaryname}
\newtheorem{remark}{\protect\remarkname}
\theoremstyle{plain}
\def\var{\mathop{\rm Var}}
\providecommand{\assumptionname}{Assumption}
\providecommand{\conditionname}{Condition}
\providecommand{\corollaryname}{Corollary}
\providecommand{\examplename}{Example}
\providecommand{\lemmaname}{Lemma}
\providecommand{\theoremname}{Theorem}
\providecommand{\remarkname}{Remark}
\begin{document}

\sloppy

\global\long\def\a{\alpha}%
 
\global\long\def\b{\beta}%
 
\global\long\def\g{\gamma}%
 
\global\long\def\d{\delta}%
 
\global\long\def\e{\epsilon}%
 
\global\long\def\l{\lambda}%
 
\global\long\def\t{\theta}%
 
\global\long\def\o{\omega}%
 
\global\long\def\s{\sigma}%

\global\long\def\G{\Gamma}%
 
\global\long\def\D{\Delta}%
 
\global\long\def\L{\Lambda}%
 
\global\long\def\T{\Theta}%
 
\global\long\def\O{\Omega}%
 
\global\long\def\R{\mathbb{R}}%
 
\global\long\def\N{\mathbb{N}}%
 
\global\long\def\Q{\mathbb{Q}}%
 
\global\long\def\I{\mathbb{I}}%
 
\global\long\def\P{\mathbb{P}}%
 
\global\long\def\E{\mathbb{E}}%
\global\long\def\B{\mathbb{\mathbb{B}}}%
\global\long\def\S{\mathbb{\mathbb{S}}}%
\global\long\def\V{\mathbb{\mathbb{V}}\text{ar}}%
 
\global\long\def\GG{\mathbb{G}}%
\global\long\def\TT{\mathbb{T}}%

\global\long\def\X{{\bf X}}%
\global\long\def\cX{\mathscr{X}}%
 
\global\long\def\cY{\mathscr{Y}}%
 
\global\long\def\cA{\mathscr{A}}%
 
\global\long\def\cB{\mathscr{B}}%
\global\long\def\cF{\mathscr{F}}%
 
\global\long\def\cM{\mathscr{M}}%
\global\long\def\cN{\mathcal{N}}%
\global\long\def\cG{\mathcal{G}}%
\global\long\def\cC{\mathcal{C}}%
\global\long\def\sp{\,}%

\global\long\def\es{\emptyset}%
 
\global\long\def\mc#1{\mathscr{#1}}%
 
\global\long\def\ind{\mathbf{\mathbbm1}}%
\global\long\def\indep{\perp}%

\global\long\def\any{\forall}%
 
\global\long\def\ex{\exists}%
 
\global\long\def\p{\partial}%
 
\global\long\def\cd{\cdot}%
 
\global\long\def\Dif{\nabla}%
 
\global\long\def\imp{\Rightarrow}%
 
\global\long\def\iff{\Leftrightarrow}%

\global\long\def\up{\uparrow}%
 
\global\long\def\down{\downarrow}%
 
\global\long\def\arrow{\rightarrow}%
 
\global\long\def\rlarrow{\leftrightarrow}%
 
\global\long\def\lrarrow{\leftrightarrow}%

\global\long\def\abs#1{\left|#1\right|}%
 
\global\long\def\norm#1{\left\Vert #1\right\Vert }%
 
\global\long\def\rest#1{\left.#1\right|}%

\global\long\def\bracket#1#2{\left\langle #1\middle\vert#2\right\rangle }%
 
\global\long\def\sandvich#1#2#3{\left\langle #1\middle\vert#2\middle\vert#3\right\rangle }%
 
\global\long\def\turd#1{\frac{#1}{3}}%
 
\global\long\def\ellipsis{\textellipsis}%
 
\global\long\def\sand#1{\left\lceil #1\right\vert }%
 
\global\long\def\wich#1{\left\vert #1\right\rfloor }%
 
\global\long\def\sandwich#1#2#3{\left\lceil #1\middle\vert#2\middle\vert#3\right\rfloor }%

\global\long\def\abs#1{\left|#1\right|}%
 
\global\long\def\norm#1{\left\Vert #1\right\Vert }%
 
\global\long\def\rest#1{\left.#1\right|}%
 
\global\long\def\inprod#1{\left\langle #1\right\rangle }%
 
\global\long\def\ol#1{\overline{#1}}%
 
\global\long\def\ul#1{\underline{#1}}%
 
\global\long\def\td#1{\tilde{#1}}%
\global\long\def\bs#1{\boldsymbol{#1}}%

\global\long\def\upto{\nearrow}%
 
\global\long\def\downto{\searrow}%
 
\global\long\def\pto{\overset{p}{\longrightarrow}}%
 
\global\long\def\dto{\overset{d}{\longrightarrow}}%
 
\global\long\def\asto{\overset{a.s.}{\longrightarrow}}%

\setlength{\abovedisplayskip}{6pt} \setlength{\belowdisplayskip}{6pt}
\title{IV Regressions without Exclusion Restrictions\thanks{We thank Jason Blevins, Xiaohong Chen, Xu Cheng, Stephen Cosslett, Francis Diebold, Keisuke Hirano, Robert de Jong, Lixiong Li, Xiao Lin, Ce Liu, Joris Pinkse, Neslihan Sakarya, Frank Schorfheide, Petra Todd, Bruce Weinberg, as well as participants at various conferences and seminars for helpful comments and suggestions.}}

\author{Wayne Yuan Gao\thanks{Department of Economics, University of Pennsylvania, 133 S 36th St, Philadelphia, PA 19104, USA. Email: waynegao@upenn.edu}$\ \ $and
Rui Wang\thanks{Department of Economics, The Ohio State University, 1945 N High St, Columbus, OH 43210, USA. Email: wang.16498@osu.edu} }
\date{July 30, 2023}

\maketitle
\begin{abstract}
%\noindent

We study identification and estimation of endogenous linear and nonlinear regression models without excluded instrumental variables, based on the standard mean independence condition and a nonlinear relevance condition. Based on the identification results, we propose two semiparametric estimators
as well as a discretization-based estimator that does not require
any nonparametric regressions. We establish their asymptotic
normality and demonstrate
via simulations their robust finite-sample performances with respect to exclusion restrictions violations and endogeneity.  Our approach is applied to study the returns to education, and to test the direct effects of college proximity indicators as well as family background variables on the outcome.

%We show that endogenous linear regression models can be identified without excluded instrumental variables, based on
%the standard mean independence condition and a no-multicollinearity condition on the conditional expectations of endogenous covariates given the \emph{included exogenous} covariates. Based on the
%identification results, we propose two semiparametric estimators
%as well as a discretization-based estimator that does not require
%any nonparametric regressions. We establish their asymptotic
%normality, provide corresponding variance estimators, and demonstrate
%via simulations the good finite-sample performances of our proposed estimation and inference procedures. In particular, we find that the discretization-based estimator performs remarkably well in finite samples, while being very simple and fast to compute.

\noindent \textbf{~}\\
\textbf{Keywords}: linear regression, quantile regression, endogeneity, instrumental variable,
exclusion restriction, semiparametric two-stage estimation%, discretization
\end{abstract}
\newpage{}

\section{\label{sec:Intro}Introduction}

The method of instrumental variables (IV) has been a central approach to identify and estimate linear regression models with endogeneity. The conventional IV regression exploits excluded instrumental variables that have no direct effects on the outcome variable. However, finding valid instruments that satisfy the exclusion restriction can be challenging in many applications.

In this paper, we show that even in the absence of excluded instruments, the endogenous linear regression model can still be identified by leveraging the nonlinear relevance between the included exogenous regressor and the endogenous variable. In contrast to the traditional IV regression that uses a linear first-stage projection, our approach applies a mean projection of the endogenous variable on only the included exogenous regressor in the first stage. 
More generally, this approach can also be applied to nonlinear regressions with known functional form, which may naturally arise from structure models. In such cases, we provide local identification of the model parameter under a full-rank condition.

To illustrate, let us consider the following simple linear regression model: 
\begin{equation}
Y_{i}=\a_{0}+\b_{0}Z_{i}+\g_{0}X_{i}+\e_{i}\label{eq:ModelScalar}
\end{equation}
with a scalar endogenous variable $X_{i}$ and a scalar exogenous
variable $Z_{i}$ such that 
\[
\E\left[\rest{\e_{i}}Z_{i}\right]=0.
\]
%The results can be generalized to multi-dimensional $X_{i}$ and $Z_{i}$ in Section \ref{sec:ID}.

To identify the coefficient $\t_0:=\left(\a_{0},\b_{0},\g_{0}\right)$, the standard IV regression, or the two-stage least square (2SLS) regression, relies on the availability of an additional variable $Z_{i, exc}$ and applies a linear first-stage projection as follows:
\[
X_{i}=\l_{0}+\l_{1}Z_{i}+\l_{2}Z_{i, exc}+U_{i},
\]
where the instrumental variable $Z_{i, exc}$ is required to be  \emph{exogenous }with respect to $\e_{i}$\emph{,}
\emph{relevant }for $X_{i}$, and \emph{excluded} from the regression
model \eqref{eq:ModelScalar}.

%It is well known that to identify $\left(\a_{0},\b_{0},\g_{0}\right)$
%under this setting, we usually\footnote{See the literature review towards the end of this section for discussion about alternative approaches for identification without exclusion restrictions.} need to find at least one instrumental variable
%$Z_{i}^{exc}$ that is  \emph{exogenous }with respect to $\e_{i}$\emph{,}
%\emph{relevant }for $X_{i}$, and \emph{excluded} from the regression
%model \eqref{eq:ModelScalar}.

By contrast, this paper investigates the identification and estimation of $\t_0$ without excluded instrumental variables. 
%We show that the coefficient $\t_0$ can be identified with only the \emph{included} exogenous variable $Z_{i}$ by exploiting the nonlinear relevance between $Z_i$ and $X_i$.
%when $Z_i$ is \emph{nonlinearly relevant} for the endogenous variables $X_{i}$.
To see the idea, take
conditional expectations of both sides of \eqref{eq:ModelScalar}
given $Z_{i}$, we have
\begin{equation}
\label{eq:no_error}
\E\left[\rest{Y_{i}-\a_{0}-\b_{0}Z_{i}-\g_{0} X_i }Z_{i}=z\right]=0,
%h_{0}\left(Z_{i}\right)=\a_{0}+\b_{0}Z_{i}+\g_{0}\pi_{0}\left(Z_{i}\right),
\end{equation}
because the term $\E\left[\rest{\e_i}Z_i \right]=0$ by the exogeneity of $Z_i$.

Instead of linearly projecting endogenous variable $X_i$ on $Z_i$, we adopt the mean projection of $X_i$ on $Z_i$, i.e., 
$\pi_{0}\left(Z_{i}\right):=\E\left[\rest{X_{i}}Z_{i}\right]$. 
 Then the moment restriction in \eqref{eq:no_error} can be written as
\begin{equation}
\label{eq:mom_res}
\E\left[\rest{Y_{i}-\a_{0}-\b_{0}Z_{i}-\g_{0}\pi_{0}\left(Z_{i}\right)  }Z_{i}=z\right]=0.
%h_{0}\left(Z_{i}\right)=\a_{0}+\b_{0}Z_{i}+\g_{0}\pi_{0}\left(Z_{i}\right),\label{eq:no_error}
\end{equation}
Our key idea is based on the simple observation that condition \eqref{eq:mom_res} can be viewed as a linear regression of $Y_i$
on 1, $Z_{i}$, and $\pi_{0}\left(Z_{i}\right)$ with \emph{no} endogeneity
issue since $Z_i$ satisfies the exogeneity condition. It is thus clear that the parameter $\t_0$ is identified in \eqref{eq:mom_res} as long as $\left(1,Z_{i},\pi_{0}\left(Z_{i}\right)\right)$
are not (perfectly) multicollinear, which is equivalent to the following requirement:
\[
\pi_{0}\left(z\right)\text{ is nonlinear in }z,
\]
%\text{ on the support of }Z_{i}
i.e, there exist no constants $a,b\in\R$ such that $\pi_{0}\left(z\right)=a+bz$
for any $z$ in the support of $Z_i$. This nonlinearity condition is testable, as $\pi_0$ only involves observed variables.

One natural setting this nonlinear relationship arises is when the endogenous regressor $X_{i}\in \{0, 1\}$ is a binary variable. Then the propensity score function, $\pi_0(z)$, is naturally nonlinear in $z$. For example, consider the following binary choice model for $X_i$,
\[ X_{i}=\ind\left\{ \eta_0+\eta_1 Z_i \geq u_{i}\right\},  \]
where $u_i \indep Z_i$ and $ u_i$ follows some distribution $F_u$ (e.g., a normal distribution for the Probit model). So
%The endogeneity is captured by the potential dependence between $\e_i$ and $u_i$.  Let $\Phi(\cdot)$ denote the standard normal CDF. Then
\[
\pi_{0}\left(z\right)=\E\left[\rest{X_{i} }Z_{i}=z\right]=F_u(\eta_0+\eta_1z).
\]
%where $\Phi(\cdot)$ denote the standard normal CDF.
Then, the function $\pi_{0}\left(z\right)$
is nonlinear as long as $Z_i$ takes at least three values and $F_u$ is not a uniform distribution. Note that in our identification approach, the distribution $F_u$ does not need to be known, which is distinct from the Heckman correction approach.\footnote{Section \ref{sec:disc} provides a more detailed discussion of the differences between our approach and the Heckman correction approach.}

%Note that we just use the Probit model to illustrate the nonlinearity of $\pi_0$, but the parametric distribution of $u_i$ is not needed for identification.
%\footnote{Our identification strategy does not apply if $Z_{i}$ is binary, since $\pi_{0}\left(z\right)$ cannot be nonlinear if the support of $Z_{i}$ only contains two points.}
%Note that our identification strategy solely requires the nonlinearity of $\pi_0$ but does not need to know the specific functional form of $\pi_0$. The Probit model is employed to illustrate the nonlinearity, while the parametric distribution of $u_i$ is not needed for identification. $$\eta_1\neq 0$.

As a more concrete example, suppose we are interested in the effect of a college degree $X_i$ on (log) wage $Y_i$. Then the included instrument $Z_i$ could be years of parents' education, which takes more than three values. Alternatively, $Z_i$ can include two binary variables $Z_{i1}, Z_{i2}$, with $Z_{i1}$ representing gender and $Z_{i2}$ representing whether one's mother has a college degree.\footnote{The nonlinearity of $\pi_0$ can be satisfied under a mild condition on their coefficients: $\Phi(\eta_0+\eta_1+\eta_2)+\Phi(\eta_0)-\Phi(\eta_0+\eta_1)-\Phi(\eta_0+\eta_2)\neq 0$, where $\eta_1, \eta_2$ are the coefficients of $Z_{i1}, Z_{i2}$, respectively.} 

% In fact, it is straightforward to show that, when $Z_i$ include two variables, then the nonlinearity of $\pi_0$ can be satisfied even if $Z_{i1}, Z_{i2}$ are both binary (say, with $Z_{i1}$ representing gender and $Z_{i2}$ representing whether one's mother has a college degree), provided a mild condition on their coefficients.\footnote{$\pi_0$ is nonlinear if $\Phi(\eta_0+\eta_1+\eta_2)+\Phi(\eta_0)-\Phi(\eta_0+\eta_1)-\Phi(\eta_0+\eta_2)\neq 0$, where $\eta_1, \eta_2$ are the coefficients of $Z_{i1}, Z_{i2}$, respectively.} Here it is natural to include parents' education or gender indicator in the regression model as they are expected to have direct effects on one's wage.

More generally, the analysis extends to endogenous nonlinear and quantile regression. By adopting a mean projection of the nonlinear function onto the included exogenous regressor, we show that local identification is achieved under a full-rank condition. In particular, for quantile regression, we show that the full-rank condition is equivalent to a different  nonlinear relevance condition.

~

The identification result suggests a natural semiparametric two-step
 estimator. We describe the estimator for the linear regression model, and the estimator for the nonlinear and quantile regression is provided in Section \ref{sec:nonlinear}.
 Specifically, given
 $\hat{\pi}$ obtained via first-stage nonparametric
regression of $X_{i}$ on $Z_{i}$, we
construct our first estimator $\hat{\t}$ by 
\[
\text{regressing \ensuremath{Y_{i}\text{ on }1,\ }}Z_{i}\ \text{and}\ \hat{\pi}_{0}\left(Z_{i}\right)\ \ \text{via OLS}.
\]

We also propose an estimator that uses the mean projection $h_0\left( Z_i\right):=\E\left[\rest{Y_{i}}Z_{i}\right]$ of $Y_i$ on $Z_i$ as the dependent variable. Let $\hat{h}$ denote the estimator of nonparametric regression of $Y_i$ on $Z_i$, the second estimator
$\hat{\t}^{*}$ is constructed by 
\[
\text{regressing \ensuremath{\hat{h}\left(Z_{i}\right)\text{ on }1,\ }}Z_{i}\ \text{and}\ \hat{\pi}_{0}\left(Z_{i}\right)\ \text{via OLS}.
\]

The only difference between $\hat{\t}$ and $\hat{\t}^{*}$ lies in
the dependent variable used in the second step: $\hat{\t}$ uses the raw observed variable $Y_{i}$, while $\hat{\t}^{*}$ uses
the fitted value $\hat{h}\left(Z_{i}\right)$ obtained through nonparametric regression of $Y_{i}$ on $Z_{i}$. We propose the second estimator $\hat{\t}^{*}$,  as it can perform slightly better than $\hat{\t}$ under some specifications in simulations.
%   Unlike in \citet*{robinson1988root},
%our asymptotic variance is not generally equal to the asymptotic variance
%matrix of the \emph{infeasible }regression of $Y_{i}$ on $1$, $Z_{i}$
%and $\pi_{0}\left(Z_{i}\right)$ with the true values of $\pi_{0}\left(Z_{i}\right)$
%known. However, our estimators $\hat{\t}^{*}$ and $\hat{\t}$ can
%be \emph{less, equally, or more} asymptotically efficient than the
%infeasible estimator, depending on the regression coefficient $\g_{0}$
%on the endogenous variable $X_{i}$ as well as the dependence between
%$X_{i}$ and $\e_{i}$, i.e. the exact form of endogeneity. 
%

We further propose a third estimator, $\hat{\t}_{disc}$, which does
not require any nonparametric estimation, based on a discretization of the support
of $Z_{i}$ into $K$ (finite and fixed) partitions. Under
this discretization, the first-stage estimation 
simplifies to sample averages in each partition. Furthermore, the estimator can be computed as a standard 2SLS estimator with partition dummies as IVs. While the
discretization results in some loss of information and asymptotic efficiency,
there is no ``discretization bias'' in our setting and the number of partition cells $K$ is not required to grow large with the sample size. %This is because
%equation \eqref{eq:no_error} holds for any value of $Z_{i}$ and
%remains to hold after taking expectation of $Z_{i}$ over each partition.

We establish the $\sqrt{n}$-consistency of our three proposed estimators $\hat{\t}$, $\hat{\t}^{*}$, and $\hat{\t}_{disc}$ for $\t_{0}$, along with their asymptotic normality. We show that $\hat{\t}$ and $\hat{\t}^{*}$  share exactly the same asymptotic variance, while that of $\hat{\t}_{disc}$ is in general different and, when error are homoskedastic, larger under the partial order of positive semi-definiteness. 
%We also point out that, in the special case where the included instrument $Z_i$ has finite support (discrete), there is no efficiency loss in $\hat{\t}_{disc}$ and the asymptotic variances of all the three estimators coincide.

Monte Carlo simulations support our theoretical results, and demonstrate
the good finite-sample performance of the three estimators $\hat{\t}$, $\hat{\t}^{*}$, and $\hat{\t}_{disc}$ with the presence of violation of the exclusion restriction and endogeneity.
%\footnote{For the first-stage nonparametric regressions, we tried both Nadaraya-Watson kernel and cubic spline regressions using established computation packages.} %, as well as the corresponding variance estimators and confidence intervals.  
For comparison, we also implement the standard 2SLS estimator which treats the included regressor as excluded instrument, as well as the OLS estimator which does not account for endogeneity. 
The root mean squared error (RMSE) of the three estimators $\hat{\t}$, $\hat{\t}^{*}$, and $\hat{\t}_{disc}$ are reasonably
small, and the coverage probabilities of the 95\% confidence
intervals are very close to their nominal level, even with a relatively
modest sample size of $n=250$.
In contrast, the standard 2SLS estimator has much larger bias and standard errors when the exclusion restriction is violated.  As expected, the OLS estimator perform poorly in the presence of endogeneity. 
%We also find that, although $\hat{\t}$ involves one fewer nonparametric regression in the first stage, $\hat{\t}^{*}$ tends to perform slightly better than $\hat{\t}$ across many simulation setups we explored. The discretization-based estimator $\hat{\t}_{disc}$ performs surprisingly well in finite samples, especially when the sample size is small ($n=250$). The loss of asymptotic efficiency in $\hat{\t}_{disc}$ relative to $\hat{\t}$ and $\hat{\t}^*$ seems to be small and more than compensated by the reduction in finite-sample bias. Overall, all three estimation procedures exhibit robust performance, and they are also fairly straightforward and fast to compute.

Our approach is applied to study the returns to education and to test the direct effects of different instruments. Our first application, in line with \cite{card1993using}, studies two indicators of college proximity: the presence of a nearby 2-year college and a nearby  4-year college. Our findings show that after controlling for regional characteristics, the two college proximity indicators have no significant effects on the outcome. However, the 2SLS estimator varies substantially when using different instruments, while our estimators remain robust under various specifications. 
In the second application, we investigate two family background variables as potential instruments: parents' education and number of siblings. The results indicate that the number of siblings exerts no significant effect on wages, while parents' education significantly increases income. The estimated returns to education based on our three estimators appear to be smaller than those of 2SLS estimators, as our methods account for the direct effects of the two instruments. 

~

%The main contribution of our paper is to provide an econometric method to identify and estimate endogenous linear regression models when it is hard for researchers to find excluded IVs,in particular when it is difficult to justify the exclusion restrictions. We propose an alternative in such scenarios: empirical researchers may consider justifying whether the ``includedIVs'', i.e., the included exogenous variables $Z_{i}$, are relevant for $X_{i}$ in nonlinear manners. We provide a corresponding set of easy-to-use estimation and inference procedures, along with their theoretical justifications. Hence, we think our results can be widely useful to empirical researchers, given the general relevance of endogenous linear regression models in applied work.

The main contribution of our paper is to provide identification and estimation of endogenous linear and nonlinear regression models using only included exogenous regressors.\footnote{Our proposed method can also be applied to test exclusion restrictions. One could estimate the regression model using our method, treating all exogenous variables as included IVs. Then testing the exclusion restrictionna of a specific exogenous variable corresponds to testing whether its coefficient is zero. Compared to the classic overidentification testing, our method only requires one instrument to conduct the test.} Our approach offers an alternative solution for endogeneity when it is challenging to find excluded IVs. %\footnote{See, for example, \cite{kiviet2020testing},  \citet*{d2021testing}, and references therein, for discussions on the potential controversy of exclusion restrictions.} %or justify exclusion restrictions.\footnote{From a different perspective, our proposed method can also be applied to test exclusion restrictions. One could estimate the regression model using our method, treating all exogenous variables as included IVs. Then testing the exclusion restrictionna of a specific exogenous variable corresponds to testing whether its coefficient is zero.}
%Then the statistical significance of the estimated coefficient on a specific exogenous variable corresponds to a test of exclusion restriction, i.e., the coefficient on that variable being zero.}
In such scenarios, empirical researchers may consider using the included exogenous variables $Z_{i}$ as the ``included
IVs'' for $X_i$ (or the entire term containing $X_i$), and adopt the mean projection of $X_i$ on $Z_i$. 
We provide a corresponding set of easy-to-use semiparametric estimation and inference procedures, along with their theoretical properties. Hence, we believe our results could have broad applicability, given the general relevance of endogenous linear and nonlinear regression models in applied work.

Our paper is most closely related to the line of econometric literature
on the identification of endogenous regression models without
exclusion restrictions. See \citet{lewbel2019identification} for
a comprehensive survey of related work on this topic. In the standard
linear regression setting, \citet{rigobon2003identification}, \citet{klein2010estimating},
\citet{lewbel2012using}, and \citet{lewbel2018identification} utilize
heteroskedasticity of error terms, while \citet*{lewbel2023identification}
works with a specific decomposition of error and imposes independence
between them. Beyond the standard linear regression setting, \citet*{dong2010endogenous}
considers a binary response model with imposed independence assumption
among error terms, \citet{kolesar2015identification} studies a linear
regression with ``many IVs'' under an orthogonality condition between
the IVs' direct effects on the outcome variable and the effects on
the endogenous covariates, and \citet*{d2021testing} considers a linear
random coefficient model with exogeneity and independence assumptions
on the random coefficients. Another relevant paper is
\citet*{escanciano2016identification}, who studies a more general framework of semiparametric conditional moment models and provides high-level conditions for identification without exclusions. 
They adopt the control function approach
and impose the (full) conditional independence assumption of errors. Furthermore, they work with  the moment equation conditional on both the endogenous regressor and the included exogenous regressor. In contrast, our approach is based on the moment equation given only the included exogenous regressor under the mean independence assumption of this regressor.
% However, the two major lower-level models they considered, the double
% hurdle model and the binary choice model, are very different from
% our linear regression model. Furthermore, they impose (full) conditional independence
% of errors (Assumption 5), which is not required in our setting.\footnote{It should be pointed out that, the exact identification arguments
% (and the subsequent estimation procedures) in \citet*{escanciano2016identification},
% even if applied to the linear regression setting, are still different
% from ours. Essentially, they work with an equation expressed in terms
% of the conditional expectation of $Y_i$ given both $Z_{i}$ and $X_{i}$, with
% the endogeneity of $X_{i}$ captured by a control function. In contrast,
% our identification argument is only conditional on the exogenous covariate $Z_{i}$ and work with equation \eqref{eq:no_error}. In addition, while \cite*{escanciano2016identification} does provide a short
% discussion about the linear regression setting as an illustration
% (Page 564, Paragraph 3), they still adopt the control function approach
% and impose the (full) conditional independence assumption.} 
More differently, \citet*{honore2020selection,honore2022sample}
investigate partial identification of sample selection models without
exclusion.

Another relevant line of literature is on optimal IV and asymptotic
efficiency in the estimation of conditional moment restriction models: see,
for example, \citet*{amemiya1974nonlinear,amemiya1977maximum}, \citet*{chamberlin1987asymptotic}
and \citet*{newey1990efficient,NEWEY1993419}, \citet*{ai2003efficient},
and \citet*{newey2004efficient}. The main focus of this line of literature
is on asymptotic efficiency and typically assumes identification as
a starting point. Consequently, this literature does
not explicitly distinguish between included and excluded IVs or between linear and nonlinear revelance of IVs.
%whether an IV is included or excluded, and whether the conditional expectation of endogenous variables given exogenous variables are linear or nonlinear. 
In addition, many papers
in this literature, such as \citet*{donald2001choosing}, \citet*{hahn2002optimal}
and \citet*{stock2005asymptotic}, are more concerned with the scenario
where there are \emph{many IVs} (which are often implicitly excluded
IVs), while we focus on exactly the opposite scenario, where researchers
\emph{do not have any} excluded IVs. In addition, \citet*{escanciano2018simple} considers endogenous linear regressions  and proposes the ``integrated IV estimator'' as a simple
and robust alternative to the optimal IV approach. However, the focus
of \citet*{escanciano2018simple} is on robustness (especially with weak instruments), and similarly it does
not distinguish between included/excluded IVs or linear/nonlinear relevance of IVs.
%\footnote{The application and simulation designs in\citet*{escanciano2018simple} are all based on excluded IVs only.}

%In this literature, both the high-level assumption for identification\footnote{See, for example, Condition $C_{1} (iv)$ in \citet*{chamberlin1987asymptotic}: it is equivalent to our ``no-multicollinearity'' condition.} and the optimal IV formula\footnote{The optimal IV formula, if applied to our current setting, is given by $\pi_{0}(\tilde{Z}_{i})\s^{-2}(\tilde{Z}_{i})$, where ``$\tilde{Z}_{i}$'' contains both included and excluded exogenous observable variables and $\s^{2}(\tilde{Z}_{i}):=\E[\rest{\e_{i}^{2}}\tilde{Z}_{i}]$ is the conditional variance of the structural error.} involve the conditional expectation of endogenous variables given exogenous ones.

%In addition, our paper is closely related to \citet*{escanciano2018simple}, who also considers endogenous linear regressions under the mean independence assumption and proposes the ``integrated IV estimator'' as a simple and robust alternative to the optimal IV approach. However, the focus of \citet*{escanciano2018simple} is on robustness under the many-IV setting, and, like the papers on optimal IVs mentioned above, does not discriminate between included/excluded IVs and linear/nonlinear relevance of IVs.\footnote{The application and simulation designs in\citet*{escanciano2018simple} are all based on excluded IVs only.}

In the special case where $X_{i}$ is binary, there is also a connection
between our paper and the literature on heterogeneous
treatment effects. This literature, as exemplified by \citet{imbens1994identification},
\citet*{angrist1996identification}, and \citet{heckman2005structural},
studies endogenous selection and instrumental variables within the
potential outcome framework. See, e.g., \citet{imbens2014instrumental},
\citet*{imbens2015causal}, \citet{mogstad2018identification}, and
\citet{abadie2018econometric} for more comprehensive reviews. This
framework allows for nonparametrically heterogeneous treatment effects,
but usually imposes full conditional independence assumptions along
with monotone relevance conditions on the IVs. Under this framework,
the most closely related line of work is on the identification of
treatment effects without exclusion restrictions:
%\citet*{carneiro2011estimating} and \citet*{brinch2017beyond} impose a separable structure between observed and unobserved heterogeneity, along with stronger full independence assumptions, for point identificaion. 
\citet{manski2000monotone}, \citet{flores2013partial}, and \citet{mealli2013using} establish
partial identification without exclusion. Moreover, \citet{hirano2000assessing}
relaxes the exclusion condition by applying the Bayesian approach,
while \citet{wang2022} employs an additional instrument for identification.

The paper also relates to work on endogenous nonlinear and quantile regression models, such as \cite{newey2003instrumental}, \cite{chernozhukov2005iv}, \cite*{chernozhukov2007instrumental}, and \cite{chernozhukov2008instrumental}. The existing studies explore nonparametric identification with excluded instruments. In contrast, our paper focuses on parametric models and investigates identification using only included exogenous regressors. 

Our semiparametric two-stage estimation procedure with
nonparametric regression of the endogenous/outcome variables on the
included exogenous variables are also reminiscent of \citet{robinson1988root},
who considers a partially linear regression model without endogeneity. However, one of the key steps in \citet{robinson1988root}
is to transform the regression equation into a ``differenced
form'' that is free of the unknown nonparametric function in the
original equation. In contrast, the identification arguments in our
linear regression setup does not involve the ``differenced form''
equation. A recent paper by \citet*{antoine2022partially} studies
the partially linear model with endogenous covariates.
They again work with the differenced form in the style of \citet{robinson1988root},
and then rely on excluded IVs for identification.

Lastly, our discretization-based estimator bears some resemblance
to the inferential methods for conditional moment inequalities, as studied in \citet{khan2009inference} and \citet{andrews2013},
for example. 

The rest of the paper is organized as follows. Section \ref{sec:ID}
introduces the identification of linear regression models, along with
further discussions about the identification condition. Section \ref{sec:disc} discusses the comparison of our approach with existing methods in the literature. Section
\ref{sec:Estimation} derives asymptotic distributions of our three proposed estimators and provides corresponding variance estimators. 
Section \ref{sec:nonlinear} explores the identification of nonlinear and quantile regressions. Section \ref{sec:Simulation} presents simulation results
about the finite-sample performances of our estimators. Section \ref{sec:Application} studies the returns to education and examines the direct effects of various instruments. We conclude
with Section \ref{sec:Conclusion}.

\section{\label{sec:ID}Endogenous Linear Regression without Exclusion }

\subsection{\label{subsec:main_ID}Model and Identification}

Consider the following linear regression model with endogeneity:
\begin{equation}
Y_{i}=\a_{0}+Z_{i}^{'}\b_{0}+X_{i}^{'}\g_{0}+\e_{i},\label{eq:Model}
\end{equation}
where $X_{i}$ is a $d_{x}$-dimensional endogenous regressor that
can be dependent with $\e_{i}$, while $Z_{i}$ is a $d_{z}$-dimensional
included exogenous regressor satisfying the following mean independence, or strict exogeneity, assumption:

\begin{assumption}[Mean Independence]
\label{assu:exog}$\E\left[\rest{\e_{i}}Z_{i}=z\right]=0\label{eq:ExoZ}
$ for any $z\in{\cal Z}:=\text{Supp}\left(Z_{i}\right)$. 
\end{assumption}
Writing 
$\t_{0}:=\left(\a_{0},\b_{0}^{'},\g_{0}^{'}\right)^{'}\in\R^{d:=1+d_{x}+d_{z}}$,
we are interested in identifying and estimating $\t_{0}$. Section \ref{sec:nonlinear} explores the extension of endogenous nonlinear and quantile regressions.

%To identify $\t_0$, the standard IV or 2SLS approach employs an excluded exogenous instrument $Z_{i, exc}$ and applies a linear projection of $X_i$ on all exogenous regressors $(Z_i, Z_{i, exc})$.
%In contrast, we study identification of $\t_0$ using only included regressors $Z_i$ and adopts a mean projection of $X_i$ on $Z_i$.

Assumption \ref{assu:exog} on $Z_i$ leads to the following conditional moment restriction:  
\begin{equation} \E\left[\rest{Y_{i}-\a_{0}-Z_{i}^{'}\b_{0}-X_{i}^{'}\g_{0} }Z_{i}=z\right]=0, \label{eq:E_YZ} 
\end{equation}
which characterizes the identified set for $\t_0$. We show that the above restriction can point identify $\t_0$ under the no multicollinearity condition.

Define
$\pi_{0}\left(z\right):=\E\left[\rest{X_{i}}Z_{i}=z\right]$.
%h_{0}\left(z\right):=\E\left[\rest{Y_{i}}Z_{i}=z\right],\quad.\label{eq:h0_pi0}
By employing the mean projection of $X_i$ on $Z_i$, we can rewrite \eqref{eq:E_YZ} by replacing the endogenous regressor $X_i$ with $\pi_0\left(Z_i\right) $:
\[
 \E\left[\rest{Y_{i}-\a_{0}-Z_{i}^{'}\b_{0}-\pi_0\left(Z_i\right)^{'}\g_{0} }Z_{i}=z\right]=0.\label{eq:RegZwNoError}
\]
When treating $\pi_0\left(Z_i\right)$ as a regressor, the above condition transforms into the moment restriction of a standard linear regression, which regresses $Y_i$ on 1, $Z_i$, $\pi_{0}\left(Z_{i}\right)$. After applying the mean projection, there is no endogeneity  since $Z_i$ satisfies the strict exogeneity condition. 

Let $W_{i}:=\left(1,Z_{i}^{'},\pi_{0}\left(Z_{i}\right)^{'}\right)'$.
We then apply the usual identification strategy by premultiplying
both sides of the above equation by $W_{i}$ and then taking unconditional
expectations:
\[
\E\left[W_{i}Y_i\right]=\E\left[W_{i}W_{i}^{'}\right]\t_{0}.\label{eq:IDeq_theta0}
\]
%$h_{0}\left(z\right)=\E\left[\rest{Y_{i}}Z_{i}=z\right]$ and
Since 
$\pi_{0}\left(z\right)$ is
nonparametrically identified from data, the terms $W_{i}, \E\left[W_{i}W_{i}^{'}\right]$, and $\E\left[W_{i}Y_i\right]$ are also identified. It is then clear that $\t_{0}$ is identified whenever $\E\left[W_{i}W_{i}^{'}\right]$
is invertible, which boils down to the familiar requirement of no
multicollinearity condition:
\begin{assumption}[No Multicollinearity]
\label{assu:FullRankW} $\left(1,Z_{i}^{'},\pi_{0}\left(Z_{i}\right)^{'}\right)$
are not (perfectly) multicollinear. Or equivalently, $\E\left[W_{i}W_{i}^{'}\right]$ has full rank.
\end{assumption}
The discussion regarding Assumption \ref{assu:FullRankW} is presented in Section \ref{sec:NLRelevance}. 
Under this assumption, $\t_0$ is identified as the standard OLS formula with $W_i$ as the regressor:
\[ \t_{0} =\left(\E\left[W_{i}W_{i}^{'}\right]\right)^{-1}\E\left[W_{i}Y_{i}\right]. \]

Since $W_{i}$ is a deterministic function of $Z_{i}$, we can also project $Y_i$ on $Z_i$ and obtain an alternative expression for $\t_0$. Defining $ h_0\left(Z_i \right):=\E\left[\rest{Y_i} Z_i\right]$, then $\t_0$ can be expressed as
\[\t_{0}=\left(\E\left[W_{i}W_{i}^{'}\right]\right)^{-1}\E\left[W_{i}h_{0}\left(Z_{i}\right)\right],
\]
which follows from the Law of Iterated Expectations. We conduct this additional projection because, through simulation, we find that the estimator based on this formula can exhibit slightly better performance under some specifications.

\begin{thm}[Identification with Included IV]
\label{thm:IDGen} Under Assumptions \ref{assu:exog} and \ref{assu:FullRankW},
\begin{align}
\t_{0} &=\left(\E\left[W_{i}W_{i}^{'}\right]\right)^{-1}\E\left[W_{i}Y_{i}\right] =\left(\E\left[W_{i}W_{i}^{'}\right]\right)^{-1}\E\left[W_{i}h_{0}\left(Z_{i}\right)\right].\label{eq:ID_theta0}
\end{align}
\end{thm}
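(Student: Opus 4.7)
The plan is to verify Theorem \ref{thm:IDGen} by combining the conditional moment restriction implied by Assumption \ref{assu:exog} with the no-multicollinearity condition in Assumption \ref{assu:FullRankW}; the argument is essentially a linear-algebra identification chain, with all the substance already carried by the two assumptions.

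First, I would take conditional expectations of both sides of the regression equation \eqref{eq:Model} given $Z_i$. By Assumption \ref{assu:exog}, $\E[\rest{\e_i}Z_i]=0$, and by the definition of $\pi_0$ we have $\E[\rest{X_i}Z_i]=\pi_0(Z_i)$. This yields
\[
h_0(Z_i) \;=\; \a_0 + Z_i'\b_0 + \pi_0(Z_i)'\g_0 \;=\; W_i'\t_0,
\]
so that the conditional mean of $Y_i$ given $Z_i$ is an exactly linear function of $W_i$ with coefficient vector $\t_0$. This is the central observation: the endogeneity issue is completely absorbed by replacing $X_i$ with $\pi_0(Z_i)$.

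Second, I would premultiply both sides by $W_i$ and take unconditional expectations, obtaining
\[
\E[W_i h_0(Z_i)] \;=\; \E[W_i W_i']\,\t_0.
\]
Assumption \ref{assu:FullRankW} ensures that $\E[W_i W_i']$ is invertible, which immediately yields the representation $\t_0 = (\E[W_i W_i'])^{-1}\E[W_i h_0(Z_i)]$. For the first representation in \eqref{eq:ID_theta0}, I would apply the law of iterated expectations: since $W_i$ is a deterministic function of $Z_i$,
\[
\E[W_i Y_i] \;=\; \E\!\left[W_i\,\E[\rest{Y_i}Z_i]\right] \;=\; \E[W_i h_0(Z_i)],
\]
which together with the previous display delivers both equalities in the theorem.

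I do not anticipate a genuine obstacle: the derivation is mechanical once Assumptions \ref{assu:exog} and \ref{assu:FullRankW} are in place. The only technical checkpoints are the implicit finite-moment conditions needed for $\E[W_i W_i']$ and $\E[W_i Y_i]$ to be well-defined (which one would carry as regularity conditions) and the fact that $\pi_0$ and $h_0$ are themselves identified nonparametrically from the joint distribution of $(Y_i, X_i, Z_i)$, so that the right-hand sides of \eqref{eq:ID_theta0} are population objects depending only on observables.
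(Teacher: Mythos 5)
Your proposal is correct and follows essentially the same route as the paper: condition on $Z_i$ using Assumption \ref{assu:exog} to replace $X_i$ by $\pi_0(Z_i)$, premultiply by $W_i$, take unconditional expectations, invert $\E[W_iW_i']$ under Assumption \ref{assu:FullRankW}, and link the two representations via the law of iterated expectations since $W_i$ is a deterministic function of $Z_i$. The only (immaterial) difference is that you derive the $h_0$ representation first and recover the $\E[W_iY_i]$ version from it, whereas the paper goes in the opposite order.
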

%Theorem \ref{thm:IDGen} is admittedly very simple: the first equality follows immediately from \eqref{eq:IDeq_theta0}, and the second equality follows from the Law of Iterated Expectations and the fact that $W_{i}$ is a deterministic function of $Z_{i}$.

Theorem \ref{thm:IDGen} suggests two natural semiparametric
two-step estimators for $\t_{0}$. Specifically, given
first-stage nonparametric estimators $\hat{\pi}$
for $\pi_{0}$ and $\hat{h}$ for $h_{0}$, the second-stage plug-in estimators for $\t_{0}$ is given
by, with $\hat{W}_{i}:=\left(1,Z_{i}^{'},\hat{\pi}\left(Z_{i}\right)^{'}\right)'$,
\[\label{eq:theta_hat}
\begin{aligned}
\hat{\t} & :=\left(\frac{1}{n}\sum_{i=1}^{n}\hat{W}_{i}\hat{W}_{i}^{'}\right)^{-1}\frac{1}{n}\sum_{i=1}^{n}\hat{W}_{i}Y_{i},\\
\hat{\t}^{*} & :=\left(\frac{1}{n}\sum_{i=1}^{n}\hat{W}_{i}\hat{W}_{i}^{'}\right)^{-1}\frac{1}{n}\sum_{i=1}^{n}\hat{W}_{i}\hat{h}\left(Z_{i}\right).
\end{aligned}
\]

As shown in Section \ref{sec:Estimation}, both $\hat{\t}$ and $\hat{\t}^*$  are $\sqrt{n}$-consistent and asymptotically normal. Furthermore, they share the same asymptotic variance, and are thus asymptotically equally efficient. In the meanwhile, $\hat{\t}$ does not need nonparametric estimation of $h_{0}$, and is thus simpler and faster to compute than $\hat{\t}^*$. However, we do find that $\hat{\t}^*$ can have better finite-sample performance under certain simulation setups. Hence, we keep the estimator $\hat{\t}^*$ in our paper and provide results for it along with $\hat{\t}$.

In Section \ref{subsec:DiscTSLS}, we propose a third estimator $\hat{\t}_{disc}$ based on a discretization of the support of $Z_i$, which does not require any nonparametric regressions in the first stage. However, $\hat{\t}_{disc}$ is not directly based on the sample analog of \eqref{eq:ID_theta0}. Hence, we defer $\hat{\t}_{disc}$ to Section \ref{subsec:DiscTSLS}.

\subsection{An Alternative Perspective}

In Section \ref{subsec:main_ID}, we establish point identification of $\t_0$ from the perspective of the standard linear regression models, which naturally leads to the familiar ``no multicollinearity" or ``full rank" condition in Assumption \ref{assu:FullRankW}. A slightly different perspective is to exploit the fact that the conditional moment equation \eqref{eq:E_YZ} is a system of \emph{deterministic} linear equations in $\theta$ across all $z\in{\cal Z}$. Therefore $\t_0$ is uniquely determined if the following condition holds:

\begin{condition}[Full-Dimensional Support]
\label{cond:rd_supp}There exist $d=1+d_{x}+d_{z}$ distinct points
$z_{1},...,z_{d}\in{\cal Z}$ such that
\[
\text{rank}\left(\begin{array}{ccc}
1 & z_{1}^{'} & \pi_{0}\left(z_{1}\right)^{'}\\
1 & z_{2}^{'} & \pi_{0}\left(z_{2}\right)^{'}\\
\vdots & \vdots & \vdots\\
1 & z_{d}^{'} & \pi_{0}\left(z_{d}\right)^{'}
\end{array}\right)=d.
\]
\end{condition}

It turns out that Condition \ref{cond:rd_supp} is equivalent to Assumption \ref{assu:FullRankW}, which is also intuitively so under linearity. Hence, the two perspectives for identification are equivalent.

\begin{lem}
\label{lem:d_values}Assumption \ref{assu:FullRankW} $\iff$ Condition
\ref{cond:rd_supp}.
\end{lem}
Condition \ref{cond:rd_supp} provides an alternative perspective for identification from the support of the included instrument $Z_i$, under the feature that $\left(1, Z_{i}^{'}, \pi_{0}\left(Z_{i}\right)^{'}\right)$ is a deterministic function of $Z_{i}$.
We see that even though the dimension $d_{z}$ of the included instrument $Z_{i}$ is by
construction smaller than the number of parameters $d$ (e.g., a scalar $Z_i$),
it is still possible for us to find $d$ linearly independent \emph{realizations}
of $\left(1, Z_{i}^{'}, \pi_{0}\left(Z_{i}\right)^{'}\right)$ on the
support of $Z_{i}$, which will guarantee the required ``no multicollinearity''
assumption.

\begin{comment}
While the sufficiency of Condition \ref{cond:rd_supp} for Assumption
\ref{assu:FullRankW} is relatively trivial, the necessity of it in our current setup is
more tightly based on the fact that $\left(1,z^{'},\pi_{0}\left(z\right)^{'}\right)$
is a deterministic function of $z$. Hence,
even though the dimension $d_{z}$ of included IVs $Z_{i}$ is by
construction smaller than the number of parameters $d$,
it is still possible for us to find $d$ linearly independent \emph{realizations}
of $\left(1,Z_{i}^{'},\pi_{0}\left(Z_{i}\right)^{'}\right)$ on the
support of $Z_{i}$.
\end{comment}

This perspective also motivates our third estimator $\hat{\t}_{disc}$ by transforming the conditional moment equation into the following unconditional moment equation:
\[\E\left[\left(Y_i-\a_{0}-Z_i^{'}\b_{0}-X_i'\g_{0} \right)\ind\{Z_i\in {\cal Z}_k\} \right]=0,
 \]
where $({\cal Z}_k)_{k=1}^K$ is a finite partition of the support of $Z_i$ with $K\geq d$. The idea of transforming conditional moments into unconditional ones using instrumental functions such as indicator functions, has been well studied and applied in the literature: e.g., \cite{khan2009}, \cite{andrews2013}, and \cite*{shi2018estimating}. See Section \ref{subsec:DiscTSLS} for more details about the discretization-based estimator $\hat{\t}_{disc}$.

\subsection{\label{sec:NLRelevance}Discussion about Assumption \ref{assu:FullRankW}}

Since Assumption \ref{assu:FullRankW} is the foundation for
the identification of $\t_{0}$, we now provide some necessary and/or
sufficient conditions for it, along with some more detailed discussions on its relationship to nonlinearity, relevance, and order condition:

\begin{comment}
We first provide the following equivalent condition for Assumption \ref{assu:FullRankW}.
\begin{condition}[Full-Dimensional Support]
\label{cond:rd_supp}There exist $d=1+d_{x}+d_{z}$ distinct points
$z_{1},...,z_{d}\in{\cal Z}$ such that
\[
\text{rank}\left(\begin{array}{ccc}
1 & z_{1}^{'} & \pi_{0}\left(z_{1}\right)^{'}\\
1 & z_{2}^{'} & \pi_{0}\left(z_{2}\right)^{'}\\
\vdots & \vdots & \vdots\\
1 & z_{d}^{'} & \pi_{0}\left(z_{d}\right)^{'}
\end{array}\right)=d.
\]
\end{condition}
\begin{lem}
\label{lem:d_values}Assumption \ref{assu:FullRankW} $\iff$ Condition
\ref{cond:rd_supp}.
\end{lem}
While the sufficiency of Condition \ref{cond:rd_supp} for Assumption
\ref{assu:FullRankW} is relatively trivial, the necessity of it in our current setup is
more tightly based on the special feature that $\left(1,Z_{i}^{'},\pi_{0}\left(Z_{i}\right)^{'}\right)$
is a deterministic function of $Z_{i}$. Hence,
even though the dimension $d_{z}$ of included IVs $Z_{i}$ is by
construction smaller than the number of parameters $d=1$,
it is still possible for us to find $d$ linearly independent \emph{realizations}
of $\left(1,Z_{i}^{'},\pi_{0}\left(Z_{i}\right)^{'}\right)$ on the
support of $Z_{i}$, which will guarantee the required ``no multicollinearity''
assumption.
\end{comment}

\begin{condition}[No Multicollinearity in $Z_{i}$]
\label{cond:NoMulColZ}$\left(1,Z_{i}^{'}\right)$ are not multicollinear.
\end{condition}
\begin{condition}[Nonlinearity]
\label{cond:NLpi}$\pi_{0,k}\left(z\right):=\E\left[\rest{X_{i,k}}Z=z\right]$
is nonlinear in $z$ on ${\cal Z}$, for each component $k=1,...,d_{x}$.
\end{condition}
%

%\begin{comment}
\begin{condition}[Relevance]
\label{cond:Relevance}$\pi_{0,k}\left(z\right):=\E\left[\rest{X_{i,k}}Z=z\right]$
is not constant in $z$ on ${\cal Z}$, for each component $k=1,...,d_{x}$.
\end{condition}

%\end{comment}
%
\begin{condition}[Order Condition on ${\cal Z}$]
\label{cond:Cardinality} The support of $Z_{i}$ must contain $d$
distinct points, i.e., $\#\left({\cal Z}\right)\geq d = 1+d_{x}+d_{z}$. 
\end{condition}

Clearly, all of the above are necessary conditions for Assumption \ref{assu:FullRankW}:
\begin{lem}
\label{lem:A1impNLpi}(a) Assumption \ref{assu:FullRankW} implies Conditions \ref{cond:NoMulColZ} and \ref{cond:NLpi}; (b) Condition \ref{cond:NLpi} implies Conditions \ref{cond:Relevance} and \ref{cond:Cardinality}.
\end{lem}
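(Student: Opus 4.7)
Both parts are proven by contraposition: whenever a target condition fails, I will exhibit a nonzero vector $v \in \R^{d}$ such that $v' W_i = 0$ almost surely (where $W_i = (1, Z_i', \pi_0(Z_i)')'$), which contradicts the full rank of $\E[W_i W_i']$ in part (a) and drives the analogous contradiction in part (b).

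For part (a), the implication Assumption \ref{assu:FullRankW} $\Rightarrow$ Condition \ref{cond:NoMulColZ} is by padding: if there exist $(c_0, c_1) \in \R^{1+d_z}$, not both zero, with $c_0 + c_1' Z_i = 0$ a.s., then $v = (c_0, c_1', 0_{d_x}')' \in \R^d$ is nonzero and satisfies $v' W_i = 0$ a.s., so $v' \E[W_i W_i'] v = 0$, contradicting the full rank condition. For Assumption \ref{assu:FullRankW} $\Rightarrow$ Condition \ref{cond:NLpi}, suppose some component $\pi_{0,k}$ equals $a_k + b_k' z$ on $\mathcal{Z}$; then $v = (a_k, b_k', -e_k')' \neq 0$ (with $e_k$ the $k$-th standard basis vector of $\R^{d_x}$) yields $v' W_i = a_k + b_k' Z_i - \pi_{0,k}(Z_i) = 0$ a.s., again contradicting the full rank.

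For part (b), the implication to Condition \ref{cond:Relevance} is immediate, since any constant function $\pi_{0,k}(z) \equiv c_k$ on $\mathcal{Z}$ is the affine function with zero slope, which is ruled out by Condition \ref{cond:NLpi}. For the order Condition \ref{cond:Cardinality}, I argue by contraposition: if $\#\mathcal{Z} < d = 1 + d_x + d_z$, then the $\#\mathcal{Z} \times d$ matrix whose rows are $(1, z', \pi_0(z)')$ for $z \in \mathcal{Z}$ has more columns than rows and hence linearly dependent columns; a bookkeeping argument based on the fact that the space of affine functions on $\R^{d_z}$ has dimension $d_z + 1$, combined across the $d_x$ components of $\pi_0$, then forces at least one $\pi_{0,k}$ to be affine on $\mathcal{Z}$, violating Condition \ref{cond:NLpi}.

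The main obstacle is the order-condition step in part (b). Nonlinearity of a single scalar function $\pi_{0,k}$ by itself only guarantees $\#\mathcal{Z} > d_z + 1$, so obtaining the sharper lower bound $\#\mathcal{Z} \geq d$ requires coupling the nonlinearity requirement across all $d_x$ coordinates of $\pi_0$ with the dimension $d_z + 1$ of the affine ``fit'' space. The remaining three implications are essentially one-line contrapositive arguments that mechanically produce a null vector for $W_i$.
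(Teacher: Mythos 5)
Your three contrapositive arguments---padding a null vector of $(1,Z_i')$ with zeros, taking $v=(a_k,b_k',-e_k')'$ when some $\pi_{0,k}$ is affine, and noting that constant functions are affine with zero slope---are correct, and since the paper asserts this lemma without a written proof (``Clearly, all of the above are necessary conditions\ldots''), these are surely the intended arguments for those three implications.

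The gap is the step Condition \ref{cond:NLpi} $\Rightarrow$ Condition \ref{cond:Cardinality}, and it is not a missing bookkeeping detail: no such argument can be completed, because the implication fails as literally stated beyond the scalar case. Take $d_z=1$, $d_x=2$, ${\cal Z}=\{0,1,2\}$, with $\pi_{0,1}(z)=z^{2}$ and $\pi_{0,2}(z)=z^{3}$ on ${\cal Z}$: neither component is affine on these three support points, so Condition \ref{cond:NLpi} holds, yet $\#\left({\cal Z}\right)=3<4=d$. The reason your ``coupling across components'' idea cannot raise the bound is that Condition \ref{cond:NLpi} constrains each $\pi_{0,k}$ only individually---it does not prevent the components from coinciding---so a single non-affine profile can be recycled across all $d_x$ coordinates. (Even your intermediate claim that nonlinearity of one component forces $\#\left({\cal Z}\right)>d_z+1$ requires the support points to be affinely independent; three collinear points in $\R^{2}$ with values $0,1,0$ already defeat it when $d_z=2$.) What is true, and what the surrounding discussion in the paper is really after, is that Condition \ref{cond:Cardinality} is necessary for Assumption \ref{assu:FullRankW} itself: $W_i$ is a deterministic function of $Z_i$, so if $\#\left({\cal Z}\right)=K<d$ then $\E\left[W_iW_i'\right]$ is a probability-weighted sum of $K$ rank-one matrices and has rank at most $K<d$. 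You should either route the cardinality condition through Assumption \ref{assu:FullRankW} directly (a one-line argument that preserves the lemma's intent that all four conditions are necessary for identification), or restrict the chain through Condition \ref{cond:NLpi} to the case $d_x=d_z=1$, where $d=3$ and every function on two points of $\R$ is affine, so the implication genuinely holds.
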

%Given Condition \ref{cond:NLpi}, we see that we need $Z_{i}$ to be relevant for $X_{i}$ in a nonlinear manner. The relevance condition should be obvious: variations in $Z_{i}$ should lead to variations in the conditional distribution of $X_{i}$.
The no multicollinearity condition and the relevance condition are standard for linear regression models. Condition \ref{cond:NLpi} requires $Z_{i}$ to
be relevant for $X_{i}$ in a nonlinear manner. 
This requirement of nonlinearity marks the departure of our approach from the standard IV approach which utilizes a linear projection of $X_i$ on $Z_i$. 
%In fact, our method will not be applicable if $\pi_{0}\left(z\right)$ is linear in $z$, so it does not strictly nest the standard IV approach as a special case, while obviously neither does the standard IV approach nest ours.

The requirement of nonlinearity also imposes a restriction on the cardinality
of the support of $Z_{i}$ as in Condition \ref{cond:Cardinality}. This is because it is always possible to fit a straight line between any two distinct points, and more generally, to fit a linear $d$-dimensional hyperplane across any $d$ distinct points in $\R^d$. Hence, our order condition is on the cardinality of the support of $Z_i$, rather than the number of variables. Of course, if ${\cal Z}$ is a continuum, then
the order condition is automatically satisfied.

~

When there is only one endogenous variable, then the converse of Lemma
\ref{lem:A1impNLpi}(a) is also true, effectively establishing the
sufficiency of nonlinearity for point identification.
\begin{lem}[Sufficient Condition with Scalar $X_{i}$]
\label{lem:SuffScalarX}Suppose that $X_{i}$ is scalar-valued, i.e.
$d_{x}=1$. Then, Conditions \ref{cond:NoMulColZ} and \ref{cond:NLpi}
$\imp$ Assumption \ref{assu:FullRankW}.
\end{lem}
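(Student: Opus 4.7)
The plan is to argue by contrapositive: assume that Assumption \ref{assu:FullRankW} fails and deduce that at least one of Conditions \ref{cond:NoMulColZ} and \ref{cond:NLpi} must fail. Singularity of $\E\left[W_{i}W_{i}'\right]$ with $W_{i}=\left(1, Z_{i}', \pi_{0}(Z_{i})\right)'$ is equivalent (via $\lambda'\E\left[W_{i}W_{i}'\right]\lambda = \E\left[(\lambda'W_{i})^{2}\right]$) to the existence of a nonzero vector $(a,\, b',\, c)' \in \R \times \R^{d_{z}} \times \R$ such that $a + Z_{i}'b + c\,\pi_{0}(Z_{i}) = 0$ almost surely, which in turn is equivalent to $a + z'b + c\,\pi_{0}(z) = 0$ for every $z \in {\cal Z}$. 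Note that the last coefficient $c$ is scalar precisely because $d_{x}=1$.

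With this single linear relation in hand, the argument splits on whether $c = 0$ or $c \neq 0$. If $c = 0$, then $(a,b) \neq 0$ and $a + Z_{i}'b = 0$ almost surely, directly contradicting Condition \ref{cond:NoMulColZ}. If $c \neq 0$, dividing through by $c$ yields $\pi_{0}(z) = -a/c - (b/c)'z$ on ${\cal Z}$, so $\pi_{0}$ is affine; since $d_{x}=1$, the unique component $\pi_{0,1} = \pi_{0}$ is therefore not nonlinear, contradicting Condition \ref{cond:NLpi}. Either case delivers the desired contradiction.

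The argument is essentially a short piece of linear algebra, so rather than a serious technical obstacle, the main point worth flagging is the decisive role of scalarity of $X_{i}$. When $d_{x}>1$, the coefficient $c$ becomes a vector in $\R^{d_{x}}$, and the case $c \neq 0$ produces only $a + z'b + c'\pi_{0}(z) = 0$ on ${\cal Z}$, i.e., a single affine constraint on the scalar combination $c'\pi_{0}$ of the components of $\pi_{0}$. This need not contradict the componentwise nonlinearity required by Condition \ref{cond:NLpi}, which explains why the sufficiency direction in this lemma is restricted to scalar $X_{i}$ and why a strictly stronger sufficient condition would be needed in the vector case.
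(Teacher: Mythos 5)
Your proof is correct: the paper states Lemma \ref{lem:SuffScalarX} without supplying an explicit proof, and your contrapositive argument---extracting a nonzero $\lambda=(a,b',c)'$ with $\E[(\lambda'W_i)^2]=0$ and splitting on whether the scalar coefficient $c$ on $\pi_0(Z_i)$ vanishes---is precisely the linear-algebra argument the paper implicitly relies on. Your closing remark on why the case $d_x>1$ fails also matches the paper's own discussion following the lemma about needing linear independence of the components $\pi_{0,k}$ as functions.
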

Lemma \ref{lem:SuffScalarX} is particularly relevant when we are primarily worried about the endogeneity
of a single treatment status variable $X_{i}$, which is often
a discrete random variable. Then, if there exists some exogenous
shifter $Z_{i}$ that is relevant for $X_{i}$,  $\pi_{0}\left(z\right)$
is naturally nonlinear given the discreteness of $X_{i}$. %Note that Condition \ref{cond:Cardinality} is implied by Condition \ref{cond:NLpi} when $d_x=1$ so it is not needed in this lemma.

\begin{example}[Linear Treatment Effect Model with Selection]
\label{exa:TE} Consider 
\begin{align*}
Y_{i} & =\a_{0}+Z_{i}^{'}\b_{0}+X_{i}\g_{0}+\e_{i},\\
X_{i} & =\ind\left\{ \varphi_{0}\left(Z_{i}\right)\geq u_{i}\right\}, 
\end{align*}
with $\E\left[\rest{\e_{i}}Z_{i}\right]=0$, $u_{i}\indep Z_{i}$, and
$u_{i}\sim F_{u}$. Then, the propensity
score function $\pi_{0}\left(z\right):=\E\left[\rest{X_{i}}Z_{i}=z\right]
$ is naturally nonlinear in $z$ when $\#(Z_i) \geq 3$, i.e., the support of $Z_i$ contains at least three points. 
%(unless when the seldom used linear probability model $\pi_{0}\left(Z_{i}\right)=\eta_{00}+Z_{i}^{'}\eta_{0z}$ is applied)
As discussed in the introduction, the order condition $\#(Z_i) \geq 3$ can be satisfied even if $Z_i$ just consists of two dummy variables. Hence,
Condition \ref{cond:NLpi} can be thought as a  mild condition in this setting.
\end{example}
Lastly, we note that, when $d_{x}>1$, we not only need each $\pi_{0,k}$
to be nonlinear in $z$, but also need each $\pi_{0,k}$ to be linearly
independent (as a function) from 1, $z$, and all other $\left(\pi_{0,j}\right)_{j\neq k}$
as well. We consider this condition relatively mild and easy to verify. Heuristically, whenever ${\cal Z}$
is a continuum, the space
of functions on ${\cal Z}$ (under some regularity conditions) can be often viewed as an infinite-dimensional Hilbert space that admits a linear series representation under a certain orthonormal basis of functions
$\left(b_{k}\left(\cd\right)\right)_{k=1}^{\infty}$ on ${\cal Z}$:
\[
{\cal F}=\left\{ \sum_{k=1}^{\infty}c_{k}b_{k}\left(\cd\right):\sum_{k=1}^{\infty}c_{k}^{2}<\infty\right\} .
\]
Hence, linear independence among a finite number $\left(d=1+d_{x}+d_{z}\right)$
of ``generic" functions from $\cal{F}$ seems heuristically as a ``generic property".

\section{Discussion}
\label{sec:disc}

\subsection{Comparison: IV regression with Excluded Instrument}

The canonical IV approach utilizes \textit{a linear projection} of the endogenous regressor on the exogenous regressors. This approach requires the presence of an excluded instrument, as otherwise all regressors will exhibit perfect multicollinearity. In principle, this method can achieve nonparametric identification under the completeness condition and is robust to the misspecification of function forms.\footnote{In practice, however, nonparametric IV estimation is not commonly used, partly due to the computational difficulties and inference complexities.}

In contrast, our approach exploits \textit{a mean projection} of the endogenous regressor on the included exogenous regressor, which allows us to extract more information for identification through the nonlinear dependence between the exogenous variable and the endogenous variable. Our approach relies on a parametric (e.g., linear) assumption on the functional form, but it enables identification without exclusion restrictions. We believe our method could be a viable alternative to the standard IV approach in situations where there is a natural parametric specification, and where finding excluded IVs is challenging.

\subsection{Comparison: Heckman Correction Approach}

The conventional Heckman correction approach can also achieve identification without exclusion restrictions, under distributional assumptions or parametric functions. This approach typically focuses on a binary endogenous regressor and examines the following specification:
\[\begin{aligned}
Y_{i} & =\a_{0}+Z_{i}^{'}\b_{0}+X_{i}\g_{0}+\e_{i},\\
X_{i} & =\ind\left\{ Z_{i}'\eta_0 \geq u_{i}\right\}, \\
(\e_i, u_i)' & \sim \mathcal{N}\left((0, 0)', (1, \rho_0; \rho_0, 1) \right).
\end{aligned}
\]

Under the joint distribution of the two error terms $(\e_i, u_i)$, it yields the following conditional moment restriction:
\[ \E[\rest{Y_i}{X_i, Z_i}]=\a_0+Z_i'\b_0+X_i'\g_0+\rho_0 \frac{\phi(Z_{i}'\eta_0)}{\Phi(Z_{i}'\eta_0)}, \]
which can identify $(\t_0, \rho_0)$ without exclusion restrictions.
The Heckman correction approach can be extended to nonbinary and multi-dimensional endogenous regressor $X_i$. We can still look at the conditional expectation of $Y_i$ given all regressors $(X_i, Z_i)$:
\[ \E[\rest{Y_i}{X_i, Z_i}]=\a_0+Z_i'\b_0+X_i'\g_0+\E[\rest{\e_i}{X_i, Z_i}].
\]
If a parametric form on the selection bias term $\E[\rest{\e_i}{X_i, Z_i}]$ is imposed as follows:
\[\E[\rest{\e_i}{X_i, Z_i}]=s(X_i, Z_i, \eta_0),
\]
and this function $s$ is nonlinear in $(X_i, Z_i)$, then the coefficient $\t_0$ is identified.

The Heckman correction approach exploits the mean projection of the error $\e_i$ on all regressors $(X_i, Z_i)$. To achieve identification, this approach requires a parametric form (or parametric distributions of errors) for $s$ as well as the nonlinearity of $s$. However, since the function $s$ involves the unobserved error term $\e_i$, its nonlinearity cannot be directly tested.

In contrast, our approach applies the mean projection of the endogenous $X_i$ on the exogenous $Z_i$. The identification relies on the nonlinearity of the function $\pi_0(Z_i)=\E[\rest{X_i}{Z_i}]$, but does not require further functional form assumption on $\pi_0$. Moreover, the function $\pi_0$ only depends on observed variables $(X_i, Z_i)$, making its nonlinearity a testable condition.

\subsection{Relationship to the Instrumental Function Approach}

%We have established identification of $\t_0$ from the viewpoint of treating equation \eqref{eq:E_YZ} as the standard linear regression model, where the no-multicollinearity condition automatically translates into a nonlinearity requirement on $\pi_0$. We now provide a discussion about how our identification conditions relate to the alternative approach of using a (known) nonlinear function $g(Z_i)$ as an instrument for the endogenous regressor $X_i$. 
We establish identification of $\t_0$ from the viewpoint of a standard linear regression model, while treating $\pi_0(Z_i)$ as an exogenous regressor. Point identification is then obtained under the no-multicollinearity condition, which translates into a nonlinearity requirement on $\pi_0$. Another approach to address endogeneity is to use a (known) nonlinear function $g(Z_i)$ as an instrument for the endogenous regressor $X_i$. In this section, we will discuss the connections between our identification approach and this alternative approach.

To illustrate, consider the case where both $Z_i$ and $X_i$ are scalar variables. Using $(1, Z_i, g(Z_i))$ as IVs, we can obtain the following moment restrictions for $\t_0$:
\[ \E\left[(Y_i-\a_0-\b_0Z_i-\g_0X_i)\left(
\begin{array}{c}
1 \\
Z_i\\
g(Z_i)
\end{array} 
\right)
\right]=0.\]
The parameter $\t_0$ is identified from the above equation if 
\[H_g:=
\left[\begin{array}{ccc}
1 &\E[Z_i] &\E[X_i]\\
\E[Z_i] & \E[Z_i^2] & \E[X_iZ_i]\\
\E[g(Z_i)] &\E[Z_i g(Z_i)] &\E[X_i g(Z_i)]
\end{array}
\right]
=\left[\begin{array}{ccc}
1 &\E[Z_i] &\E[\pi_0(Z_i)]\\
\E[Z_i] & \E[Z_i^2] & \E[\pi_0(Z_i)Z_i]\\
\E[g(Z_i)] &\E[Z_i g(Z_i)] &\E[\pi_0(Z_i)g(Z_i)]
\end{array}
\right]
\]
has full rank, which depends on the functional form of $\pi_0$ and the choice of $g$. 

Clearly, a necessary condition for the full rank requirement is the nonlinearity of $\pi_0$. Otherwise, the third column of $H_g$ will be a linear combination of the first two columns,\footnote{Writing $\pi_0(Z_i) = a + bZ_i$, we have $\E[\pi_0(Z_i)] = a + b\E[Z_i]$, $\E[\pi_0(Z_i)Z_i] = a\E[Z_i] + b\E[Z_i^2]$, and $\E[\pi_0(Z_i)g(Z_i)] = a\E[g(Z_i)] + b\E[Z_i g(Z_i)]$.} and hence $H_g$ cannot have full rank, irrespective of the choice of $g$. Our identification results thus make explicit the dependence of the identifiability on the nonlinearity of the $\pi_0$ function.

Moreover,  it is worth noting that not all nonlinear functions can serve as valid IVs for $X_i$ in the sense of satisfying the full rank condition on $H_g$. For example, if $Z_i \sim \cN(0,1)$, $\pi_0(z)=z^3$, and $g(z)=z^2$, then $H_g$ has deficient rank, and thus $Z_i^2$ is not a valid IV.

Our identification results in Theorem \ref{thm:IDGen} can be interpreted as using $\pi_0(Z_i)$ as an instrument for the endogenous regressor $X_i$, which is an unknown function that can be identified from data. As shown in \cite{chamberlin1987asymptotic} and \cite{newey1990efficient}, $\pi_0$ is in fact the optimal instrument under homoskedasticity. The identification results in our paper, Lemma \ref{lem:SuffScalarX} in particular, further imply that, with $\pi_0(Z_i)$ used as the IV, the nonlinearity of $\pi_0$ becomes \textit{sufficient} for the full rank condition. Therefore, $\pi_0(Z_i)$ is not only the instrumental function that minimizes the asymptotic variance under homoskedasticity, but also the instrumental function that requires the minimum assumption for identification.

\section{\label{sec:Estimation}Estimation and Inference}

\subsection{Semiparametric Estimators $\hat{\t}$ and $\hat{\t}^*$ }\label{subsec:Est_Semipara}

Based on our identification result, we propose the following two semiparametric estimators:
\[
\begin{aligned}
\hat{\t} & =\left(\frac{1}{n}\sum_{i=1}^{n}\hat{W}_{i}\hat{W}_{i}^{'}\right)^{-1}\frac{1}{n}\sum_{i=1}^{n}\hat{W}_{i}Y_{i},\\
\hat{\t}^{*} & =\left(\frac{1}{n}\sum_{i=1}^{n}\hat{W}_{i}\hat{W}_{i}^{'}\right)^{-1}\frac{1}{n}\sum_{i=1}^{n}\hat{W}_{i}\hat{h}\left(Z_{i}\right).
\end{aligned}
\]
We now lay out the regularity conditions for the
$\sqrt{n}$-consistency and asymptotic normality of $\hat{\t}$ and $\hat{\t}^{*}$. The first one is a standard one on the existence of moments.\footnote{We impose this assumption on the fourth moment for subsequent variance estimation.}
\begin{assumption}[Finite Fourth Moments]
\label{assu:Moment2} $\E\left|\e_{i}\right|^{4},\E\norm{X_{i}}^{4}$,
and $\E\norm{Z_{i}}^{4}$ are finite.
\end{assumption}
\noindent Below we give some high-level conditions about the first-stage nonparametric regressions, which can be satisfied with a wide variety of lower-level
conditions and many types of nonparametric estimators. See, for example, \citet*{newey1994large} and \citet*{chen2007large} for more information.
\begin{assumption}[Smoothness and Nonparametric Convergence]
\label{assu:NPFS} Suppose that:
\begin{itemize}
\item[(a)]  $h_{0},\pi_{0}\in{\cal H}$, where ${\cal H}$ is a Sobolev function
space of order $s>\frac{d_{z}}{2}$ on ${\cal Z}$.
\item[(b)]  The nonparametric estimators $\hat{h}$ and $\hat{\pi}$ belong
to ${\cal H}$ (with probability approaching 1) and are asymptotically
linear.
\item[(c)]  The nonparametric estimators $\hat{h}$ for $h_{0}$ and $\text{\ensuremath{\hat{\pi}}}$
for $\pi_{0}$ converge in $L_{2}\left(Z\right)$-norm faster than
the $n^{-1/4}$ rate: $\norm{\hat{h}-h_{0}}_{L_{2}\left(Z\right)}=o_{p}\left(n^{-\frac{1}{4}}\right)$, $\norm{\hat{\pi}-\pi_{0}}_{L_{2}\left(Z\right)}=o_{p}\left(n^{-\frac{1}{4}}\right)$.
\end{itemize}
\end{assumption}
\begin{thm}[Asymptotic Normality]
\label{thm:AsymNorm} Under Assumptions \ref{assu:exog} - \ref{assu:NPFS},
we have:
\begin{align*}
\sqrt{n}\left(\hat{\t}-\t_{0}\right)\dto\cN\left({\bf 0},V_{0}\right) & ,\quad\sqrt{n}\left(\hat{\t}^{*}-\t_{0}\right)\dto\cN\left({\bf 0},V_{0}\right),
\end{align*}
with
$$
V_{0}:= \E\left[W_{i}W_{i}^{'}\right]^{-1} \E\left[\e_{i}^{2}W_{i}W_{i}^{'}\right] \E\left[W_{i}W_{i}^{'}\right]^{-1}. 
$$ 
The assumptions about $h_{0}$ and $\hat{h}$ can be
dropped for $\hat{\t}$ since it does not involve $\hat{h}$. Also, if $\E[\rest{\e_i^2} Z_i]\equiv\s^2_\e$ (errors are homoskedastic), $V_0$ simplifies to $ \s^2_\e  \E\left[W_{i}W_{i}^{'}\right]^{-1}$. 
\end{thm}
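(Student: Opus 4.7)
The plan is to derive the asymptotic linear representation of the score and show that the contributions of the first-stage estimators cancel into a single clean term $\frac{1}{\sqrt{n}}\sum_i W_i \e_i$. First, I would establish that $\hat{M}:=\frac{1}{n}\sum_i \hat{W}_i\hat{W}_i'\pto M:=\E[W_iW_i']$, which is invertible by Assumption \ref{assu:FullRankW}. This follows from the LLN for $\frac{1}{n}\sum_i W_iW_i'$ together with the bounds $\frac{1}{n}\sum_i\|\hat W_i-W_i\|\cdot\|W_i\|=o_p(1)$ and $\frac{1}{n}\sum_i\|\hat W_i-W_i\|^2=o_p(1)$, both of which follow from $\|\hat\pi-\pi_0\|_{L_2(Z)}=o_p(n^{-1/4})$ together with $\E\|W_i\|^2<\infty$.

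Next, define $\eta_i:=Y_i-W_i'\t_0=\e_i+(X_i-\pi_0(Z_i))'\g_0$, so that $\E[\eta_i\mid Z_i]=0$. Since $(W_i-\hat W_i)'\t_0=(\pi_0(Z_i)-\hat\pi(Z_i))'\g_0$, the key object decomposes as
$$\frac{1}{\sqrt n}\sum_i\hat W_i(Y_i-\hat W_i'\t_0) = \underbrace{\frac{1}{\sqrt n}\sum_i \hat W_i\eta_i}_{(A)} + \underbrace{\frac{1}{\sqrt n}\sum_i \hat W_i(\pi_0(Z_i)-\hat\pi(Z_i))'\g_0}_{(B)}.$$
For $(A)$, I would split $\hat W_i=W_i+(\hat W_i-W_i)$; the first piece is $\frac{1}{\sqrt n}\sum_i W_i\eta_i$, and the cross term $\frac{1}{\sqrt n}\sum_i(\hat\pi(Z_i)-\pi_0(Z_i))\eta_i=o_p(1)$ by a standard stochastic equicontinuity argument: $\hat\pi\in\cH$ lies in a Sobolev class of order $s>d_z/2$ (hence Donsker), $\E[\eta_i\mid Z_i]=0$, and $\|\hat\pi-\pi_0\|_{L_2(Z)}=o_p(n^{-1/4})$ (see Chen 2007). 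For $(B)$, the analogous cross term is bounded by $\sqrt n\|\hat\pi-\pi_0\|_{L_2(Z)}^2\cdot\|\g_0\|\cdot O_p(1)=o_p(1)$ by Cauchy--Schwarz.

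The substantive step is showing that the leading piece of $(B)$ has an influence-function representation:
$$\frac{1}{\sqrt n}\sum_i W_i(\hat\pi(Z_i)-\pi_0(Z_i))'\g_0 = \frac{1}{\sqrt n}\sum_j W_j(X_j-\pi_0(Z_j))'\g_0 + o_p(1).$$
This is the canonical linearization for a smooth functional of a nonparametric mean regression estimator: plugging in the asymptotic linear expansion of $\hat\pi$ from Assumption \ref{assu:NPFS}(b), swapping the order of summation via a Fubini-type argument, and using the fact that, for a kernel/series estimator of $\E[X\mid Z]$, the effective ``weight'' placed on observation $j$ by the average $\frac{1}{n}\sum_i W_i\,\mathrm{influence}(Z_j,X_j;Z_i)$ collapses to $W_j$ (see Newey 1994). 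Combining $(A)$ and $(B)$,
$$\frac{1}{\sqrt n}\sum_i\hat W_i(Y_i-\hat W_i'\t_0) = \frac{1}{\sqrt n}\sum_i W_i\bigl[\eta_i-(X_i-\pi_0(Z_i))'\g_0\bigr]+o_p(1)=\frac{1}{\sqrt n}\sum_i W_i\e_i+o_p(1),$$
so the first-stage error exactly absorbs the $(X_i-\pi_0(Z_i))'\g_0$ component of $\eta_i$. Applying the Lindeberg--L\'evy CLT and Slutsky's theorem gives $\sqrt n(\hat\t-\t_0)\dto\cN(0,V_0)$ with the stated $V_0$. Homoskedasticity immediately reduces $\E[\e_i^2 W_iW_i']$ to $\s_\e^2 M$.

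For $\hat\t^*$, I would use the observation that $h_0(Z_i)=\E[Y_i\mid Z_i]=W_i'\t_0$ under Assumption \ref{assu:exog}, so
$$\sqrt n(\hat\t^*-\t_0) = \hat M^{-1}\frac{1}{\sqrt n}\sum_i\hat W_i\bigl[(\pi_0(Z_i)-\hat\pi(Z_i))'\g_0+(\hat h(Z_i)-h_0(Z_i))\bigr].$$
An analogous linearization of $\hat h$ yields $\frac{1}{\sqrt n}\sum_i W_i(\hat h(Z_i)-h_0(Z_i))=\frac{1}{\sqrt n}\sum_j W_j\eta_j+o_p(1)$. Adding this to the expansion of $(B)$ from before yields again $\frac{1}{\sqrt n}\sum_j W_j\e_j+o_p(1)$, which delivers the same limiting distribution. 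The main obstacle throughout is the non-Neyman-orthogonality of the moment with respect to $\pi_0$: one must track the first-stage influence function carefully enough to see the cancellation between the ``plug-in bias'' in $(B)$ and the non-$\e_i$ component of $\eta_i$; the mathematical mechanics can be streamlined by invoking the general results for semiparametric two-step M-estimators (Newey 1994; Chen, Linton, and van Keilegom 2003).
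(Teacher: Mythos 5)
Your proposal is correct and follows essentially the same route as the paper: both reduce the problem to Newey's (1994) influence-function calculus for semiparametric two-step estimators, use the Donsker property of the Sobolev class together with the $o_p(n^{-1/4})$ rate to dispose of the remainder terms, and identify the same cancellation between the first-stage correction term $-W_i\left(X_i-\pi_0(Z_i)\right)'\g_0$ and the $u_i'\g_0$ component of the second-stage residual, arriving at the common influence function $\E\left[W_iW_i'\right]^{-1}W_i\e_i$ for both $\hat{\t}$ and $\hat{\t}^{*}$. The only difference is presentational: the paper casts the estimators as minimizers of quadratic criteria and computes pathwise derivatives of the score before invoking Theorems 8.1--8.2 of Newey (1994), whereas you expand the closed-form OLS expression directly into your terms $(A)$ and $(B)$, but the resulting pieces and the way each is controlled are identical.
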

The asymptotic variance can then be easily estimated via standard plug-in methods as in the following theorem. Based on the standard error estimates, confidence intervals and various test statistics can be computed in the standard manner.
\begin{thm}[Variance Estimation]
\label{thm:VarEst} Let $\hat{\e}_{i}:=Y_{i}-\hat{\a}-Z_{i}^{'}\hat{\b}-X_{i}^{'}\hat{\g},$
and 
\begin{align*}
\hat{\Sigma}:=\frac{1}{n}\sum_{i=1}^{n}\hat{W}_{i}\hat{W}_{i}^{'},\quad & \hat{\O}:=\frac{1}{n}\sum_{i=1}^{n}\hat{\e}_{i}^{2}\hat{W}_{i}\hat{W}_{i}^{'},\quad \hat{V}:=\hat{\Sigma}^{-1}\hat{\O}\hat{\Sigma}^{-1}.
\end{align*}
With homoskedasticity, $\hat{V}:= \left(\frac{1}{n}\sum_{i=1}^{n}\hat{\e}_{i}^{2}\right)\hat{\Sigma}^{-1}$. Under Assumptions \ref{assu:exog} - \ref{assu:NPFS},
$\hat{V}\pto V_{0}.$ 
\end{thm}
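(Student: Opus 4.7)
The plan is to decompose $\hat{V}=\hat{\Sigma}^{-1}\hat{\O}\hat{\Sigma}^{-1}$ and establish separately that $\hat{\Sigma}\pto \Sigma_{0}:=\E[W_{i}W_{i}^{'}]$ and $\hat{\O}\pto \O_{0}:=\E[\e_{i}^{2}W_{i}W_{i}^{'}]$. Since $\Sigma_{0}$ is invertible by Assumption \ref{assu:FullRankW}, the continuous mapping theorem applied to matrix inversion and multiplication then delivers $\hat{V}\pto V_{0}$. The homoskedastic version reduces to the same two convergences together with $\frac{1}{n}\sum \hat{\e}_{i}^{2}\pto \s_{\e}^{2}$, which is handled by the same arguments.

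For $\hat{\Sigma}$, I write $\hat{W}_{i}=W_{i}+\Delta_{i}$ with $\Delta_{i}:=(0,{\bf 0}^{'},(\hat{\pi}(Z_{i})-\pi_{0}(Z_{i}))^{'})^{'}$, so that
\[
\hat{\Sigma}=\frac{1}{n}\sum_{i}W_{i}W_{i}^{'}+\frac{1}{n}\sum_{i}W_{i}\Delta_{i}^{'}+\frac{1}{n}\sum_{i}\Delta_{i}W_{i}^{'}+\frac{1}{n}\sum_{i}\Delta_{i}\Delta_{i}^{'}.
\]
The first term converges to $\Sigma_{0}$ by the standard law of large numbers using Assumption \ref{assu:Moment2}. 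For the remaining terms, the Sobolev embedding implied by Assumption \ref{assu:NPFS}(a) (with $s>d_{z}/2$) ensures that $\hat{\pi}$ and $\pi_{0}$ are uniformly bounded, and the $o_{p}(n^{-1/4})$ rate in Assumption \ref{assu:NPFS}(c), together with a standard empirical-process bound on Sobolev classes, gives $\frac{1}{n}\sum_{i}\|\Delta_{i}\|^{2}=o_{p}(n^{-1/2})$; Cauchy-Schwarz against $\frac{1}{n}\sum_{i}\|W_{i}\|^{2}=O_{p}(1)$ then kills the cross terms.

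For $\hat{\O}$ there is the additional step of replacing the residual: letting $\tilde{W}_{i}:=(1,Z_{i}^{'},X_{i}^{'})^{'}$, one has $\hat{\e}_{i}-\e_{i}=-\tilde{W}_{i}^{'}(\hat{\t}-\t_{0})$, hence
\[
\hat{\e}_{i}^{2}=\e_{i}^{2}-2\e_{i}\tilde{W}_{i}^{'}(\hat{\t}-\t_{0})+\bigl((\hat{\t}-\t_{0})^{'}\tilde{W}_{i}\bigr)^{2}.
\]
Theorem \ref{thm:AsymNorm} provides $\hat{\t}-\t_{0}=O_{p}(n^{-1/2})$, so after multiplying by $\hat{W}_{i}\hat{W}_{i}^{'}$ and averaging, the two perturbation terms are $o_{p}(1)$ by Assumption \ref{assu:Moment2} applied via Cauchy-Schwarz. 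The remaining piece $\frac{1}{n}\sum_{i}\e_{i}^{2}\hat{W}_{i}\hat{W}_{i}^{'}$ is treated exactly as in the $\hat{\Sigma}$ step: the $\e_{i}^{2}W_{i}W_{i}^{'}$ part converges by LLN (finite because $\E\e_{i}^{4}<\infty$ and the fourth moments of $Z_{i}$ are finite), and the terms involving $\Delta_{i}$ are the main obstacle since $\e_{i}^{2}$ is unbounded. I will control them by writing $\e_{i}^{2}\Delta_{i}\Delta_{i}^{'}$ as $\e_{i}^{2}\cdot O(1)\cdot \|\Delta_{i}\|^{2}$ using the uniform bound from Sobolev embedding, then applying Cauchy-Schwarz to get $\bigl(\frac{1}{n}\sum_{i}\e_{i}^{4}\bigr)^{1/2}\bigl(\frac{1}{n}\sum_{i}\|\Delta_{i}\|^{4}\bigr)^{1/2}$, where the second factor is $o_{p}(n^{-1/2})$ by the uniform-bound-times-$L_{2}$-rate argument; the $\e_{i}^{2}W_{i}\Delta_{i}^{'}$ cross terms are handled analogously. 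Combining these with the continuous mapping theorem completes the proof.
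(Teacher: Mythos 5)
Your proof is correct and follows essentially the same route as the paper's: decompose $\hat{\Sigma}$ and $\hat{\O}$ into a law-of-large-numbers term plus perturbation terms coming from $\hat{\pi}-\pi_{0}$ and $\hat{\t}-\t_{0}$, show the perturbations vanish, and conclude by the continuous mapping theorem. The only difference is one of rigor: the paper simply asserts that the perturbation averages vanish from the pointwise consistency of $\hat{W}_{i}$ and $\hat{\e}_{i}$, whereas you control them explicitly via the Sobolev uniform bound, the $L_{2}$ rate, and Cauchy--Schwarz, which is a more careful treatment of the same argument.
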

%In Appendix \ref{subsec:VInfeasible}, we provide a discussion on the comparison of $V_0$ with the asymptotic variance of an infeasible estimator based on the true values of $\pi_0(Z_i)$.

\subsection{\label{subsec:DiscTSLS}Discretization-Based Estimator $\hat{\t}_{disc}$}

The two estimators $\hat{\t}$ and $\hat{\t}^{*}$ we proposed before
both involve nonparametric regressions in the first stage. Alternatively,
we propose a third estimator $\hat{\t}_{disc}$ that does \emph{not}
require \emph{any} nonparametric regression at all.

Specifically, let $\left({\cal Z}_{k}\right)_{k=1}^{K}$ be a partition
of ${\cal Z}$ with $K$ being a finite and fixed number such that $K\geq d=1+d_{z}+d_{x}$. To rule out redundant cells, we require each cell to have a positive probability.
%not to be a  probability-zero set.
\begin{assumption}[Positive Probabilities]
\label{assu:PositveiP} $p_k := \P\left(Z_{i}\in{\cal Z}_{k}\right)>0$ for $k=1,...,K$. 
\end{assumption}
Define the dummy variable for each of the $K$ partition cells as
$D_{i,k}:=\ind\left\{ Z_{i}\in{\cal Z}_{k}\right\}$,
and write $D_{i}:=\left(D_{i1},...,D_{iK}\right)^{'}$. We can then
use $D_{i}$ as IVs to identify and estimate $\t_0$
based on the following transformation of equation \eqref{eq:Model}:
\begin{equation*}
\E\left[D_{i}Y_{i}\right]=\a_{0}\E\left[D_{i}\right]+\E\left[D_{i}Z_{i}^{'}\right]\b_{0}+\E\left[D_{i}X_{i}^{'}\right]\g_{0}.
\end{equation*}
Since $Z_{i}$ is averaged out within each partition cell, there is
some information loss, and the no-multicollinearity condition for
identification of $\t_{0}$ in Assumption \ref{assu:FullRankW} needs
to be strengthened to a partitional version. To state the condition
in a ``lower-level'' form, write $\ol Z_{k}:=\E\left[\rest{Z_{i}}Z_{i}\in{\cal Z}_{k}\right]$,
 $\ol X_{k}:=\E\left[\rest{X_{i}}Z_{i}\in{\cal Z}_{k}\right]$,
and $\ol W_{k}:=\left(1,\ol Z_{k}^{'},\ol X_{k}^{'}\right)^{'}$.
\begin{assumption}[No Partitional Multicollinearity]
\label{assu:NoMC_disc} Suppose that $\left(1,\ol Z_{k}^{'},\ol X_{k}^{'}\right)$
are not multicollinear across $k=1,...,K$, or equivalently,
\[
\text{rank}\left(\begin{array}{ccc}
1 & \ol Z_{1}^{'} & \ol X_{1}^{'}\\
\vdots & \vdots & \vdots\\
1 & \ol Z_{K}^{'} & \ol X_{K}^{'}
\end{array}\right)=d.
\]
\end{assumption}
We note that Assumption \ref{assu:NoMC_disc} translates into the following standard full-rank condition written in terms of expectations (i.e., probability-weighted sums under discreteness), provided that each cell has a strictly positive probability.
\begin{lem}
    Suppose that Assumption \ref{assu:PositveiP} holds. Then Assumption \ref{assu:NoMC_disc} holds if and only if $\sum_{k=1}^{K}p_k\ol W_{k}\ol W_{k}^{'}$ is invertible.
\end{lem}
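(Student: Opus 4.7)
The plan is to recast the sum $\sum_{k=1}^{K}p_k\ol W_k \ol W_k'$ as a weighted Gram matrix and then invoke the standard fact that a Gram matrix is invertible if and only if its underlying matrix has full column rank. Concretely, I would stack the row vectors $\ol W_k'$ into a $K\times d$ matrix
\[
M := \begin{pmatrix} 1 & \ol Z_1' & \ol X_1' \\ \vdots & \vdots & \vdots \\ 1 & \ol Z_K' & \ol X_K' \end{pmatrix},
\]
so that Assumption \ref{assu:NoMC_disc} is exactly the statement $\text{rank}(M)=d$. Let $P:=\text{diag}(p_1,\dots,p_K)$, which under Assumption \ref{assu:PositveiP} is a positive definite $K\times K$ diagonal matrix, hence admits a well-defined invertible square root $P^{1/2}$.

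The key algebraic identity is then
\[
\sum_{k=1}^{K} p_k \ol W_k \ol W_k' \; = \; M' P M \; = \; \bigl(P^{1/2} M\bigr)' \bigl(P^{1/2} M\bigr),
\]
which I would verify just by expanding the matrix product row-by-row. From here the proof is two one-line linear-algebra facts: first, for any matrix $A$, the Gram matrix $A'A$ is invertible iff $A$ has full column rank (the null spaces coincide); second, left-multiplication by the invertible matrix $P^{1/2}$ preserves rank, so $\text{rank}(P^{1/2}M)=\text{rank}(M)$. Chaining these gives
\[
\sum_{k=1}^{K} p_k \ol W_k \ol W_k' \text{ is invertible} \;\iff\; \text{rank}(P^{1/2}M)=d \;\iff\; \text{rank}(M)=d,
\]
which is precisely the equivalence claimed.

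There is no real obstacle here; this is a short, self-contained linear-algebra identity whose only substantive ingredient is the strict positivity of the cell probabilities (used to guarantee that $P^{1/2}$ exists and is invertible, so that rank is preserved). The only thing worth being careful about in the write-up is to flag explicitly where Assumption \ref{assu:PositveiP} enters — namely, without it, $P^{1/2}$ could be singular and the rank equivalence between $M$ and $P^{1/2}M$ (and hence between the two conditions) would fail, since cells with $p_k=0$ would effectively drop rows from the weighted Gram matrix even though they still appear in the unweighted rank condition.
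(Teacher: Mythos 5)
Your proof is correct. The paper actually states this lemma without providing a proof (it is omitted from the appendix, presumably as routine), and your weighted Gram-matrix argument --- writing $\sum_k p_k \ol W_k \ol W_k' = (P^{1/2}M)'(P^{1/2}M)$, using that $A'A$ is invertible iff $A$ has full column rank, and that left-multiplication by the invertible $P^{1/2}$ preserves rank --- is exactly the standard argument the authors evidently have in mind, with the role of Assumption \ref{assu:PositveiP} correctly isolated as what makes $P^{1/2}$ invertible.
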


Note that a necessary condition for Assumption \ref{assu:NoMC_disc} is the order condition $K\geq d$ already mentioned above. It is
also easy to verify that Assumption \ref{assu:NoMC_disc}
implies Assumption \ref{assu:FullRankW}, but the converse is not
generally true. However, Assumption \ref{assu:NoMC_disc} still remains
as a condition nonparametrically identified from the observable distribution
of data.

We can then construct $\hat{\t}_{disc}$ as the standard two-stage
least square (2SLS) estimator with the $K$-dimensional vector $D_{i}$
as instruments. Formally, write $\tilde{W}_{i}:=\left(1,Z_{i}^{'},X_{i}^{'}\right)^{'}$, and let
 $Y,D,\tilde{W}$ denote the  vector/matrix concatenation of the variables
across all $i=1,...,n$, and each row of which contains $Y_{i},D_{i}^{'},\tilde{W}_{i}^{'}$, respectively. Then
\begin{align}
\hat{\t}_{disc} & :=\left(\tilde{W}^{'}P_{D}\tilde{W}\right)^{-1}\tilde{W}^{'}P_{D}Y,\label{eq:thetahat_disc}
\end{align}
where $P_{D}:=D\left(D^{'}D\right)^{-1}D^{'}$. Since $D$ consists
of partition cell dummies, the projection matrix $P_{D}$ is essentially
computing cell-wise averages, and thus $\hat{\t}_{disc}$ can be equivalently
written as
\begin{equation}
\hat{\t}_{disc}=\left(\sum_{k=1}^{K}\hat{p}_k\hat{\ol W}_{k}\hat{\ol W}_{k}^{'}\right)^{-1}\sum_{k=1}^{K}\hat{p}_k\hat{\ol W}_{k}\hat{\mu}_{y,k},\label{eq:thetahat_disc_K}
\end{equation}
where $\hat{p}_k := \frac{n_k}{n}$, $n_{k}:=\sum_{i=1}^{n}D_{ik}$, $\hat{\mu}_{y,k}:=\frac{1}{n_{k}}\sum_{i=1}^{n}D_{ik}Y_{i},$
$\hat{\mu}_{z,k}:=\frac{1}{n_{k}}\sum_{i=1}^{n}D_{ik}Z_{i}$, $\hat{\mu}_{x,k}:=\frac{1}{n_{k}}\sum_{i=1}^{n}D_{ik}X_{i}$,
and $\hat{\ol W}_{k}:=\left(1,\hat{\mu}_{z,k}^{'},\hat{\mu}_{x,k}^{'}\right)^{'}$.

Clearly, $\hat{\t}_{disc}$ is very easy to compute. Researchers
may use any standard 2SLS  command with
$D_{i}$ as IVs (or with
one of $D_{ik}$'s dropped if the constant is included), which yields equivalent results as \eqref{eq:thetahat_disc} and \eqref{eq:thetahat_disc_K}.
The asymptotic distribution of
$\hat{\t}_{disc}$ is derived as follows:
\begin{thm}[Asymptotic Normality of $\hat{\t}_{disc}$]
\label{thm:AsymDisc}Under Assumptions \ref{assu:exog},  \ref{assu:Moment2}, \ref{assu:PositveiP}, and
\ref{assu:NoMC_disc}, $\sqrt{n}\left(\hat{\t}_{disc}-\t_{0}\right)\dto\cN\left({\bf 0},V_{0,disc}\right)$
with 
\begin{equation}
V_{0,disc}:=\left(\sum_{k=1}^{K}p_{k}\ol W_{k}\ol W_{k}^{'}\right)^{-1}\sum_{k=1}^{K}p_{k}\ol{\s}_{\e,k}^{2}\ol W_{k}\ol W_{k}^{'}\left(\sum_{k=1}^{K}p_{k}\ol W_{k}\ol W_{k}^{'}\right)^{-1}\label{eq:V0_disc}
\end{equation}
where $\ol{\s}_{\e,k}^{2}:=\E\left[\rest{\e_{i}^{2}}Z_{i}\in{\cal Z}_{k}\right].$  
%Also, if $\E[\rest{\e_i^2} Z_{i}\in{\cal Z}_{k}]\equiv\s^2_\e$ (errors are homoskedastic), $V_{0, disc}$ simplifies to $ \s^2_\e  \left(\sum_{k=1}^{K}p_{k}\ol W_{k}\ol W_{k}^{'}\right)^{-1}$. 
Furthermore, a consistent estimator $\hat{V}_{disc}$ for $V_{0,disc}$
can be constructed by plugging $\hat{p}_{k}:=\frac{n_{k}}{n}$ in
place of $p_{k}$, $\hat{\ol W_{k}}$ in place of $\ol W_{k}$, and
%\[
$\hat{\ol{\s}}_{\e,k}^{2}:=\frac{1}{n_{k}}\sum_{i:Z_{i}\in{\cal Z}_{k}}(Y_{i}-\tilde{W}_{i}^{'}\hat{\t}_{disc})^{2}$ 
%\]
in place of $\ol{\s}_{\e,k}^{2}$ in the formula \eqref{eq:V0_disc}
above. 

Under homoskedasticity $\E[\rest{\e_i^2} Z_{i}\in{\cal Z}_{k}]\equiv\s^2_\e$, the asymptotic variance simplifies to $V_{0,disc} = \s^2_\e \left(\sum_{k=1}^{K}p_{k}\ol W_{k}\ol W_{k}^{'}\right)^{-1}$ with $\hat{\s}^2_\e := \frac{1}{n}\sum_{i=1}^n(Y_{i}-\tilde{W}_{i}^{'}\hat{\t}_{disc})^{2}$ consistent for $\s^2_\e$.
\end{thm}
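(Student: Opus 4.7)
The plan is to derive the asymptotic distribution of $\hat{\t}_{disc}$ directly from the closed form \eqref{eq:thetahat_disc_K}, which makes the finite-$K$ structure transparent and avoids invoking generic 2SLS machinery. The first step is algebraic. Substituting $Y_{i}=\tilde{W}_{i}'\t_{0}+\e_{i}$ into $\hat{\mu}_{y,k}$ and using the identity $\hat{\ol W}_{k}'\t_{0}=\frac{1}{n_{k}}\sum_{i:Z_{i}\in{\cal Z}_{k}}\tilde{W}_{i}'\t_{0}$, which holds because $\hat{\ol W}_{k}$ stacks the cell-wise sample means of the components of $\tilde{W}_{i}$, together with the fact that the $D_{ik}$ are mutually exclusive dummies, yields
$$
\sqrt{n}\,(\hat{\t}_{disc}-\t_{0})=\hat{A}^{-1}\cdot\frac{1}{\sqrt{n}}\sum_{i=1}^{n}\hat{m}_{i}\e_{i},
$$
where $\hat{A}:=\sum_{k=1}^{K}\hat{p}_{k}\hat{\ol W}_{k}\hat{\ol W}_{k}'$ and $\hat{m}_{i}:=\sum_{k=1}^{K}D_{ik}\hat{\ol W}_{k}=\hat{\ol W}_{k(i)}$, with $k(i)$ denoting the cell containing $Z_{i}$.

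The second step is a CLT for this score. Define $m_{i}:=\sum_{k}D_{ik}\ol W_{k}$. The replacement error satisfies
$$
\frac{1}{\sqrt{n}}\sum_{i=1}^{n}(\hat{m}_{i}-m_{i})\e_{i}=\sum_{k=1}^{K}(\hat{\ol W}_{k}-\ol W_{k})\cdot\frac{1}{\sqrt{n}}\sum_{i=1}^{n}D_{ik}\e_{i}=o_{p}(1),
$$
because $K$ is fixed, each $\hat{\ol W}_{k}-\ol W_{k}=O_{p}(n^{-1/2})$ by a cell-wise LLN under Assumptions \ref{assu:PositveiP} and \ref{assu:Moment2}, and the inner $\sqrt{n}$-scaled sum is $O_{p}(1)$ by a CLT with mean zero thanks to the mean independence in Assumption \ref{assu:exog}. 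Next, applying the multivariate CLT to $\frac{1}{\sqrt{n}}\sum_{i}m_{i}\e_{i}$ gives the limit: using $D_{ik}D_{ij}=0$ for $k\neq j$ we get $m_{i}m_{i}'=\sum_{k}D_{ik}\ol W_{k}\ol W_{k}'$, whence $\E[m_{i}m_{i}'\e_{i}^{2}]=\sum_{k}p_{k}\ol{\s}_{\e,k}^{2}\ol W_{k}\ol W_{k}'$. Combined with $\hat{A}\pto\sum_{k}p_{k}\ol W_{k}\ol W_{k}'$ (invertible by Assumption \ref{assu:NoMC_disc}) and Slutsky's theorem, this delivers the sandwich form $V_{0,disc}$.

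For the variance estimator, the finite-$K$ structure again makes the argument essentially entry-by-entry: cell probabilities $\hat{p}_{k}$ and cell means $\hat{\ol W}_{k}$ are consistent by cell-wise LLNs, and $\hat{\ol{\s}}_{\e,k}^{2}-\frac{1}{n_{k}}\sum_{i:Z_{i}\in{\cal Z}_{k}}\e_{i}^{2}=o_{p}(1)$ because $\hat{\t}_{disc}\pto\t_{0}$ together with $\E\norm{\tilde{W}_{i}}^{2}<\infty$ controls the cross terms, so that another LLN yields $\hat{\ol{\s}}_{\e,k}^{2}\pto\ol{\s}_{\e,k}^{2}$. The continuous mapping theorem then gives $\hat{V}_{disc}\pto V_{0,disc}$. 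The homoskedastic simplification is purely algebraic: when $\ol{\s}_{\e,k}^{2}\equiv\s_{\e}^{2}$, the middle factor of $V_{0,disc}$ becomes $\s_{\e}^{2}\sum_{k}p_{k}\ol W_{k}\ol W_{k}'$, cancelling one copy of the outer inverse, and the pooled $\hat{\s}_{\e}^{2}$ is consistent by the same residual argument.

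The only delicate point is the linearization in the second paragraph above; it crucially exploits that $K$ is \emph{finite and fixed} (no uniform-in-$k$ rate control is required) and that $\E[D_{ik}\e_{i}]=0$ (so the $\sqrt{n}$-scaled centered partial sums are $O_{p}(1)$ rather than $O_{p}(\sqrt{n})$). This is precisely the feature that allows $\hat{\t}_{disc}$ to dispense entirely with nonparametric first-stage rate conditions, in contrast to $\hat{\t}$ and $\hat{\t}^{*}$.
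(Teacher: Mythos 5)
Your proof is correct. It does, however, take a more self-contained route than the paper: the paper's own proof simply notes that $\hat{\t}_{disc}$ is a 2SLS estimator, invokes the standard asymptotic normality result for 2SLS as given, and then devotes its effort to simplifying the generic sandwich matrices $\E[W_i'D_i](\E[D_iD_i'])^{-1}\E[D_i'W_i]$ and the corresponding middle term into the cell-average form \eqref{eq:V0_disc}, exploiting exactly the dummy identities $\E[D_{ik}]=p_k$, $D_{ik}^2=D_{ik}$, $D_{ik}D_{ij}=0$ that you also use. You instead linearize directly from the closed form \eqref{eq:thetahat_disc_K}, proving the $\sqrt{n}$-normality from scratch via the decomposition $\sqrt{n}(\hat{\t}_{disc}-\t_0)=\hat{A}^{-1}n^{-1/2}\sum_i\hat{m}_i\e_i$ and an explicit $o_p(1)$ bound on the replacement error $\sum_k(\hat{\ol W}_k-\ol W_k)\cdot n^{-1/2}\sum_iD_{ik}\e_i$. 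What your approach buys is transparency about \emph{why} no first-stage rate conditions are needed (finite fixed $K$ plus $\E[D_{ik}\e_i]=0$), and it supplies the consistency argument for $\hat{V}_{disc}$ and $\hat{\ol{\s}}_{\e,k}^2$ that the paper asserts without proof; what the paper's approach buys is brevity, since the 2SLS limit theory is standard. Both arguments rest on the same assumptions and arrive at the same variance formula, so there is no gap in either direction.
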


Since $\hat{\t}_{disc}$ is constructed based on averages over each partition cell ${\cal Z}_{k}$, there is in general some information loss, and thus $\hat{\t}_{disc}$ tends to be less efficient than $\hat{\t}$ and $\hat{\t}^*$. Our next result formalizes this efficiency loss in the setting where $\e_i$ is homoskedastic.

\begin{thm}\label{lem:DiscLessEff} Suppose that $\E[\rest{\e_i^2} Z_i]\equiv\s^2_\e$. Then $V_{0, disc} - V_0$ is positive semi-definite. 
\end{thm}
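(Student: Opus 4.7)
The plan is to reduce the positive semi-definite (PSD) comparison to a variance decomposition argument. Under homoskedasticity, Theorem \ref{thm:AsymNorm} and Theorem \ref{thm:AsymDisc} give
\[
V_0 = \sigma_\e^2\, \E[W_i W_i']^{-1}, \qquad V_{0,disc} = \sigma_\e^2 \left(\sum_{k=1}^K p_k \ol W_k \ol W_k'\right)^{-1}.
\]
Since $\sigma_\e^2 \geq 0$, the claim $V_{0,disc} - V_0 \succeq 0$ is equivalent to
\[
\left(\sum_{k=1}^K p_k \ol W_k \ol W_k'\right)^{-1} \succeq \E[W_i W_i']^{-1}.
\]
By the monotonicity of matrix inversion on positive definite matrices (i.e., if $A \succeq B \succ 0$ then $B^{-1} \succeq A^{-1}$), it suffices to prove $\E[W_i W_i'] \succeq \sum_{k=1}^K p_k \ol W_k \ol W_k'$, with both matrices positive definite; invertibility of both is already guaranteed by Assumptions \ref{assu:FullRankW} and \ref{assu:NoMC_disc}.

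The key observation that bridges the two matrices is that $\ol W_k$ is exactly the conditional expectation of $W_i$ given $\{Z_i \in {\cal Z}_k\}$. Indeed, since $W_i = (1, Z_i', \pi_0(Z_i)')'$, the first two blocks match $\ol W_k$ trivially, and for the last block
\[
\E[\pi_0(Z_i) \mid Z_i \in {\cal Z}_k] = \E\big[\E[X_i \mid Z_i] \mid Z_i \in {\cal Z}_k\big] = \E[X_i \mid Z_i \in {\cal Z}_k] = \ol X_k
\]
by the Law of Iterated Expectations. Hence $\ol W_k = \E[W_i \mid Z_i \in {\cal Z}_k]$.

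Next I would apply the partition-based Law of Total Variance to $W_i$. Using $p_k = \P(Z_i \in {\cal Z}_k)$ and writing conditional covariances on $\{Z_i \in {\cal Z}_k\}$, we have
\[
\E[W_i W_i'] = \sum_{k=1}^K p_k\, \E[W_i W_i' \mid Z_i \in {\cal Z}_k],
\]
and by the definition of conditional variance,
\[
\E[W_i W_i' \mid Z_i \in {\cal Z}_k] = \var(W_i \mid Z_i \in {\cal Z}_k) + \ol W_k \ol W_k'.
\]
Summing with weights $p_k$ gives
\[
\E[W_i W_i'] - \sum_{k=1}^K p_k \ol W_k \ol W_k' = \sum_{k=1}^K p_k\, \var(W_i \mid Z_i \in {\cal Z}_k) \succeq 0,
\]
since each conditional covariance matrix is PSD and $p_k \geq 0$. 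This yields the desired PSD ordering of the inverses, and multiplying by $\sigma_\e^2 \geq 0$ completes the proof.

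The argument is mostly routine once the identification $\ol W_k = \E[W_i \mid Z_i \in {\cal Z}_k]$ is noted; I do not anticipate a real obstacle. The only slightly subtle point is confirming that both matrices being inverted are strictly positive definite so that the standard monotonicity of matrix inversion applies, but this follows directly from Assumptions \ref{assu:FullRankW} and \ref{assu:NoMC_disc} together with Assumption \ref{assu:PositveiP}.
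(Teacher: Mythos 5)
Your proposal is correct and follows essentially the same route as the paper's proof: both reduce the claim to showing $\E[W_iW_i'] - \sum_k p_k \ol W_k \ol W_k' \succeq 0$ via the identification $\ol W_k = \E[\rest{W_i}Z_i\in{\cal Z}_k]$ and the between/within variance decomposition, then invert. Your write-up is if anything slightly more careful, since you explicitly verify the $\ol W_k$ identification via iterated expectations and explicitly invoke the antitonicity of matrix inversion, both of which the paper leaves implicit.
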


Despite the efficiency loss, the discretization-based estimator $\hat{\t}_{disc}$ is very simple and user-friendly. Applied researchers just need to create dummy variables for a chosen partition of ${\cal Z}$ and run a standard 2SLS command. Furthermore, in our simulations, we find that $\hat{\t}_{disc}$ performs surprisingly well in finite sample: the efficiency loss of $\hat{\t}_{disc}$ tends to be quite small and more than compensated by its smaller finite-sample bias as a 2SLS estimator that does not require nonparametric regressions.

Furthermore, in the special case where $Z_{i}$ are discrete variables with finite support, there is clearly no information loss from discretization. In this case, expectations simplify to weighted sums over the $K$ realizations of $Z_i$ (weighted by the probability mass $p_k$), and it can be easily shown that the asymptotic variance of $\hat{\t}_{disc}$ coincides with the one for $\hat{\t}$ and $\hat{\t}^{*}$ in Theorem \ref{thm:AsymNorm} (without the homoskedasticity assumption).
\begin{cor}
Suppose that ${\cal Z}=\left\{ z_{1},...,z_{K}\right\} $ for some
finite $K$. Then, under the element-by-element partition, i.e., ${\cal Z}_{k}:=\left\{ z_{k}\right\}$,
we have $V_{0,disc}=V_{0}.$
\end{cor}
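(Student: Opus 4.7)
The plan is to show that, under the element-by-element partition, both the ``bread'' matrix $\E[W_i W_i']$ and the ``meat'' matrix $\E[\e_i^2 W_i W_i']$ in the sandwich formula for $V_0$ coincide exactly with their counterparts $\sum_k p_k \ol W_k \ol W_k'$ and $\sum_k p_k \ol{\s}_{\e,k}^2 \ol W_k \ol W_k'$ in $V_{0,disc}$. The result is then immediate from comparing the formulas in Theorem \ref{thm:AsymNorm} and \eqref{eq:V0_disc}.

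First, I would observe that the element-by-element partition gives ${\cal Z}_k = \{z_k\}$, so conditioning on $Z_i \in {\cal Z}_k$ is equivalent to conditioning on $Z_i = z_k$. Hence $\ol Z_k = z_k$ and $\ol X_k = \E[\rest{X_i}{Z_i = z_k}] = \pi_0(z_k)$, which gives $\ol W_k = (1, z_k', \pi_0(z_k)')'$. Since $W_i = (1, Z_i', \pi_0(Z_i)')'$ is a deterministic function of $Z_i$, the value of $W_i$ on the event $\{Z_i = z_k\}$ is exactly $\ol W_k$.

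Next, since $Z_i$ has finite support with $\P(Z_i = z_k) = p_k$, I would use the law of total expectation to write
\[
\E[W_i W_i'] \;=\; \sum_{k=1}^K p_k \,\E[\rest{W_i W_i'}{Z_i = z_k}] \;=\; \sum_{k=1}^K p_k \ol W_k \ol W_k',
\]
and similarly, using the tower property together with the fact that $W_i$ is $\sigma(Z_i)$-measurable,
\[
\E[\e_i^2 W_i W_i'] \;=\; \sum_{k=1}^K p_k \,\E[\rest{\e_i^2}{Z_i = z_k}] \,\ol W_k \ol W_k' \;=\; \sum_{k=1}^K p_k \ol{\s}_{\e,k}^2 \ol W_k \ol W_k'.
\]

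Substituting these identities into the formula for $V_0$ from Theorem \ref{thm:AsymNorm} yields precisely the expression for $V_{0,disc}$ in \eqref{eq:V0_disc}, completing the proof. There is no real obstacle here: the argument is just the observation that when $Z_i$ is discrete with the finest possible partition, conditioning on the cell is the same as conditioning on $Z_i$ itself, so no averaging-induced information loss occurs and the two asymptotic variances coincide.
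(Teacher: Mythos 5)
Your proof is correct and follows essentially the same route as the paper's own argument: identify $\ol W_k$ with the value of $W_i$ on $\{Z_i=z_k\}$, then decompose $\E[W_iW_i']$ and $\E[\e_i^2W_iW_i']$ by the law of total expectation over the support points to match the two sandwich formulas term by term. Nothing is missing.
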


While the established results for $\hat{\theta}_{disc}$ hold for any choice of partition $({\cal Z}_k)_{k=1}^K$ that satisfy Assumptions \ref{assu:PositveiP} and \ref{assu:NoMC_disc},  it is recommended in practice to choose ${\cal Z}_k$ in such a way that the cell probabilities $p_k$ are comparable in magnitude. To illustrate, consider a simple example where we partition ${\cal Z}$ into $K = 10$ cells with $p_1 = 0.91$ but $p_2=...=p_{10} = 0.01.$ Despite Assumption \ref{assu:NoMC_disc} holds (so that $\t_0$ is identified), the estimator 
$\hat{\t}_{disc}$ is likely to  perform badly, since there are only a few observations in cell $2,...,K$ and thus the sample average estimation in those cells could be highly imprecise. Moreover, since $p_2,...,p_{10}$ are close to zero, the smallest eigenvalue of $\sum_{k=1}^{K}p_{k}\ol W_{k}\ol W_{k}^{'}$ may be close to zero (unless the corresponding $\ol{W}_k$'s are very large in magnitude, so that the product terms $p_k\ol W_{k}\ol W_{k}^{'}$ stay comparable across $k$). Since the estimator $\hat{\t}_{disc}$ is based on the inverse of $\sum_{k=1}^{K}p_{k}\ol W_{k}\ol W_{k}^{'}$, its variance can be large because of the imbalance between $p_1$ and $p_2,...,p_{10}$.

If $Z_i$ is a scalar, a natural strategy would be to choose the partition $({\cal Z}_k)$ to be the $K$ equally-sized quantile ranges, which would ensure that $p_k \equiv 1/K$ (or at least asymptotically so when sample quantiles are used in finite samples). If $Z_i$ is vector, one could work with (empirical) vector quantiles as developed relatively recently in the literature based on the theory of optimal transport: see \cite{galichon2018optimal} for an introduction, and, e.g., \cite{chernozhukov2017monge}, \cite{hallin2021distribution}, and \cite{ghosal2022multivariate} for detailed discussions. Alternatively, one could start with a partition of the support of $Z_i$ obtained as products of partitions in each dimension of $Z_i$, and adjust and/or merge certain cells (if necessary) to ensure that the sample proportions of observations in each cell are comparable across $k = 1,...,K$.

%$p_k$, or more precisely, the sample proportions of observations in each cell (as finite-sample estimates of $p_k$), are comparable across $k = 1,...,K$. 

We emphasize again that our results above apply for any choice of the partition $({\cal Z}_k)$ as long as Assumptions \ref{assu:PositveiP} and \ref{assu:NoMC_disc} are satisfied.
Hence, while we provide some suggestions for the choice of partitions above, there may be more appropriate partition choices depending on the specific applications and contexts.

%it is ultimately up to the applied researchers to choose the specific form of partitions that they find most appropriate in their own applications and contexts.

\begin{comment}
\begin{proof}

When $Z_i$ is discrete and we use element-wise partition ${\cal Z}_{k}=z_k$, then we have $\bar{W}_k=W_k:=(1, z_k', \E[\rest{X_i} Z_i=z_k]')'$. The variance $V_{0, disc}$ becomes
\[
V_{0,disc}:=\left(\sum_{k=1}^{K}p_{k} W_{k} W_{k}^{'}\right)^{-1}\sum_{k=1}^{K}p_{k}\ol{\s}_{\e,k}^{2} W_{k} W_{k}^{'}\left(\sum_{k=1}^{K}p_{k} W_{k} W_{k}^{'}\right)^{-1},
\]
where $\ol{\s}_{\e,k}^{2}=\E[\rest{\e_i^2} Z_i=z_k]$.

Recall that the variance $V_0$ is given as 
\[
V_{0}=\left(\E\left[W_{i}W_{i}^{'}\right] \right)^{-1} \E\left[\e_{i}^{2}W_{i}W_{i}^{'}\right] \left(\E\left[W_{i}W_{i}^{'}\right] \right)^{-1}.
\]

The first term $\E\left[W_{i}W_{i}^{'}\right]$ can be expressed as
\[\begin{aligned}
\E\left[W_{i}W_{i}^{'}\right]&=\sum_{k=1}^K\E\left[ \rest{W_{i}W_{i}^{'}} Z_{i}=z_{k} \right] p_k \\
&=\sum_{k=1}^K p_k W_k W_k'.
\end{aligned}
\]

Similarly, the middle term $ \E\left[\e_{i}^{2}W_{i}W_{i}^{'}\right]$ is given as
\[\begin{aligned}
 \E\left[\e_{i}^{2}W_{i}W_{i}^{'}\right]&=\sum_{k=1}^K\E\left[\rest{ \e_{i}^{2}W_{i}W_{i}^{'}} Z_{i}=z_{k} \right] p_k \\
&=\sum_{k=1}^K p_kW_kW_k' \E\left[\rest{ \e_{i}^{2} } Z_{i}=z_{k} \right] \\
&=\sum_{k=1}^K p_k\ol{\s}_{\e,k}^{2}W_kW_k'.
\end{aligned}
\]

Therefore, $V_0=V_{0, disc}$.

\end{proof}
\end{comment}

\section{Extension: Nonlinear and Quantile Regressions}
\label{sec:nonlinear}
The identification strategy can be extended to analyze the following endogenous nonlinear regression models without exclusion restrictions:
\[
Y_{i}=f(Z_{i}, X_i, \t_{0})+\e_{i},
\]
where $X_{i}$ is a $d_{x}$-dimensional endogenous regressor, $Z_{i}$ is a $d_{z}$-dimensional exogenous regressor satisfying Assumption \ref{assu:exog}, and the function $f$ is known up to the $d$-dimensional parameter $\t_0$. 
The function $f$ can be nonlinear and nonseparable in the covariates $Z_i$ and $X_i$. 
%Later, we will delve into the endogenous quantile regression model, which serves as an example of endogenous nonlinear regressions.

To identify $\t_0$, we adopt a similar strategy by projecting the entire functional term $f(Z_{i}, X_i, \t_{0})$ onto the included instrument $Z_i$. Let the function $m_0$ be defined as
\[m_0(Z_i, \t_0):=\E[\rest{f(Z_{i}, X_i, \t_{0})}Z_i], \]
 which is identified up to the parameter $\t_0$.
Then under Assumption \ref{assu:exog} (exogeneity) of the included instrument $Z_i$, we have the following conditional moment condition:
\[\E[\rest{Y_i-m_0(Z_i, \t_0)}Z_i]=0.\]
The above moment condition can be viewed as the moment restriction of the standard nonlinear regression model without endogeneity, while treating $m_0(Z_i, \t_0)$ as the nonlinear regressor. The key distinction is that the function $m_0$ needs to be estimated. 
For standard nonlinear regression, local identification of $\t_0$ can be attained under the following condition.

\begin{thm}
\label{thm:nonlinear}
Suppose that Assumption \ref{assu:exog} holds, the function $m_0(z, \cdot)$ is continuously differentiable for any $z$, and $\E\left[\Dif_\t m_0(Z_i, \t_0)\Dif_{\t^{'}} m_0(Z_i, \t_0)\right]$ has full rank, then $\t_0$ is locally identified. 
\end{thm}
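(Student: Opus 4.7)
The plan is to reduce local identification to showing that $\t_0$ is the unique local zero of an auxiliary $\R^d$-valued equation whose Jacobian is exactly the full-rank matrix in the hypothesis. First, I would observe that by the construction of $m_0$ together with Assumption \ref{assu:exog} and the Law of Iterated Expectations, the moment restriction $\E[Y_i - m_0(Z_i,\t) \mid Z_i] = 0$ a.s.\ is equivalent to the deterministic condition $m_0(Z_i,\t) = m_0(Z_i,\t_0)$ a.s.\ in $Z_i$. Local identification therefore reduces to proving that, for some $\delta > 0$, no $\t \in B(\t_0,\delta)\setminus\{\t_0\}$ satisfies $m_0(Z_i,\t) = m_0(Z_i,\t_0)$ a.s.

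To this end, I would introduce the mapping
\[
g(\t) \;:=\; \E\bigl[\Dif_\t m_0(Z_i,\t_0)\,\bigl(m_0(Z_i,\t_0) - m_0(Z_i,\t)\bigr)\bigr],
\]
so that any $\t$ observationally equivalent to $\t_0$ automatically satisfies $g(\t) = 0 = g(\t_0)$. Differentiating under the expectation (justified by a standard dominated-convergence argument supplying an integrable envelope for $\|\Dif_\t m_0(Z_i,\t_0)\Dif_{\t'} m_0(Z_i,\t)\|$ uniformly in $\t$ on a small ball around $\t_0$), I would compute
\[
\Dif_{\t'} g(\t_0) \;=\; -\E\bigl[\Dif_\t m_0(Z_i,\t_0)\Dif_{\t'} m_0(Z_i,\t_0)\bigr],
\]
which is nonsingular by the hypothesized full-rank condition. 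The inverse function theorem then provides a neighborhood of $\t_0$ on which $g$ is a $C^1$ diffeomorphism, so $\t_0$ is the unique local zero of $g$, which in turn forces any observationally equivalent $\t$ close to $\t_0$ to equal $\t_0$.

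The main technical obstacle is the regularity bookkeeping needed to differentiate under the expectation and invoke the inverse function theorem, since the statement imposes only continuous differentiability of $m_0(z,\cdot)$; a mild integrability/envelope condition is implicit here, as is standard in parametric local-identification results. An equivalent but slightly more elementary route, which avoids second-order smoothness altogether, is to study the criterion $Q(\t) := \E[(m_0(Z_i,\t) - m_0(Z_i,\t_0))^2]$. Writing $m_0(Z_i,\t) - m_0(Z_i,\t_0) = \bigl(\int_0^1 \Dif_{\t'} m_0(Z_i,\t_0 + s(\t-\t_0))\,ds\bigr)(\t-\t_0)$ by the mean value theorem, $Q$ takes the form $(\t-\t_0)' A(\t)(\t-\t_0)$ where $A(\t) \to \E[\Dif_\t m_0(Z_i,\t_0)\Dif_{\t'} m_0(Z_i,\t_0)]$ as $\t \to \t_0$; continuity of $A$ and the full-rank hypothesis then deliver positive definiteness of $A(\t)$ on a neighborhood of $\t_0$, hence $Q(\t) > 0$ for $\t \neq \t_0$ there, which is precisely local identification.
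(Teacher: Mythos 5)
Your proof is correct, but it takes a genuinely different route from the paper's. The paper exploits the fact that the conditional moment equation $g(z,\t):=\E[\rest{Y_i}Z_i=z]-m_0(z,\t)=0$ is a deterministic system indexed by $z$: it selects $d$ distinct support points $z_1,\dots,z_d$ so that the stacked $d\times d$ Jacobian $\bigl(\partial_{\t_j}g(z_k,\t_0)\bigr)_{k,j}$ is invertible --- using the same realizations-versus-functions equivalence as in Lemma \ref{lem:d_values} to pass from the full rank of $\E\left[\Dif_\t m_0(Z_i,\t_0)\Dif_{\t'}m_0(Z_i,\t_0)\right]$ to the existence of such points --- and then applies the local inverse theorem to the resulting finite square system. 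Your first route instead premultiplies the moment by the score $\Dif_\t m_0(Z_i,\t_0)$ and integrates out $Z_i$, producing the $\R^d$-valued map $g(\t)$ whose Jacobian at $\t_0$ is exactly $-\E\left[\Dif_\t m_0(Z_i,\t_0)\Dif_{\t'}m_0(Z_i,\t_0)\right]$; this is the classical Rothenberg-style argument and avoids any selection of support points, at the cost of a differentiation-under-the-expectation step (an integrable envelope) that the paper's pointwise construction does not need. Your second route, via the criterion $Q(\t)=\E\bigl[(m_0(Z_i,\t)-m_0(Z_i,\t_0))^2\bigr]$ and the integral form of the mean value theorem, is arguably the cleanest of the three: it dispenses with the inverse function theorem entirely and uses only continuity of the first derivative plus the full-rank condition, though it too implicitly requires enough uniform integrability to make $A(\t)$ continuous. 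All three arguments establish the same local injectivity; the paper's version has the side benefit of connecting the nonlinear result to its ``full-dimensional support'' perspective (Condition \ref{cond:rd_supp}), while yours generalizes more directly to settings where isolating $d$ support points is awkward.
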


Local identification of $\t_0$ ensures that there exists a neighborhood $\T_0$ of $\t_0$ on which $\t_0$ is identified.
This result can be expanded to achieve global identification under additional assumptions, by invoking the global inversion theorem in \cite{ambrosetti1995primer} (Chapter 3, Theorem 1.8).
In the case of general nonlinear regressions, the interpretation of the full-rank condition depends on the specific functional form of $f(Z_{i}, X_i, \t_{0})$. This feature also applies to the standard IV regression with excluded instruments, where the identification conditions are contingent upon the specification of $f$ as well.

\begin{remark}
We focus on the nonlinear regression model, while the analysis also applies to a more general parametric model. Consider that we have the following moment condition:
\[ \tilde{m}_0(Z_i, \t_0):= \E[ \rest{g\left(Y_i, Z_i, X_i, \t_0\right) }Z_i ]=0, \]
 where the function $g$ is known up to the parameter $\t_0$ and $g$ can be nonlinear and nonseparable in all variables $(Y_i, Z_i, X_i)$. The moment function $g$ may be derived from structural models, which is naturally nonlinear in all variables. In terms of the nonlinear regression model, the function $g$ is given as $g\left(Y_i, Z_i, X_i, \t_0\right)=Y_i-f(Z_{i}, X_i, \t_{0})$. Following Theorem \ref{thm:nonlinear},
 the parameter $\t_0$ is locally identified if $\E\left[\Dif_\t \tilde{m}_0(Z_i, \t_0)\Dif_{\t^{'}} \tilde{m}_0(Z_i, \t_0)\right]$ has full rank.
Section \ref{subsec:quantile} explores the endogenous quantile regression model, where the moment condition is nonseparable in all observed variables $(Y_i, Z_i, X_i)$. 

\end{remark}

Similar to linear regression models, we propose two semiparametric two-step nonlinear regression estimators. In the first step, nonparametrically regress $f(X_i, Z_i, \t)$ on $Z_i$ for each $\t$ and get the predicted value $\hat{m}(Z_i, \t)$; nonparametrically regress $Y_i$ on $Z_i$ and get the predicted value $\hat{h}(Z_i)$. In the second step, run the standard nonlinear regression using $Y_i$ and $\hat{h}(Z_i)$ as the dependent variable, respectively:
\[\begin{aligned}
\hat{\t}_{nl}&=\arg\min_{\t \in \T_0} \frac{1}{n}\sum_i \left(Y_i- \hat{m}(Z_i, \t) \right)^2, \\
\hat{\t}^{*}_{nl}&=\arg\min_{\t \in \T_0 } \frac{1}{n}\sum_i \left(\hat{h}(Z_i)- \hat{m}(Z_i, \t) \right)^2.
\end{aligned}
\]
%The two semiparametric estimators $\hat{\t}_{nl}, \hat{\t}^{*}_{nl}$ are also standard semiparametric estimators and the analysis has been developed in the literature.
Similar to Section \ref{subsec:Est_Semipara}, the two estimators $\hat{\t}_{nl}, \hat{\t}^{*}_{nl}$ are $\sqrt{n}$-consistent and have the same asymptotic variance:
\[\sqrt{n} \left( \hat{\t}_{nl}-\t_0 \right) \dto \cN\left({\bf 0},V_{0}\right), \quad \sqrt{n} \left( \hat{\t}_{nl}^{*}-\t_0 \right) \dto \cN\left({\bf 0},V_{0}\right),
\]
with 
\[ V_0:=M_0^{-1} \E\left[ \e_i^2 \Dif_{\t} m_0(Z_i, \t_0)  \Dif_{\t'} m_0(Z_i, \t_0) \right] M_0^{-1}, 
\]
where $M_0=\E\left[\Dif_\t m_0(Z_i, \t_0)\Dif_{\t^{'}} m_0(Z_i, \t_0)\right]$.
A consistent estimator $\hat{V}$ for the variance matrix $V_0$ can be developed by replacing $\e_i$ with $Y_i-f(Z_i, X_i, \hat{\t}_{nl})$, $\Dif_{\t} m_0(Z_i, \t_0)$ with its estimator $\Dif_{\t} \hat{m}(Z_i, \hat{\t}_{nl})$, and expectation with the sample mean.

%Similar to Section \ref{subsec:DiscTSLS}, we can also develop a nonlinear regression estimator that does not involve nonparametric estimation, by discretizing the space of $\cal Z$.

Next, we investigates endogenous quantile regression as an illustration of Theorem \ref{thm:nonlinear}.

\subsection{Endogenous Quantile Regressions}\label{subsec:quantile} 

%A natural example for nonlinear regressions is the 

We study the following endogenous quantile regression model:
%In this example, there is a similar nonlinear relevance condition, which is sufficient for the full rank condition in Theorem \ref{thm:nonlinear}.
\[\begin{aligned}
&Y_{i}=\a_{0}+Z_{i}^{'}\b_{0}+X_{i}^{'}\g_{0}+\e_{i},
&\ \text{Quan}_{\tau}(\rest{\e_i} Z_i)=0,
\end{aligned}
\]
where $\text{Quan}_{\tau}(\rest{\e_i} Z_i)$ denotes the $\tau$-th quantile of the conditional distribution of $\e_i$ given $Z_i$. In this example, $X_i$ is the potentially endogenous regressor and $Z_i$ is the exogenous regressor that satisfies the conditional quantile restriction. 
We still study the identification and estimation of the coefficient $\t_0$ using only the included instrument $Z_i$.

By the quantile exogeneity of $Z_i$, it yields the conditional moment restriction as follows:
\[ \E\left[\rest{\ind\{Y_i\leq \a_{0}+Z_{i}^{'}\b_{0}+X_{i}^{'}\g_{0}\}-\tau} Z_i \right]=0. \]
The moment condition above is naturally nonlinear and nonseparable in all variables $(Y_i, Z_i, X_i)$. We project the whole indicator term on $Z_i$ and 
define the function $m_0$  as
\[m_0(Z_i, \t_0):=\E\left[\rest{\ind\{Y_i\leq \a_{0}+Z_{i}^{'}\b_{0}+X_{i}^{'}\g_{0}\}} Z_i  \right]. \]
According to Theorem \ref{thm:nonlinear}, the coefficient $\t_0$ is locally identified if $\E\left[\Dif_\t m_0(Z_i, \t_0)\Dif_{\t^{'}} m_0(Z_i, \t_0)\right]$ has full rank. Lemma \ref{lem:quantile}  presents an alternative condition that is equivalent to this full rank condition, making it easier to interpret. 

\begin{assumption}[Continuous Errors]
\label{assu:eps_pdf}The error term $\e_{i}$ conditional on $(x, z)$ is continuously distributed
with the density function $f_{\e\mid X, Z}\left(\rest{\e}x,z\right)$. 
\end{assumption}
Assumption \ref{assu:eps_pdf} is a standard assumption that simplifies the calculation of
 $\Dif_{\t}m_0 \left(Z_i,\t_0\right)$.
 
 %that allows us to express conditional expectations with respect to $\e_{i}$ in terms of integrals, thus being able to characterize $\Dif_{\t}m\left(z,\t_0\right)$ using the Fundamental Theorem of Calculus. 

\begin{lem}
\label{lem:quantile}
Suppose that Assumption \ref{assu:eps_pdf} holds and  $f_{\rest{\e}{Z}}(\rest{0}z)>0$ for any $z\in \cal Z$, then $\E\left[\Dif_\t m_0(Z_i, \t_0)\Dif_{\t^{'}} m_0(Z_i, \t_0)\right]$ has full rank if and only if $1, Z_i$, and $\tilde{\pi}_0\left(Z_i \right):=\E\left[\rest{X_i} Z_i, \e_i=0\right]$ are not multicollinear. 
\end{lem}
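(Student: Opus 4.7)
The plan is to compute $\Dif_\t m_0(Z_i,\t_0)$ explicitly and show that it is proportional, in a pointwise-nonzero way, to the vector $\tilde{W}_i:=(1,Z_i',\tilde{\pi}_0(Z_i)')'$. Once this is done, the matrix $\E[\Dif_\t m_0(Z_i,\t_0)\Dif_{\t'}m_0(Z_i,\t_0)]$ is a positive-density-weighted Gram matrix of $\tilde{W}_i$, whose full-rank property coincides exactly with the no-multicollinearity of $(1,Z_i',\tilde{\pi}_0(Z_i)')$.

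First, I would rewrite $m_0$ in a form amenable to differentiation. Substituting $Y_i=\a_0+Z_i'\b_0+X_i'\g_0+\e_i$ and writing $c(\t,z,x):=(\a-\a_0)+z'(\b-\b_0)+x'(\g-\g_0)$, conditional on $(X_i,Z_i)$ the event $\{Y_i\leq \a+Z_i'\b+X_i'\g\}$ becomes $\{\e_i\leq c(\t,Z_i,X_i)\}$, so by the law of iterated expectations,
\[
m_0(Z_i,\t)=\E\bigl[F_{\e\mid X,Z}(c(\t,Z_i,X_i)\mid X_i,Z_i)\,\bigm|\, Z_i\bigr].
\]
At $\t=\t_0$ we have $c=0$. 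Under Assumption \ref{assu:eps_pdf}, the inner CDF is differentiable in its first argument with derivative $f_{\e\mid X,Z}(\cdot\mid X_i,Z_i)$, and $\Dif_\t c(\t_0,Z_i,X_i)=(1,Z_i',X_i')'$, so a dominated-convergence argument (which I would verify using boundedness of the density near zero together with the finite-moment requirements on $X_i,Z_i$) gives
\[
\Dif_\t m_0(Z_i,\t_0)=\E\bigl[f_{\e\mid X,Z}(0\mid X_i,Z_i)\,(1,Z_i',X_i')'\,\bigm|\, Z_i\bigr].
\]

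Next, the key algebraic step is a Bayes-rule identity. By the definition of conditional densities,
\[
f_{X\mid Z,\,\e=0}(x\mid z)=\frac{f_{X\mid Z}(x\mid z)\,f_{\e\mid X,Z}(0\mid x,z)}{f_{\e\mid Z}(0\mid z)},
\]
so $\E[X_i f_{\e\mid X,Z}(0\mid X_i,Z_i)\mid Z_i]=f_{\e\mid Z}(0\mid Z_i)\,\tilde{\pi}_0(Z_i)$ and $\E[f_{\e\mid X,Z}(0\mid X_i,Z_i)\mid Z_i]=f_{\e\mid Z}(0\mid Z_i)$. Combining these yields the clean formula
\[
\Dif_\t m_0(Z_i,\t_0)=f_{\e\mid Z}(0\mid Z_i)\,\tilde{W}_i,\qquad \tilde{W}_i:=(1,Z_i',\tilde{\pi}_0(Z_i)')'.
\]

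Finally, $\E[\Dif_\t m_0(Z_i,\t_0)\Dif_{\t'}m_0(Z_i,\t_0)]=\E\bigl[f_{\e\mid Z}(0\mid Z_i)^2\,\tilde{W}_i\tilde{W}_i'\bigr]$. For any nonzero $v\in\R^d$, $v'\E[f_{\e\mid Z}(0\mid Z_i)^2\tilde{W}_i\tilde{W}_i']v=\E[f_{\e\mid Z}(0\mid Z_i)^2(v'\tilde{W}_i)^2]$. Since $f_{\e\mid Z}(0\mid z)>0$ on $\cal Z$ by assumption, this vanishes iff $v'\tilde{W}_i=0$ almost surely, i.e., iff $(1,Z_i',\tilde{\pi}_0(Z_i)')$ are multicollinear. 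This gives the equivalence in both directions.

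The main obstacle is the first step: justifying the interchange of differentiation and the conditional expectation, and correctly handling the fact that the argument of the CDF depends on $X_i$ (so naively differentiating $\Phi(c(\t,Z_i,X_i))$ pulls out a random weight). Once the Bayes identity converts the $X_i$-weighted density into the conditional-on-$\e=0$ expectation $\tilde{\pi}_0(Z_i)$, the remainder is immediate. I would not need any additional assumption beyond Assumption \ref{assu:eps_pdf} and the strict positivity condition $f_{\e\mid Z}(0\mid z)>0$ stated in the lemma.
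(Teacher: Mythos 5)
Your proposal is correct and follows essentially the same route as the paper's proof: compute $\Dif_\t m_0(Z_i,\t_0)=\E[f_{\e\mid X,Z}(0\mid X_i,Z_i)(1,Z_i',X_i')'\mid Z_i]$, apply the Bayes-rule identity to rewrite this as $f_{\e\mid Z}(0\mid Z_i)\,(1,Z_i',\tilde{\pi}_0(Z_i)')'$, and conclude via the strictly positive density weight. The only differences are cosmetic (you phrase the derivative through the conditional CDF rather than the integral of the indicator, and you spell out the final quadratic-form equivalence and the dominated-convergence justification a bit more explicitly than the paper does).
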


In the case where the endogenous regressor $X_i$ is a scalar, the full-rank condition is equivalent to the nonlinearity of $\tilde{\pi}_0\left(Z_i \right)$. This nonlinearity condition is analogous to the nonlinearity requirement of $\pi_0\left(Z_i \right)$ in the linear regression model, except it is also conditional on $\e_i=0$. Similarly, this nonlinearity relationship naturally arises when the endogenous regressor $X_i$ is binary or discrete.

Based on the identification results, a natural two-step quantile regression estimator $\hat{\t}_{q}$ can be obtained: in the first step, nonparametrically regress $\ind\{Y_i\leq \a+Z_{i}^{'}\b +X_{i}^{'}\g\}$ on $Z_i$ for each $\t$ and compute $\hat{m}(Z_i, \t)$; in the second step, obtain the quantile estimator $\hat{\t}_q$ as
\[
\hat{\t}_{q}:=\arg\min_{\t\in \T_0} \frac{1}{n}\sum_i \left(\hat{m}(Z_i, \t)-\tau \right)^2.
\]

Writing $S_i:=f_{\e\mid Z}\left(\rest 0 Z_i\right)(1, Z_i', \tilde{\pi}_0\left(Z_i \right)' )'$, the asymptotic distribution of the two-step quantile estimator $\hat{\t}_q$ is given by
\[\sqrt{n} \left( \hat{\t}_{q}-\t_0 \right) \dto \cN\left({\bf 0},V_{0, q}\right),  \]
with
$
V_{0, q}:=\left( \E\left[S_i S_i' \right] \right)^{-1} \E\left[(\ind\{\e_i\leq 0 \}-\tau )^2  S_i S_i'  \right] \left( \E\left[S_i S_i' \right] \right)^{-1}
=\tau (1-\tau) \left( \E\left[  S_i S_i' \right] \right)^{-1}.$

In contrast to the standard quantile regression without endogeneity, our approach allows for potential endogeneity in covariate $X_i$. %and addresses it by projecting the entire term $\ind\{Y_i\leq \a+Z_{i}^{'}\b +X_{i}^{'}\g\}$ onto the included regressor $Z_i$. 
While \cite{chernozhukov2005iv} examines endogenous quantile models with excluded instruments, the key distinction is that our method establishes identification of $\t_0$ \textit{using solely included regressors}. Our approach can be viewed as leveraging the derivative term $\Dif_{\t} m_0(Z_i, \t_0)$ as an instrumental function, which is more informative than using $Z_i$ as an instrument since it exploits the dependence between the indicator term $\ind\{Y_i\leq \a_{0}+Z_{i}^{'}\b_{0}+X_{i}^{'}\g_{0}\}$ and the included regressor $Z_i$. Thus, our approach enables identification without exclusion restrictions. On the other hand, our method does require a parametric specification for $Y_i$, whereas \cite{chernozhukov2005iv} allows for nonparametric identification with excluded instruments.

\section{\label{sec:Simulation}Simulation} 
This section examines the finite sample performances of $\hat{\t}, \hat{\t}^*$, and $\hat{\t}_{disc}$, the three estimators proposed in Section \ref{sec:Estimation}. We compare their performance with both the standard 2SLS estimator $\hat{\t}_{2sls}$, which treats the included regressor $Z_i$ as an excluded instrument, and the OLS estimator $\hat{\t}_{ols}$ obtained by regressing $Y_i$ on $(1, Z_i', X_i')$. We report four finite-sample performance measures for every estimator: ``Bias'', ``SD" (standard deviation), ``RMSE" (root mean squared error), and ``CP" (coverage probability of 95\% confidence interval). The confidence intervals are constructed using the standard $\pm 1.96\times\text{SE}$ formula, where the standard error estimates SE are obtained based on the asymptotic variance estimators proposed in Section \ref{sec:Estimation}, all of which allow for heteroskedasticity.\footnote{The standard errors of the OLS estimator are also calculated under heteroskedasticity.} The four performance measures are computed based on $B=2000$ simulations, and we table the performance measures under three sample sizes: $n = 250, 500, 1000$.

\subsection{Binary $X_i$ with Two Binary $Z_{i1}, Z_{i2}$} \label{subsec: Sim1}

Our first simulation setup is as follows. In this setup, there are two binary included IVs $Z_{i1}, Z_{i2}$, randomly generated from $Bernoulli(0.5)$ independently.  The endogenous regressor $X_i$ and the outcome variable $Y_i$ are generated by
\[\begin{aligned}
X_i&=\mathbbm{1}\{2Z_{i1}Z_{i2}+2(1-Z_{i1})(1-Z_{i2})-1 \geq u_i\}, \\
Y_i&=\a_0+\b_{01}Z_{i1}+\b_{02}Z_{i2}+\g_0 X_i+\epsilon_i,
\end{aligned}
\]
where $\a_0=\g_0=1$. 
To compare with the 2SLS estimator, which treats $(Z_{i1}, Z_{i2})$ as excluded instruments, we examine different values of the coefficients of the included regressors $(Z_{i1}, Z_{i2})$: $\b_{01}=\b_{02}= \{1, 0.5, 0\}$. 
The values of the coefficients $(\b_{01}, \b_{02})$ represent the degree of violation of the exclusion restriction, and the 2SLS estimator is only consistent when 
 $\b_{01}=\b_{02}=0$. 

The two error terms $(\epsilon_i, u_i)$ are drawn, independently from $(Z_{i1}, Z_{i2})$, from the joint normal distribution with mean $(0, 0)$, variance $(1, 1)$, and correlation parameter $\rho$, which captures the extent of endogeneity between $X_i$ and $\e_i$. We also consider different levels of endogeneity $\rho\in \{0.5, 0,  -0.5\}$, with $\rho=0$ corresponding to the case with no endogeneity issue (where OLS becomes unbiased and consistent).

Since the instrument $Z_i=(Z_{i1}, Z_{i2})$ is discrete, we use the sample averages to estimate the conditional expectations:
\[
\hat{\pi}(z)=\frac{\sum_{i=1}^n X_i \ind\{Z_i=z\} } {\sum_{i=1}^n\ind\{Z_i=z\} }, \quad
\hat{h}(z)=\frac{\sum_{i=1}^n Y_i \ind\{Z_i=z\} } {\sum_{i=1}^n\ind\{Z_i=z\} }.
\]
In this case, the three estimators $\hat{\t}, \hat{\t}^{*}, \hat{\t}_{disc}$ are numerically equivalent. 

%The performance of the four different estimators for the four parameters $(\hat{\a}, \hat{\b}_1, \hat{\b}_2,  \hat{\g})$ are reported in the following tables.

Tables \ref{table:Sim1_gamma_exc} and \ref{table:Sim1_gamma} report the performance of the five different estimators for $\g_0$ under various degrees of exclusion violations (with endogeneity $\rho=0.5$) and different levels of endogeneity (with $\b_{01}=\b_{02}=1$), respectively.\footnote{Tables \ref{table:Sim1_alpha_exc}-\ref{table:Sim1_beta2} in the Online Appendix report the performances of the estimators for the remaining coefficients $\a_0, \b_{01}$, and $\b_{02}$.}
The results demonstrate the robust performance of our estimators  in the presence of violations of the exclusion restriction and endogeneity. The root mean squared error (RMSE) of the three estimators are reasonably small, and the coverage probabilities of the 95\% confidence intervals are close to the nominal level.

In contrast, the 2SLS estimator has a very large standard deviation and bias in this simulation setup, even when the exclusion is satisfied $\b_{01}=\b_{02}=0$, due to the small determinant of the matrix $X'P_ZX$.
Additionally, the OLS estimator has a very small (close to zero) coverage probability with the presence of endogeneity.  
Our estimators' advantages become more significant as the sample size $n$ increases due to the fast reduction in both bias and standard deviation, but the 2SLS and OLS estimators remain biased regardless of sample size. Also, the $\sqrt{n}$ convergence rate of our three estimators, are strongly demonstrated by the almost exact 50\% reduction in SD and RMSE from $n=250$ to $n=1000$.

\begin{sidewaystable}[!htbp] 
%\begin{table}[!htbp] 
%\setlength{\tabcolsep}{2mm}{
\caption{Bin $X$ with Bin $Z_1, Z_2$: Performance of $\hat{\g}$ (Coef. of $X$)  \\
Different Degrees of Exclusion Violations}
 \label{table:Sim1_gamma_exc}
\begin{tabular}{c|cccc|cccc|cccc}
\hline
\hline
& \multicolumn{4}{c|}{$n=250$ } & \multicolumn{4}{c|}{$n=500$ }& \multicolumn{4}{c}{$n=1000$ }  \\
\hline
 Est& Bias & SD & RMSE & CP &Bias & SD & RMSE & CP & Bias & SD & RMSE & CP \\
\hline
&\multicolumn{12}{c}{ $\b_{01}=\b_{02}=1$}  \\
\hline
$\hat{\theta}$ &   -0.003 &  0.182 &  0.182 &  0.956&    0.002&   0.137&   0.137&   0.939  & 0.005&   0.094  & 0.094  & 0.952
  \\[1.5ex]
 $\hat{\theta}^{*}$ &  -0.003 &  0.182 &  0.182 &  0.956&    0.002&   0.137&   0.137&   0.939  & 0.005&   0.094  & 0.094  & 0.952
  \\[1.5ex]
 $\hat{\theta}_{disc}$ &  -0.003 &  0.182 &  0.182 &  0.956&    0.002&   0.137&   0.137&   0.939  & 0.005&   0.094  & 0.094  & 0.952
  \\[1.5ex]
   $\hat{\theta}_{2sls}$ &  -0.826 &  34.302 &  34.312 &  0.889 & -3.105 &  116.242 & 116.284 &  0.893 &  2.563 & 65.802 & 65.852 & 0.878
  \\[1.5ex]
  $\hat{\theta}_{ols}$ &  -0.485&   0.121  & 0.500    & 0.024 &  -0.485  & 0.088&   0.493 &  0.000 &  -0.485  & 0.062 &  0.489  & 0.000 
       \\[1.5ex]
 \hline 
&\multicolumn{12}{c}{ $\b_{01}=\b_{02}=0.5$} \\
\hline
$\hat{\theta}$ &  -0.003  & 0.182 &  0.182 & 0.956&   0.002 &  0.137 &  0.137  &0.939&   0.005  & 0.094  & 0.094&  0.952
  \\[1.5ex]
 $\hat{\theta}^{*}$ &   -0.003  & 0.182 &  0.182 & 0.956&   0.002 &  0.137 &  0.137  &0.939&0.005  & 0.094  & 0.094&  0.952
  \\[1.5ex]
 $\hat{\theta}_{disc}$ &   -0.003  & 0.182 &  0.182 & 0.956&   0.002 &  0.137 &  0.137  &0.939&0.005  & 0.094  & 0.094&  0.952
  \\[1.5ex]
     $\hat{\theta}_{2sls}$ &   -0.678 & 17.872 & 17.885 & 0.932&-1.759&  59.329 & 59.355 & 0.917& 0.984 & 32.927 & 32.942 & 0.896
  \\[1.5ex]
  $\hat{\theta}_{ols}$ &   -0.485  & 0.121 &  0.500 &  0.024& -0.485  & 0.088   &0.493 & 0.000  &  -0.485  & 0.062 &  0.489  &0.000 
     \\[1.5ex]
 \hline
 &\multicolumn{12}{c}{ $\b_{01}=\b_{02}=0$} \\
\hline
$\hat{\theta}$ &  -0.003&  0.182 & 0.182&  0.956&  0.002&   0.137&   0.137 &  0.939 & 0.005 & 0.094  &0.094  &0.952
  \\[1.5ex]
 $\hat{\theta}^{*}$ &  -0.003&  0.182 & 0.182&  0.956&  0.002&   0.137&   0.137 &  0.939& 0.005 & 0.094  &0.094  &0.952
   \\[1.5ex]
 $\hat{\theta}_{disc}$ & -0.003&  0.182 & 0.182&  0.956&  0.002&   0.137&   0.137 &  0.939& 0.005 & 0.094  &0.094  &0.952
   \\[1.5ex]
      $\hat{\theta}_{2sls}$ & -0.530 &  3.734&  3.771 & 0.991&-0.413 &  4.745&   4.763  & 0.996&  -0.594 & 3.571 & 3.620  & 0.993
  \\[1.5ex]
  $\hat{\theta}_{ols}$ &   -0.485 & 0.121&  0.500 &   0.024& -0.485&   0.088  & 0.493&   0.000  & -0.485 & 0.062 & 0.489 & 0.000
     \\[1.5ex]
 \hline
\end{tabular}
%\end{table}
\end{sidewaystable}

\begin{sidewaystable}[!htbp] 
%\begin{table}[!htbp] 
%\setlength{\tabcolsep}{2mm}{
\caption{Bin $X$ with Bin $Z_1, Z_2$: Performance of $\hat{\g}$ (Coef. of $X$)\\ Different Degrees of Endogeneity}
 \label{table:Sim1_gamma}
\begin{tabular}{c|cccc|cccc|cccc}
\hline
\hline
& \multicolumn{4}{c|}{$n=250$ } & \multicolumn{4}{c|}{$n=500$ }& \multicolumn{4}{c}{$n=1000$ }  \\
\hline
 Est& Bias & SD & RMSE & CP &Bias & SD & RMSE & CP & Bias & SD & RMSE & CP \\
\hline
&\multicolumn{12}{c}{$\rho=0.5$}  \\
\hline
$\hat{\theta}$ &   -0.003 &  0.182 &  0.182 &  0.956&    0.002&   0.137&   0.137&   0.939  & 0.005&   0.094  & 0.094  & 0.952
  \\[1.5ex]
 $\hat{\theta}^{*}$ &  -0.003 &  0.182 &  0.182 &  0.956&    0.002&   0.137&   0.137&   0.939  & 0.005&   0.094  & 0.094  & 0.952
  \\[1.5ex]
 $\hat{\theta}_{disc}$ &  -0.003 &  0.182 &  0.182 &  0.956&    0.002&   0.137&   0.137&   0.939  & 0.005&   0.094  & 0.094  & 0.952
  \\[1.5ex]
   $\hat{\theta}_{2sls}$ &  -0.826 &  34.302 &  34.312 &  0.889 & -3.105 &  116.242 & 116.284 &  0.893 &  2.563 & 65.802 & 65.852 & 0.878
  \\[1.5ex]
  $\hat{\theta}_{ols}$ &  -0.485&   0.121  & 0.500    & 0.024 &  -0.485  & 0.088&   0.493 &  0.000 &  -0.485  & 0.062 &  0.489  & 0.000 
       \\[1.5ex]
 \hline 
&\multicolumn{12}{c}{$\rho=0$} \\
\hline
$\hat{\theta}$ &  -0.007 &  0.181&   0.181&   0.956&  -0.000   &  0.137&   0.137 &  0.938&  0.004&   0.093 &  0.093  & 0.952
  \\[1.5ex]
 $\hat{\theta}^{*}$ &     -0.007 &  0.181&   0.181&   0.956&  -0.000   &  0.137&   0.137 &  0.938&  0.004&   0.093 &  0.093  & 0.952
  \\[1.5ex]
 $\hat{\theta}_{disc}$ &   -0.007 &  0.181&   0.181&   0.956&  -0.000   &  0.137&   0.137 &  0.938&  0.004&   0.093 &  0.093  & 0.952
  \\[1.5ex]
     $\hat{\theta}_{2sls}$ &   -1.315&  32.234 & 32.261 & 0.904 &  -0.301 & 38.531  &38.532 & 0.894 & 0.362 & 74.036 & 74.036 &  0.877
  \\[1.5ex]
  $\hat{\theta}_{ols}$ &  -0.002 &  0.127  &0.127  &0.948 & -0.001&  0.091  &0.091 & 0.946 & 0.002  &0.064  &0.064  &0.942
     \\[1.5ex]
 \hline
 &\multicolumn{12}{c}{$\rho=-0.5$} \\
\hline
$\hat{\theta}$ &  -0.011 & 0.182  &0.183&  0.954&  -0.003&  0.138&  0.138 & 0.937 &0.003 & 0.093 & 0.093  &0.950
  \\[1.5ex]
 $\hat{\theta}^{*}$ &  -0.011 & 0.182  &0.183&  0.954&  -0.003&  0.138&  0.138 & 0.937 &0.003 & 0.093 & 0.093  &0.950
  \\[1.5ex]
 $\hat{\theta}_{disc}$ &   -0.011 & 0.182  &0.183&  0.954&  -0.003&  0.138&  0.138 & 0.937 &0.003 & 0.093 & 0.093  &0.950
   \\[1.5ex]
      $\hat{\theta}_{2sls}$ & 0.913 & 59.738 & 59.745 & 0.900 & 1.477&  58.951 & 58.969 & 0.882 &   0.576 &  74.167 &  74.169 &  0.872
  \\[1.5ex]
  $\hat{\theta}_{ols}$ &   0.483 &  0.124&   0.499  & 0.023  &  0.485 &  0.087 &  0.492&   0.002  & 0.485 &  0.062 &  0.489  & 0.000
     \\[1.5ex]
 \hline
\end{tabular}
%\end{table}
\end{sidewaystable}

\subsection{Binary $X_i$ with Continuous $Z_i$} \label{subsec: Sim2}

In this subsection, we consider a different DGP in which there is a continuous IV $Z_i$ drawn from $\mathcal{N}(0, 2)$. The variables $X_i$ and $Y_i$ are generated by
\[
\begin{aligned}
X_i =\mathbbm{1}\{2Z_i \geq u_i\}, \quad
Y_i =\a_0+\b_0Z_i+\g_0 X_i+\epsilon_i,
\end{aligned}
 \]
 where $\a_0=\g_0=1$, and we consider three values for $\b_0=\{1, 0.5, 0\}$. The error terms $(u_i,\e_i)$ are again drawn from the joint normal distribution as in Section \ref{subsec: Sim1}, independently from $Z_i$, with $\rho=\{0.5, 0, -0.5\}$. 

For the two semiparametric estimators $\hat{\t}$ and $\hat{\t}^*$, we use the Nadaraya-Watson kernel estimator to nonparametrically estimate $\pi_0$ and $h_0$ in the first stage. We use the standard Gaussian kernel and set the bandwidth based on least square cross validation. We also find that the performances of the final estimators do not change much with other choices of kernels (e.g., an Epanechnikov kernel). For the discretization-based estimator $\hat{\t}_{disc}$, we partition the support of $Z_i$ into $K=10$ cells defined by the (empirical) decile ranges. Our results stay similar if $K$ is set to be larger, say, $30$.

As shown in Tables \ref{table:Sim2_gamma_exc} and \ref{table:Sim2_gamma} (and Tables \ref{table:Sim2_alpha_exc}-\ref{table:Sim2_beta} in the Online Appendix), all our three estimators perform uniformly well across different values of $\b_0$ and $\rho$. Although the two semiparametric estimators $\hat{\t}, \hat{\t}^{*}$ involve nonparametric regressions in the first stage, they have reasonably good performances even with a small sample size ($n=250$), with the corresponding CI coverage probabilities for $\g_0$ close to their nominal level 95\%. In contrast, the 2SLS and OLS estimators are significantly biased under exclusion restriction violation and endogeneity, and their CI coverage probabilities are close to zero for all the sample sizes.

 We also find that, 
 %while $\hat{\t}$ and $\hat{\t}^{*}$ share the same asymptotic variance matrix and $\hat{\t}$ involves one fewer nonparametric regression in the first stage, $\hat{\t}^{*}$ tends to perform slightly better than $\hat{\t}$.
the discretization-based estimator $\hat{\t}_{disc}$ performs (surprisingly) well in finite samples. While the two semiparametric estimators $\hat{\t}$ and $\hat{\t}^*$ perform well overall, their small-sample biases induced by the first-stage nonparametric regressions are fairly noticeable when compared to that of $\hat{\t}_{disc}$, especially in the estimation of $\g_0$ under $n = 250$. In contrast, the loss of asymptotic efficiency in $\hat{\t}_{disc}$ seems to be fairly small and more than compensated by its smaller finite-sample bias.

%There is a readily used package for Nadaraya-Watson estimator in Julia \footnote{https://github.com/panlanfeng/KernelEstimator.jl}.

\begin{sidewaystable}[!htbp] 
%\begin{table}[!htbp] 
\centering
\caption{Bin $X$ with Cts $Z$: Performance of $\hat{\g}$ (Coef. of $X$)  \\
Different Degrees of Exclusion Violations}
 \label{table:Sim2_gamma_exc}
\begin{tabular}{c|cccc|cccc|cccc}
\hline
\hline
& \multicolumn{4}{c|}{$n=250$ } & \multicolumn{4}{c|}{$n=500$ }& \multicolumn{4}{c}{$n=1000$ }  \\
\hline
 Est& Bias & SD & RMSE & CP &Bias & SD & RMSE & CP & Bias & SD & RMSE & CP \\
\hline
&\multicolumn{12}{c}{$\b_0=1$}  \\
\hline
$\hat{\theta}$ &   0.044 & 0.326 & 0.329 & 0.942 & 0.036  &0.220 &  0.223 & 0.954 &  0.024 &   0.155  &  0.156  &  0.948
  \\[1.5ex]
 $\hat{\theta}^{*}$ &   -0.099 & 0.306 & 0.321 & 0.942 & -0.072 & 0.209 & 0.221 & 0.948 & -0.059   & 0.148  &  0.159  &  0.942
  \\[1.5ex]
 $\hat{\theta}_{disc}$ &   -0.031 &  0.321 &  0.323 & 0.950 & -0.016 & 0.223 & 0.223 & 0.955 & -0.014   & 0.161  &  0.161 &   0.951
  \\[1.5ex]
    $\hat{\theta}_{2sls}$ &   5.165  &0.298 & 5.174 & 0.000 &  5.159  &0.200 &   5.163&  0.000 &  5.165  & 0.147&  5.167&  0.000 
  \\[1.5ex]
  $\hat{\theta}_{ols}$ &  -0.477 &  0.198 &  0.516 & 0.322 & -0.485 & 0.140  & 0.505 & 0.060 & -0.486   & 0.097  &  0.495 &   0.000 
       \\[1.5ex]
 \hline 
&\multicolumn{12}{c}{$\b_0=0.5$} \\
\hline
$\hat{\theta}$ &   0.044 & 0.326  &0.329 & 0.942&   0.036  &0.220 &  0.223  &0.954 & 0.024&  0.155 & 0.156 & 0.948
\\[1.5ex]
 $\hat{\theta}^{*}$ &   -0.150 &  0.293 & 0.329 & 0.930& -0.114 & 0.204 & 0.234  &0.936& -0.092&  0.146 & 0.173 & 0.906
  \\[1.5ex]
 $\hat{\theta}_{disc}$ &   -0.031 & 0.321&  0.323&  0.950& -0.016 & 0.223 & 0.223 & 0.955& -0.014 & 0.161 & 0.161 & 0.951
  \\[1.5ex]
      $\hat{\theta}_{2sls}$ &   2.584 & 0.206 & 2.592 & 0.000  &  2.578&  0.140   &2.582 & 0.000 &  2.582 & 0.102 & 2.584 & 0.000  
  \\[1.5ex]
  $\hat{\theta}_{ols}$ &   -0.477 & 0.198&  0.516&  0.322& -0.485& 0.140 &  0.505  &0.060 & -0.486&  0.097&  0.495 & 0.000
     \\[1.5ex]
 \hline
 &\multicolumn{12}{c}{$\b_0=0$} \\
\hline
$\hat{\theta}$ &    0.044   &0.326  & 0.329  & 0.942 & 0.036  &0.220&   0.223 & 0.954&  0.024 & 0.155 & 0.156 & 0.948
  \\[1.5ex]
 $\hat{\theta}^{*}$ & -0.287 &  0.310  &  0.422  & 0.828 &-0.220 &  0.223  &0.313&  0.802& -0.168&  0.160 &  0.232&  0.782
  \\[1.5ex]
 $\hat{\theta}_{disc}$ &  -0.031 &  0.321 &  0.323 &  0.950 & -0.016 & 0.223 & 0.223 & 0.955& -0.014&  0.161 & 0.161&  0.951
   \\[1.5ex]
       $\hat{\theta}_{2sls}$ & 0.003  & 0.164  & 0.164   &0.948 & -0.002 & 0.113  &0.113  &0.954& -0.001 & 0.081&  0.081 & 0.950 
  \\[1.5ex]
  $\hat{\theta}_{ols}$ &  -0.477   &0.198 &  0.516 &  0.322 &-0.485 & 0.140  & 0.505  &0.060 & -0.486 & 0.097 & 0.495 & 0.000  
     \\[1.5ex]
 \hline
\end{tabular}
%\end{table}
\end{sidewaystable}

\begin{sidewaystable}[!htbp] 
%\begin{table}[!htbp] 
\centering
\caption{Bin $X$ with Cts $Z$: Performance of $\hat{\g}$ (Coef. of $X$) \\ Different Degrees of Endogeneity}
 \label{table:Sim2_gamma}
\begin{tabular}{c|cccc|cccc|cccc}
\hline
\hline
& \multicolumn{4}{c|}{$n=250$ } & \multicolumn{4}{c|}{$n=500$ }& \multicolumn{4}{c}{$n=1000$ }  \\
\hline
 Est& Bias & SD & RMSE & CP &Bias & SD & RMSE & CP & Bias & SD & RMSE & CP \\
\hline
&\multicolumn{12}{c}{$\rho=0.5$}  \\
\hline
$\hat{\theta}$ &   0.044 & 0.326 & 0.329 & 0.942 & 0.036  &0.220 &  0.223 & 0.954 &  0.024 &   0.155  &  0.156  &  0.948
  \\[1.5ex]
 $\hat{\theta}^{*}$ &   -0.099 & 0.306 & 0.321 & 0.942 & -0.072 & 0.209 & 0.221 & 0.948 & -0.059   & 0.148  &  0.159  &  0.942
  \\[1.5ex]
 $\hat{\theta}_{disc}$ &   -0.031 &  0.321 &  0.323 & 0.950 & -0.016 & 0.223 & 0.223 & 0.955 & -0.014   & 0.161  &  0.161 &   0.951
  \\[1.5ex]
    $\hat{\theta}_{2sls}$ &   5.165  &0.298 & 5.174 & 0.000 &  5.159  &0.200 &   5.163&  0.000 &  5.165  & 0.147&  5.167&  0.000 
  \\[1.5ex]
  $\hat{\theta}_{ols}$ &  -0.477 &  0.198 &  0.516 & 0.322 & -0.485 & 0.140  & 0.505 & 0.060 & -0.486   & 0.097  &  0.495 &   0.000 
       \\[1.5ex]
 \hline 
&\multicolumn{12}{c}{$\rho=0$} \\
\hline
$\hat{\theta}$ &   0.074  &0.325 & 0.334 & 0.933 & 0.056 & 0.218 & 0.225  &0.948&  0.035 & 0.153 & 0.157 & 0.943
  \\[1.5ex]
 $\hat{\theta}^{*}$ &    -0.093 & 0.305 & 0.319 & 0.946 & -0.065  &0.207 & 0.218 & 0.953 & -0.056 & 0.147 & 0.158  &0.943
  \\[1.5ex]
 $\hat{\theta}_{disc}$ &   0.001 & 0.325&  0.325 & 0.953 &  0.002 & 0.222 & 0.222 & 0.958& -0.005 & 0.161&  0.161 & 0.954
  \\[1.5ex]
      $\hat{\theta}_{2sls}$ &  5.165  & 0.298  & 5.173 &  0.000  &   5.158 & 0.198&  5.161 & 0.000  & 5.165 & 0.147 & 5.167 & 0.000  
  \\[1.5ex]
  $\hat{\theta}_{ols}$ &  -0.004 & 0.204  &0.204 & 0.938 & -0.003 & 0.146 & 0.146 & 0.940 &-0.003 & 0.101 & 0.101&  0.948
     \\[1.5ex]
 \hline
 &\multicolumn{12}{c}{$\rho=-0.5$} \\
\hline
$\hat{\theta}$ &   0.108&  0.323 & 0.340   &0.917 &  0.074 & 0.217 & 0.230&   0.937& 0.046  &0.153 & 0.160  & 0.936
  \\[1.5ex]
 $\hat{\theta}^{*}$ &  -0.081&  0.302 & 0.312 & 0.955 & -0.060 &  0.206 & 0.214 & 0.959 & -0.054 & 0.147 & 0.156 & 0.947
  \\[1.5ex]
 $\hat{\theta}_{disc}$ &  0.038 & 0.321&  0.324 & 0.952 &  0.019 & 0.222 & 0.223 & 0.960 &   0.004 & 0.161 & 0.161 & 0.952
   \\[1.5ex]
       $\hat{\theta}_{2sls}$ &  5.163 & 0.294 & 5.172  &0.000 & 5.157 & 0.196 & 5.161&  0.000  &5.165  & 0.147 & 5.167  & 0.000 
  \\[1.5ex]
  $\hat{\theta}_{ols}$ &   0.481 & 0.197 & 0.519 & 0.298 &  0.477 & 0.139 & 0.497 & 0.068 &  0.479 & 0.098 & 0.489 & 0.002
     \\[1.5ex]
 \hline
\end{tabular}
%\end{table}
\end{sidewaystable}

\subsection{Continuous $X_i$ with Continuous IV $Z_i$} \label{subsec:Sim3}

In this subsection, we consider another DGP setup, where the endogenous covariate $X_i$ is a continuous random variable generated as
$$X_i = \cos(Z_i) + \sqrt{0.5\left|Z_i\right| + 0.5} \cd u_i,\quad \text{ with } Z_i \sim U\left[-\pi,\pi\right],$$
where $u_i$ and $\e_i$ are jointly normal as before. As before, $(\a_0, \g_0) = (1,1)$ and we study three values of $\b_0$: $\b_0=\{1, 0.5, 0\}$. To also illustrate the point that our method works well under different choices of the first-stage nonparametric estimation methods, here we run cubic spline regressions with cross-validated choices of degrees of freedom.\footnote{Our results do not change substantially if the Nadaraya-Watson estimator is used instead.}
%For $\hat{\t}_{disc}$, we still use empirical decile ranges as the partition.

%As shown in Table \ref{table:Sim3_gamma} (and Tables \ref{table:Sim3_alpha} and \ref{table:Sim3_beta} in the Appendix), all our three estimators perform reasonably well in terms of all the performance measures and the three sample sizes. The coverage probabilities of the constructed CIs for $\g_0$, the coefficient on the endogenous regressor $X_i$, are reasonably close to their nominal level 95\%,  while the CI constructed from OLS almost always miss the true $\g_0$ in the presence of the endogeneity issue ($\rho \neq 0$). Also, the $\sqrt{n}$ convergence rate of the two semiparametric estimators, $\hat{\t}^*$ and $\hat{\t}$, are strongly demonstrated by the almost exact 50\% reduction in RMSE from $n=250$ to $n=1000$. 

%Remarkably, the discretization-based estimator $\hat{\t}_{disc}$ performs (surprisingly) well in finite samples, especially when the sample size is small ($N=250$). While the two semiparametric estimators $\hat{\t}$ and $\hat{\t}^*$ perform well overall, their small-sample biases induced by the first-stage nonparametric regressions are fairly noticeable when compared with that of $\hat{\t}_{disc}$, especially in the estimation of $\g_0$ under $N = 250$. In contrast, the loss of asymptotic efficiency in $\hat{\t}_{disc}$ relative to $\hat{\t}$ and $\hat{\t}^*$ seems to be more than compensated by the finite-sample unbiasedness of $\hat{\t}_{disc}$ (since it is a 2SLS estimator). 

Tables \ref{table:Sim3_gamma_exc} and \ref{table:Sim3_gamma} (along with Tables \ref{table:Sim3_alpha_exc}-\ref{table:Sim3_beta} in the Online Appendix) show that our estimators continue to perform well under different nonparametric regression methods and DGP designs. It is worth noting that the 2SLS estimator yields very large standard errors (even when the exclusion restriction is satisfied). This is because, even though $Z_i$ is by construction relevant for $X_i$ (in a nonlinear manner), $Z_i$ is only ``weak IV" for $X_i$ via linear projection. In contrast, our three proposed estimators are able to capture the nonlinear relevance of $Z_i$, and deliver small standard errors across all the simulation configurations.

\begin{sidewaystable}[!htbp] 
%\begin{table}[!htbp] 
\centering
\caption{Cts $X$ with Cts $Z$: Performance of $\hat{\g}$ (Coef. of $X$)  \\
Different Degrees of Exclusion Violations}
 \label{table:Sim3_gamma_exc}
\begin{tabular}{c|cccc|cccc|cccc}
\hline
\hline
& \multicolumn{4}{c|}{$n=250$ } & \multicolumn{4}{c|}{$n=500$ }& \multicolumn{4}{c}{$n=1000$ }  \\
\hline
 Est& Bias & SD & RMSE & CP &Bias & SD & RMSE & CP & Bias & SD & RMSE & CP \\
\hline
&\multicolumn{12}{c}{$\b_0=1$}  \\
\hline
$\hat{\theta}$ & 0.061 & 0.094 & 0.112 & 0.871 & 0.035 & 0.064 & 0.072 & 0.901 & 0.020 & 0.045 & 0.049 & 0.911 \\[1.5ex] 
$\hat{\theta}^{*}$ & -0.044 & 0.112 & 0.120 & 0.909 &   -0.029 & 0.073 & 0.079 & 0.922 & -0.018 & 0.049 & 0.052 & 0.925\\[1.5ex] 
$\hat{\theta}_{disc}$ & 0.024 & 0.089 & 0.092 & 0.932 & 0.013 & 0.062 & 0.064 & 0.947 & 0.005 & 0.045 & 0.045 & 0.936 \\[1.5ex] 
$\hat{\theta}_{2sls}$ & 39.48 & 1367 & 1368 & 0.946 & 35.02 & 1609 & 1609 & 0.943 & -21.83 & 2032 & 2032 & 0.956\\[1.5ex]
$\hat{\theta}_{ols}$ & 0.311 & 0.042 & 0.314 & 0.000 & 0.312 & 0.030 & 0.313 & 0.000 & 0.312 & 0.022 & 0.313 & 0.000\\[1.5ex]

 \hline 
&\multicolumn{12}{c}{$\b_0=0.5$} \\
\hline
$\hat{\theta}$ & 0.061 & 0.094 & 0.112 & 0.871 &  0.035 & 0.064 & 0.072 & 0.901 & 0.020 & 0.045 & 0.049 & 0.911
\\[1.5ex]
 $\hat{\theta}^{*}$ & -0.044 & 0.111 & 0.120 & 0.909 & -0.029 & 0.073 & 0.079 & 0.922 & -0.018 & 0.049 & 0.052 & 0.925
  \\[1.5ex]
 $\hat{\theta}_{disc}$ & 0.024 & 0.089 & 0.092 & 0.932 & 0.013 & 0.062 & 0.064 & 0.947 & 0.005 & 0.045 & 0.045 & 0.936
  \\[1.5ex]
      $\hat{\theta}_{2sls}$ & 19.50 & 670.8 & 670.9 & 0.945 &  17.06 & 785.3 & 785.3 & 0.943 & -10.21 & 982.7 & 982.5 & 0.957
  \\[1.5ex]
  $\hat{\theta}_{ols}$ & 0.311 & 0.042 & 0.314 & 0.000 & 0.312 & 0.030 & 0.313 & 0.000 & 0.312 & 0.022 & 0.313 & 0.000
     \\[1.5ex]
 \hline
 &\multicolumn{12}{c}{$\b_0=0$} \\
\hline
$\hat{\theta}$ & 0.061 & 0.094 & 0.112 & 0.871 & 0.035 & 0.064 & 0.072 & 0.901 & 0.020 & 0.045 & 0.049 & 0.911
  \\[1.5ex]
 $\hat{\theta}^{*}$ & -0.044 & 0.112 & 0.120 & 0.909 & -0.029 & 0.073 & 0.079 & 0.922 & -0.018 & 0.049 & 0.052 & 0.925
  \\[1.5ex]
 $\hat{\theta}_{disc}$ & 0.024 & 0.089 & 0.092 & 0.932 & 0.013 & 0.062 & 0.064 & 0.947 & 0.005 & 0.045 & 0.045 & 0.936
   \\[1.5ex]
       $\hat{\theta}_{2sls}$ & -0.487 & 49.20 & 49.19 & 0.990 & -0.899 & 54.68 & 54.67 & 0.987 & 1.403 & 72.53 & 72.52 & 0.996
  \\[1.5ex]
  $\hat{\theta}_{ols}$ & 0.311 & 0.042 & 0.314 & 0.000 & 0.312 & 0.030 & 0.313 & 0.000 & 0.312 & 0.022 & 0.313 & 0.000
     \\[1.5ex]
 \hline
\end{tabular}
%\end{table}
\end{sidewaystable}

\begin{sidewaystable}[!htbp] 
%\begin{table}[!htbp] 
\centering
\caption{Cts $X$ with Cts $Z$: Performance of $\hat{\g}$ (Coef. of $X$) \\ Different Degrees of Endogeneity}
 \label{table:Sim3_gamma}
\begin{tabular}{c|cccc|cccc|cccc}
\hline
\hline
& \multicolumn{4}{c|}{$n=250$ } & \multicolumn{4}{c|}{$n=500$ }& \multicolumn{4}{c}{$n=1000$ }  \\
\hline
 Est& Bias & SD & RMSE & CP &Bias & SD & RMSE & CP & Bias & SD & RMSE & CP \\
\hline
&\multicolumn{12}{c}{$\rho=0.5$}  \\
\hline
$\hat{\theta}$ & 0.061 & 0.094 & 0.112 & 0.871 & 0.035 & 0.064 & 0.072 & 0.901 & 0.020 & 0.045 & 0.049 & 0.911 \\[1.5ex] 
$\hat{\theta}^{*}$ & -0.044 & 0.112 & 0.120 & 0.909 &   -0.029 & 0.073 & 0.079 & 0.922 & -0.018 & 0.049 & 0.052 & 0.925\\[1.5ex] 
$\hat{\theta}_{disc}$ & 0.024 & 0.089 & 0.092 & 0.932 & 0.013 & 0.062 & 0.064 & 0.947 & 0.005 & 0.045 & 0.045 & 0.936 \\[1.5ex] 
$\hat{\theta}_{2sls}$ & 39.48 & 1367 & 1368 & 0.946 & 35.02 & 1609 & 1609 & 0.943 & -21.83 & 2032 & 2032 & 0.956\\[1.5ex]
$\hat{\theta}_{ols}$ & 0.311 & 0.042 & 0.314 & 0.000 & 0.312 & 0.030 & 0.313 & 0.000 & 0.312 & 0.022 & 0.313 & 0.000\\[1.5ex]
 \hline 
&\multicolumn{12}{c}{$\rho=0$} \\
\hline
$\hat{\theta}$ & 0.052 & 0.099 & 0.112 & 0.913 &0.030 & 0.068 & 0.074 & 0.918 & 0.016 & 0.047 & 0.050 & 0.928
  \\[1.5ex]
 $\hat{\theta}^{*}$ & -0.025 & 0.105 & 0.108 & 0.916 & -0.017 & 0.072 & 0.074 & 0.916 & -0.011 & 0.048 & 0.049 & 0.929
  \\[1.5ex]
 $\hat{\theta}_{disc}$ & -0.002 & 0.090 & 0.090 & 0.950 & -0.000 & 0.065 & 0.065 & 0.942 & -0.002 & 0.046 & 0.046 & 0.945
  \\[1.5ex]
 $\hat{\theta}_{2sls}$ & -17.63 & 1306 & 1306 & 0.935 & -74.79 & 2165 & 2166 & 0.946 & -37.09 & 882.2 & 882.8 & 0.951\\[1.5ex]
  $\hat{\theta}_{ols}$ & -0.001 & 0.047 & 0.047 & 0.942 & 0.001 & 0.034 & 0.034 & 0.940 & -0.001 & 0.024 & 0.024 & 0.947\\[1.5ex]
 \hline
 &\multicolumn{12}{c}{$\rho=-0.5$} \\
\hline
$\hat{\theta}$ & 0.044 & 0.104 & 0.113 & 0.936 &0.026 & 0.070 & 0.074 & 0.934 & 0.014 & 0.047 & 0.049 & 0.941
  \\[1.5ex]
 $\hat{\theta}^{*}$ & -0.003 & 0.106 & 0.106 & 0.918 &-0.001 & 0.070 & 0.070 & 0.918 &-0.002 & 0.048 & 0.048 & 0.937
  \\[1.5ex]
 $\hat{\theta}_{disc}$ & -0.026 & 0.090 & 0.094 & 0.919 &-0.013 & 0.065 & 0.066 & 0.926 &-0.007 & 0.046 & 0.046 & 0.942
   \\[1.5ex]
 $\hat{\theta}_{2sls}$ & -14.81 & 597.6 & 597.6 & 0.946 &35.31 & 1988 & 1988 & 0.946 & -68.94 & 1710 & 1711 & 0.952
     \\[1.5ex]
 $\hat{\theta}_{ols}$ & -0.313 & 0.043 & 0.315 & 0.000 &-0.311 & 0.031 & 0.313 & 0.000 & -0.313 & 0.021 & 0.314 & 0.000
     \\[1.5ex]
 \hline
\end{tabular}
%\end{table}
\end{sidewaystable}

%\newpage

\section{Empirical Applications}\label{sec:Application}

We apply our methodology to examine the returns to education, a topic of substantial attention in the literature (see, e.g., \cite{card2001estimating} for a review of various studies on this topic). A key concern in investigating the causal impact of education is its potential endogeneity, and it is challenging to find valid instruments that are excluded from the model. Our approach allows us to include all potential instruments in the regression model and test their validity. 

We conduct two applications and test the direct effects of different instruments for education. The first application explores the college proximity indicators, which are proposed in \cite{card1993using}. We find that after controlling for regional characteristics, the presence of a nearby college does not significantly affect income. However, the 2SLS estimator varies substantially with different instruments and can become insignificant, while our estimators remain more robust regardless of the choice of the instruments. 
In the second application, we examine the validity of family background variables as instruments. Our findings show that the number of siblings has no significant effect on wages rates, while parents' education significantly increases wages. 
%We conduct two applications and test the direct effects of different instruments for education. The first application explores the college proximity indicators as proposed in \cite{card1993using}. In the second application, we examine the validity of family background variables as instruments for education. 

\subsection{Application I: College Proximity Indicators}

We use the same dataset as in \cite{card1993using}, drawn from the National Longitudinal Survey of Young Men (NLSYM). This data contains information of $n=3010$ male observations in 1976, documenting their educational attainment, wage, race, age, and assorted demographic characteristics. \cite{card1993using} proposes to use the presence of a 4-year college  as an instrument for education, which is likely to affect an individual's educational attainment but may not have direct effects on their earnings. However, \cite{card1993using} also raises a potential concern with this instrument, as the presence of a college might be correlated with superior school quality and, consequently, could lead to higher earnings. We study two specifications that investigate one of the two indicators of college proximity respectively: the presence of a nearby 2-year college (nearc2) and the presence of a nearby 4-year college (nearc4).

Following \cite{card1993using}, the dependent variable is the log of hourly wage in 1976, the endogenous variable is education, and the control variables include experience, experience squared, a black indicator, indicators for southern residence and residence in an SMSA in 1976, and indicators for region in 1966 and living in an SMSA in 1966. Distinct from \cite{card1993using}, our approach also includes the college proximity instrument in the model and allows for testing its direct effect on the outcome by examining the significance of the coefficient.
Table \ref{table:app1_sum} presents the summary statistics of the primary variables.
\begin{table}[!htbp] 
\centering
\renewcommand{\arraystretch}{1.3} 
\caption{Summary Statistics with College Proximity Indicators}
\label{table:app1_sum}
\centering
\begin{tabular}{c|c@{\hspace{1cm}}c@{\hspace{1cm}}c@{\hspace{1cm}}c@{\hspace{1cm}}}
\hline
\hline
 & mean & s.d.  & minimum & maximum  \\
\hline
log(wage) in 1976 &     6.262  &  0.444  & 4.605   &  7.785 \\
education &13.263  &  2.677 &  1.000  &  18.000  \\
experience &  8.856  &  4.142  & 0.000  &  23.000 \\
experience squared &  95.579 &  84.618  & 0.000  & 529.000\\
black &  0.234  &  0.423  & 0.000  &   1.000 \\
nearc2 &0.441 &   0.497 &  0.000  &   1.000 \\
nearc4 &  0.682  &  0.466 &  0.000   &  1.000\\
live in SMSA in 1966 &0.650  & 0.477 & 0.000 &   1.000 \\
live in SMSA in 1976 &0.713  &  0.452  &  0.000  &   1.000  \\
live in South in 1976  &0.404  &  0.491   &  0.000  &   1.000 \\
\hline
\end{tabular}
\\[5pt]
\footnotesize{Notes: the experience variable is constructed using the conventional meansure: experience=age-education-6.}
\end{table}

We present the results of six different estimators. The first two estimators $\hat{\t}, \hat{\t}^{*}$ are introduced in Section \ref{subsec:Est_Semipara}.
For the estimation of $\hat{\pi}(Z_i), \hat{h}(Z_i)$, we employ the 
Support Vector Machine (SVM) method, a broadly applied machine learning technique for high-dimensional regressors.  The neural network approach is also implemented, yielding similar results and the same significance of all coefficients.\footnote{We adopt the function `svm' from the e1071 package and the function `neuralnet' from the neuralnet package in R to implement the SVM approach and the neural network method.}
%The SVM estimator is computationally fast to implement, with several packages  available. Specifically, we adopt the function `svm' from the e107 package in R to implement the SVM approach.
For the discretization estimator $\hat{\t}_{disc}$, we divide the experience variable into three partitions using empirical quantiles and generate dummy variables for each partition. Then we construct the instrument for education using the product of any two indicator variables. 
The estimator $\hat{\t}_{ols}$ is the OLS estimator that includes the instrument in the regression, while $\tilde{\t}_{ols}$ represents the OLS estimator that does not include the instrument. The last one $\hat{\t}_{2sls}$ is the 2SLS estimator using the college proximity indicator as the excluded instrument for education.

Table \ref{table:app1_svm} and Table \ref{table:app1_nn} display the outcomes of the coefficients for education and the college proximity instruments using SVM and neural network methods. The results from our three estimators $\hat{\t}, \hat{\t}^{*}, \hat{\t}_{disc}$ demonstrate that, after controlling for all the regional factors in 1966 and 1976,  having a nearby 2-year college or 4-year college has no significant effects on wages. This finding supports the validity of using college proximity indicators as excluded instruments.  
The standard deviations of the instrument coefficients with $\hat{\t},\hat{\t}^{*}, \hat{\t}_{disc}$ are very close to that of  $\hat{\t}_{ols}$, reinforcing the good performance of these  estimators.

The estimated returns to education from $\hat{\t}, \hat{\t}^{*}, \hat{\t}_{disc}$ are uniformly positive and significant under various specifications and nonparametric estimation methods. Moreover, the standard deviations of the education coefficients, derived from the three estimators, are also reasonably small across different specifications and are smaller than the one obtained from the 2SLS estimator. The coefficients on education from $\hat{\t}, \hat{\t}_{disc}$ are all higher than those from the two OLS estimators, suggesting that the OLS estimators may underestimate education's impact. The coefficient from $\hat{\t}^{*}$ can be lower than $\hat{\t}_{ols}$, as it uses a different dependent variable. Overall, the three estimators $\hat{\t}, \hat{\t}^{*}, \hat{\t}_{disc}$ yield very similar results, which corroborates our theoretical results on their asymptotic variances in Section \ref{sec:Estimation}. %That said, we do find that $\hat{\t}^{*}$ tends to exhibit a larger bias when the degree of exclusion violation is small.

For the 2SLS estimator $\hat{\t}_{2sls}$, the coefficients on education vary substantially when using the two different instruments. It becomes insignificant when using the presence of a nearby 2-year college as an instrument, due to the large standard deviation. In addition, the estimated coefficients on education from $\hat{\t}, \hat{\t}^{*}, \hat{\t}_{disc}$ are uniformly smaller than the one obtained from the 2SLS estimator, since our three estimators allow for the direct effect of the instrument. 

\begin{table}[!htbp]
\centering
\caption{Returns to Education: College Proximity Indicators (SVM)}
\label{table:app1_svm}
\begin{tabular}{c|c@{\hspace{1cm}}c@{\hspace{1cm}}|c@{\hspace{1cm}}c@{\hspace{1cm}}}
\hline
\hline
& education & nearc2  & education & nearc4  \\
\hline
& \multicolumn{2}{c}{nearc2} & \multicolumn{2}{c}{nearc4} \\
\hline
\multirow{2}{*}{$\hat{\t}$} & 0.083** &  0.028 & 0.078** & 0.019  \\
& (0.012) & (0.015) &  (0.012) & (0.017)  \\[1.5ex]
\multirow{2}{*}{$\hat{\t}^{*} $ }& 0.067** & 0.024 & 0.065** & 0.015 \\
& (0.012) & (0.015) & (0.012) & (0.017)  \\[1.5ex]
\multirow{2}{*}{$\hat{\t}_{disc}$} & 0.092** & 0.029 & 0.082** & 0.019   \\
& (0.014) & (0.015) & (0.013)& (0.017) \\[1.5ex]
\multirow{2}{*}{$\hat{\t}_{ols}$} & 0.075** & 0.027 &0.074**  &0.018   \\
& (0.004) & (0.015) & (0.004) & (0.017)\\[1.5ex]
\multirow{2}{*}{ $\tilde{\t}_{ols}$ } &  0.075** & \multirow{2}{*}{-} &  0.075** & \multirow{2}{*}{-}   \\
& (0.004) &  & (0.004) &  \\[1.5ex]
\multirow{2}{*}{ $\hat{\t}_{2sls}$} & 0.293 & \multirow{2}{*}{-} &0.132** & \multirow{2}{*}{-}    \\
& (0.186) & & (0.054) &  \\
\hline
\hline
\end{tabular}
\\[5pt]
\footnotesize{Notes: the symbol ** denotes significant coefficients from zero at 95\% level.}
\end{table}

\begin{table}[!htbp]
\centering
\caption{Returns to Education: College Proximity Indicators (Neural Network)}
\label{table:app1_nn}
\begin{tabular}{c|c@{\hspace{1cm}}c@{\hspace{1cm}}|c@{\hspace{1cm}}c@{\hspace{1cm}}}
\hline
\hline
& education & nearc2  & education & nearc4  \\
\hline
& \multicolumn{2}{c}{nearc2} & \multicolumn{2}{c}{nearc4} \\
\hline
\multirow{2}{*}{$\hat{\t}$} & 0.087** &  0.024 &  0.099** &0.012  \\
& (0.018) & (0.015) &  (0.017) & (0.017)  \\[1.5ex]
\multirow{2}{*}{$\hat{\t}^{*} $ }& 0.081** & 0.030 & 0.099** & 0.012  \\
& (0.018) & (0.015) & (0.017) & (0.018)  \\[1.5ex]
\multirow{2}{*}{$\hat{\t}_{disc}$} & 0.092** & 0.029 & 0.082** & 0.019   \\
& (0.014) & (0.015) & (0.013)& (0.017) \\[1.5ex]
\multirow{2}{*}{$\hat{\t}_{ols}$} & 0.075** & 0.027 &0.074**  &0.018   \\
& (0.004) & (0.015) & (0.004) & (0.017)\\[1.5ex]
\multirow{2}{*}{ $\tilde{\t}_{ols}$ } &  0.075** & \multirow{2}{*}{-} &  0.075** & \multirow{2}{*}{-}   \\
& (0.004) &  & (0.004) &  \\[1.5ex]
\multirow{2}{*}{ $\hat{\t}_{2sls}$} & 0.293 & \multirow{2}{*}{-} &0.132** & \multirow{2}{*}{-}    \\
& (0.186) & & (0.054) &  \\
\hline
\hline
\end{tabular}
\\[5pt]
\footnotesize{Notes: the symbol ** denotes significant coefficients from zero at 95\% level.}
\end{table}

\subsection{Application II: Family Background Variables}

As data about nearby colleges might not always be available, family background variables are often used as instruments for education. 
In this application, we explore two family background variables as instruments: parents' average education and number of siblings. To conduct this study, we utilize the dataset `NLSY79', which conducts interviews of $n = 10800$ young individuals, both male and female, ranging in age from 14 to 21 in 1979. This survey records various characteristics of the individuals, such as gender, marriage status, work-related factors, region indicators, as well as family background variables. 
Table \ref{table:app2_sum} displays the summary statistics of the key variables.

\begin{table}[!htbp] 
\centering
\renewcommand{\arraystretch}{1.3} 
\caption{Summary Statistics with Family Background Variables}
\label{table:app2_sum}
\centering
\begin{tabular}{c|c@{\hspace{1cm}}c@{\hspace{1cm}}c@{\hspace{1cm}}c@{\hspace{1cm}}}
\hline
\hline
 & mean & s.d.  & minimum & maximum  \\
\hline
log(wage) &  2.780 &0.604 & 0.756 & 5.284    \\
education & 13.678 &2.475  &0.000 &20.000  \\
female &   0.500 &0.500 & 0.000 & 1.000  \\
black &  0.100 &0.300 & 0.000  &1.000 \\
marriage  & 0.652 &0.476  &0.000  &1.000 \\
experience &  16.977 &4.373&  0.827 &23.808\\
hour &   40.831& 8.925& 10.000& 60.000 \\
live in  North-Central&  0.325 &0.468 & 0.000 & 1.000 \\
live in  North-Eastern &   0.162 &0.368  &0.000  &1.000 \\
live in Southern & 0.360 &0.480  &0.000 & 1.000  \\
parents' average education & 11.703& 2.738 & 0.000 &20.000  \\
number of siblings  &3.165 &2.139  &0.000 &17.000    \\
\hline
\end{tabular}
\\[5pt]
\footnotesize{Notes: parents' average education is computed by (mother's education+father's education)/2.}
\end{table}

We compare our three estimators $\hat{\t}, \hat{\t}^{*}, \hat{\t}_{disc}$ with two OLS estimators and three 2SLS estimators. We still apply both the SVM and the neural network methods to estimate $\hat{\pi}(Z_i)$ and $ \hat{h}(Z_i)$. For the discretization estimator, we construct dummy variables for experience, hour, parents' average education, and number of siblings, based on whether each variable is above the median. The included instruments for education are then constructed using the product of any two variables. 
For the two OLS estimators, $\hat{\t}_{ols}$ includes the two family background variables, while $\tilde{\t}_{ols}$ does not. Additionally, we evaluate three 2SLS estimators. The first one $\hat{\t}_{2sls}^{both}$ uses both parents' education and number of siblings as excluded instruments for education. 
The second estimator $\hat{\t}_{2sls}^{edu}$ employs only parents' education as an instrument, while the third one $\hat{\t}_{2sls}^{sib}$ utilizes solely the number of siblings.

Table \ref{table:app2_svm} and Table \ref{table:app2_nn} display the results of the eight estimators. The findings from the three estimators $\hat{\t}, \hat{\t}^{*}, \hat{\t}_{disc}$ show that the number of siblings does not have significant effects on wages, whereas parents' education significantly increases income. 
This result is consistent with our intuitive reasoning, as parents' education could influence an individual's wage by creating a more favorable educational environment.
The three 2SLS estimators appear to overestimate the returns to education, especially the two estimators  $\hat{\t}_{2sls}^{both}, \hat{\t}_{2sls}^{edu}$ which involve using parents' education as instruments. The three estimators  $\hat{\t}, \hat{\t}^{*}, \hat{\t}_{disc}$ all have significantly positive coefficients on education, and their results are smaller than those of the three 2SLS estimators, given that they control for direct effects of parents' education.

\begin{table}[!htbp] 
%\begin{sidewaystable}[!htbp] 
\centering
\caption{Returns to Education: Family Background Variables (SVM)}
\label{table:app2_svm}
\centering
\begin{tabular}{c|c@{\hspace{1cm}}c@{\hspace{1cm}}c@{\hspace{1cm}} }
\hline
\hline
 &education & parents' education  &  number of siblings \\
 \hline
 \hline
\multirow{2}{*}{  $\hat{\t}$ } &   0.116** &  0.014** &  0.001  \\
&  (0.004) & (0.002) & (0.002)  \\[1.5ex]
\multirow{2}{*}{$ \hat{\t}^{*} $} & 0.101** & 0.019** &  -0.001  \\
& (0.004) & (0.002) & (0.002) \\[1.5ex]
\multirow{2}{*}{ $\hat{\t}_{disc}$} & 0.109** & 0.019** & -0.004   \\
&  (0.012) & (0.008) &  (0.005)   \\[1.5ex]
\multirow{2}{*}{ $\hat{\t}_{ols}$} & 0.103**  & 0.019** & 0.002   \\
&  (0.002) & (0.002) &(0.002)  \\[1.5ex]
\multirow{2}{*}{ $\tilde{\t}_{ols}$} &  0.113** &  \multirow{2}{*}{-} & \multirow{2}{*}{-}   \\
&  (0.002) &   \\[1.5ex]
\multirow{2}{*}{ $\hat{\t}_{2sls}^{both}$ } & 0.147** &\multirow{2}{*}{-} &\multirow{2}{*}{-}    \\
&  (0.005) &  \\[1.5ex]
\multirow{2}{*}{ $\hat{\t}_{2sls}^{edu}$ } & 0.149** &\multirow{2}{*}{-} &\multirow{2}{*}{-}    \\
&  (0.005) &  \\[1.5ex]
\multirow{2}{*}{ $\hat{\t}_{2sls}^{sib}$ } & 0.123** &\multirow{2}{*}{-} &\multirow{2}{*}{-}    \\
&  (0.009) &  \\
\hline
\hline
\end{tabular}
\\[5pt]
\footnotesize{Notes: the symbol ** denotes significant coefficients from zero at 95\% level.}
\end{table}

\begin{table}[!htbp] 
%\begin{sidewaystable}[!htbp] 
\centering
\caption{Returns to Education: Family Background Variables  (Neural Network)}
\label{table:app2_nn}
\centering
\begin{tabular}{c|c@{\hspace{1cm}}c@{\hspace{1cm}}c@{\hspace{1cm}} }
\hline
\hline
 &education & parents' education  &  number of siblings \\
 \hline
 \hline
\multirow{2}{*}{  $\hat{\t}$ } &  0.109** & 0.017**  & 0.004     \\
&  (0.006) & (0.003) & (0.002)  \\[1.5ex]
\multirow{2}{*}{$ \hat{\t}^{*} $} & 0.081** &    0.029**  &  -0.001   \\
&   (0.006)  & (0.003) & (0.002) \\[1.5ex]
\multirow{2}{*}{ $\hat{\t}_{disc}$} & 0.109** & 0.019** & -0.004   \\
&  (0.012) & (0.008) &  (0.005)   \\[1.5ex]
\multirow{2}{*}{ $\hat{\t}_{ols}$} & 0.103**  & 0.019** & 0.002   \\
&  (0.002) & (0.002) &(0.002)  \\[1.5ex]
\multirow{2}{*}{ $\tilde{\t}_{ols}$} &  0.113** &  \multirow{2}{*}{-} & \multirow{2}{*}{-}   \\
&  (0.002) &   \\[1.5ex]
\multirow{2}{*}{ $\hat{\t}_{2sls}^{both}$ } & 0.147** &\multirow{2}{*}{-} &\multirow{2}{*}{-}    \\
&  (0.005) &  \\[1.5ex]
\multirow{2}{*}{ $\hat{\t}_{2sls}^{edu}$ } & 0.149** &\multirow{2}{*}{-} &\multirow{2}{*}{-}    \\
&  (0.005) &  \\[1.5ex]
\multirow{2}{*}{ $\hat{\t}_{2sls}^{sib}$ } & 0.123** &\multirow{2}{*}{-} &\multirow{2}{*}{-}    \\
&  (0.009) &  \\
\hline
\hline
\end{tabular}
\\[5pt]
\footnotesize{Notes: the symbol ** denotes significant coefficients from zero at 95\% level.}
\end{table}

\section{\label{sec:Conclusion}Conclusion}

This paper offers an alternative approach to achieve point identification of endogenous regression models in the absence of excluded instruments. The key idea of this approach is to leverage the nonlinear dependence between the included exogenous regressor and the endogenous variable. For estimation, we introduce two semiparametric estimators and a easy-to-compute discretization-based estimator. The asymptotic properties of all three estimators are derived and their robust finite sample performances are demonstrated through Monte Carlo simulations. We apply the approach to study returns to education, and to test the direct effects of college proximity indicators as well as family background variables.

\bibliographystyle{ecta}
\bibliography{IncIV}

\appendix

\section{\label{sec:App_Proof}Proofs}

\subsection{Proof of Lemma \ref{lem:d_values}}
\begin{proof}
%We prove the ``if'' part by proving its contrapositive. 
We first prove that Condition \ref{cond:rd_supp} $\Longrightarrow$ Assumption \ref{assu:FullRankW} by contradiction.
Suppose
that Assumption \ref{assu:FullRankW} fails. Then there exists $c\in\R^{d}\backslash\left\{ {\bf 0}\right\} $
s.t. $\left(1,\pi_{0}^{'}\left(z\right),z^{'}\right)c=0,\ \forall z\in{\cal Z}.$
For any distinct $z_{1},...,z_{d}\in{\cal Z}$, define 
$$A(z_1,...,z_d) :=\left(\begin{array}{ccc}
1 & z_{1}^{'} & \pi_{0}\left(z_{1}\right)^{'}\\
1 & z_{2}^{'} & \pi_{0}\left(z_{2}\right)^{'}\\
\vdots & \vdots & \vdots\\
1 & z_{d}^{'} & \pi_{0}\left(z_{d}\right)^{'}
\end{array}\right),\quad r(z_1,...,z_d) := \text{rank}\left(A(z_1,...,z_d)\right).$$
We
have
$A(z_1,...,z_d) c={\bf 0}\imp r(z_1,...,z_d) <d$.

%We now prove the ``only if'' part. 
We now prove that Assumption \ref{assu:FullRankW} $\Longrightarrow$ Condition \ref{cond:rd_supp}.
Suppose that Assumption \ref{assu:FullRankW}
holds, i.e., $\left(1,Z_{i}^{'},\pi_{0}\left(Z_{i}\right)^{'}\right)$
are not multicollinear. This means that for any $c\in\R^{d}\backslash\left\{ {\bf 0}\right\} $,
there must exist some $z\in{\cal Z}$ s.t. 
\begin{equation}
\left(1,z^{'},,\pi_{0}\left(z\right)^{'}\right)c\neq0.\label{eq:NoMul_cz}
\end{equation}
If $\#\left({\cal Z}\right)=K<d$, then \eqref{eq:NoMul_cz} cannot true. Hence $\#\left({\cal Z}\right)\geq d$. For any $d$ distinct points
$z_{1},...,z_{d}$, if $r(z_1,...,z_d) = d$, then we are done. If $r(z_1,...,z_d)<d$,
then there exists some $c\in\R^{d}\backslash\left\{ {\bf 0}\right\} $
s.t. $A(z_1,...,z_d)c={\bf 0}$.
Now, by \eqref{eq:NoMul_cz} there must exists $z_{d+1}\in{\cal Z}$
s.t. 
$\left(1,z_{d+1}^{'},\pi_{0}\left(z_{d+1}\right)^{'}\right)c\neq0$,
which implies that $\left(1,z_{d+1}^{'},\pi_{0}\left(z_{d+1}\right)^{'}\right)$
is linearly independent from 
$\left\{ \left(1,z_{k}^{'},\pi_{0}\left(z_{k}\right)^{'}\right):  k=1,..,d\right\}$
and thus
$r(z_1,...,z_d,z_{d+1})=r(z_1,...,z_d)+1$.
If $r(z_1,...,z_d,z_{d+1})=d,$ we stop; otherwise we can repeat the argument above and find some $z_{d+2}\in{\cal Z}$
such that $r\left(z_{1},...,z_{d+2}\right)=r\left(z_{1},...,z_{d+1}\right)+1$. This recursion must stop at most $k^{*}\leq d-r\left(z_{1},...,z_{d}\right)$
steps, with
$r\left(z_{1},...,z_{d+k^{*}}\right)=d$.
Then we pick $d$ distinct points from $\{z_1,...,z_{d+k^{*}}\}$ such that its rank is $d$,
which is precisely Condition \ref{cond:rd_supp}.
\end{proof}

\subsection{\label{subsec:Notation}Notation for Asymptotic Theory}

We first formally set up our notation. For any $\t=\left(\a,\b^{'},\g^{'}\right)^{'}\in\R^{d}$
and any functions $h:\R^{d_{z}}\to\R$ and $\pi:\R^{d_{z}}\to\R^{d_{x}}$,
define
\begin{align*}
g^{*}\left(z;\t,h,\pi\right) & :=h\left(z\right)-w\left(z,\pi\right)^{'}\t,\quad
g\left(y,z;\t,\pi\right) :=y-w\left(z,\pi\right)^{'}\t
\end{align*}
with $w\left(z,\pi\right):=\left(1,z^{'},\pi\left(z\right)^{'}\right)^{'}$
so that 
\begin{align*}
g^{*}\left(Z_{i};\t_{0},h_{0},\pi_{0}\right) & =h_{0}\left(Z_{i}\right)-w\left(Z_{i},\pi_{0}\right)^{'}\t_{0}\equiv0,\\
g\left(Y_{i},Z_{i};\t_{0},\pi_{0}\right) & =Y_{i}-w\left(Z_{i},\pi_{0}\right)^{'}\t_{0}=\e_{i}+u_{i}^{'}\g_{0},
\end{align*}
where $u_{i}:=X_{i}-\E\left[\rest{X_{i}}Z_{i}\right]$ and
\[
\E\left[\rest{g\left(Y_{i},Z_{i};\t_{0},h_{0},\pi_{0}\right)}Z_{i}\right]=\E\left[\rest{\e_{i}+u_{i}^{'}\g_{0}}Z_{i}\right]=0.
\]
We construct the following quadratic population criterion function:
\begin{align*}
Q^{*}\left(\t,h,\pi\right) & :=\frac{1}{2}\E\left[g^{*2}\left(Z_{i};\t,h,\pi\right)\right],\quad
Q\left(\t,\pi\right) :=\frac{1}{2}\E\left[g^{2}\left(Y_{i},Z_{i};\t,\pi\right)\right],
\end{align*}
so that
$Q^{*}\left(\t_{0};h_{0},\pi_{0}\right) =0,$ and $
Q\left(\t_{0};\pi_{0}\right) =\E\left[\left(\e_{i}+u_{i}^{'}\g_{0}\right)^{2}\right]$.

\begin{cor}
\label{prop:minQ0}Under Assumptions \ref{assu:exog}-\ref{assu:FullRankW}, $\t_{0}$
is the unique minimizer of $Q^{*}\left(\cd,h_{0},\pi_{0}\right)$ and  $Q\left(\cd,\pi_{0}\right)$, i.e.,
$\t_{0}=\arg\min_{\t\in\R^{d}}Q^{*}\left(\t,h_{0},\pi_{0}\right)=\arg\min_{\t\in\R^{d}}Q\left(\t,\pi_{0}\right)$.
\end{cor}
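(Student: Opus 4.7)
The plan is to observe that both $Q^*(\cdot, h_0, \pi_0)$ and $Q(\cdot, \pi_0)$ are standard quadratic functions in $\t$, so we can appeal directly to the algebra of least squares together with the identification formula already established in Theorem \ref{thm:IDGen}.

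First, I would unpack the definitions to see that
\[
Q^*(\t, h_0, \pi_0) = \tfrac{1}{2}\E\bigl[(h_0(Z_i) - W_i'\t)^2\bigr], \qquad Q(\t, \pi_0) = \tfrac{1}{2}\E\bigl[(Y_i - W_i'\t)^2\bigr],
\]
where $W_i := w(Z_i, \pi_0) = (1, Z_i', \pi_0(Z_i)')'$ as in Section \ref{subsec:main_ID}. Both functions are finite-valued (use Assumption \ref{assu:Moment2} or simply note that $h_0(Z_i)$ and $\pi_0(Z_i)$ are bounded in $L_2$ via conditional Jensen from the finite second moments of $Y_i$ and $X_i$, which follow a fortiori from Assumption \ref{assu:Moment2}; alternatively at this stage only Assumptions \ref{assu:exog}--\ref{assu:FullRankW} are invoked and I can directly verify the existence of the expectations from the mean-projection nature of $h_0$ and $\pi_0$).

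Next, I would differentiate under the expectation (justified by the quadratic, hence locally Lipschitz, integrand) to obtain the first-order conditions
\[
\nabla_\t Q^*(\t, h_0, \pi_0) = -\E[W_i(h_0(Z_i) - W_i'\t)], \qquad \nabla_\t Q(\t, \pi_0) = -\E[W_i(Y_i - W_i'\t)],
\]
and compute the Hessian of both objectives, which is the same matrix $\E[W_i W_i']$. Assumption \ref{assu:FullRankW} says this Hessian has full rank, so it is positive definite, and therefore each of $Q^*(\cdot, h_0, \pi_0)$ and $Q(\cdot, \pi_0)$ is strictly convex and coercive on $\R^d$. This immediately delivers existence and uniqueness of the minimizer, determined by the normal equations $\E[W_i W_i']\t = \E[W_i h_0(Z_i)]$ and $\E[W_i W_i']\t = \E[W_i Y_i]$ respectively.

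Finally, I would invert $\E[W_i W_i']$ and read off the two minimizers as $(\E[W_i W_i'])^{-1}\E[W_i h_0(Z_i)]$ and $(\E[W_i W_i'])^{-1}\E[W_i Y_i]$, which are exactly the two expressions for $\t_0$ given by Theorem \ref{thm:IDGen} under Assumptions \ref{assu:exog}--\ref{assu:FullRankW}. Hence both minimizers coincide with $\t_0$. There is no real obstacle here: the corollary is essentially a restatement of Theorem \ref{thm:IDGen} through the lens of OLS/least-squares projection, with the only substantive ingredient being invertibility of $\E[W_i W_i']$, which is precisely Assumption \ref{assu:FullRankW}.
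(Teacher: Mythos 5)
Your proposal is correct and follows essentially the same route as the paper: both arguments reduce the problem to the normal equations $\E[W_iW_i']\t=\E[W_ih_0(Z_i)]$ and $\E[W_iW_i']\t=\E[W_iY_i]$, whose unique solution under Assumption \ref{assu:FullRankW} is the $\t_0$ of Theorem \ref{thm:IDGen}. The only cosmetic difference is that for $Q^{*}$ the paper argues via the minimum value being exactly zero (so any minimizer makes $g^{*}$ vanish a.s.), whereas you uniformly invoke strict convexity and the first-order condition; both are valid and rest on the same positive definiteness of $\E[W_iW_i']$.
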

\begin{proof}
Note that $Q^{*}\left(\t,h_{0},\pi_{0}\right)=\frac{1}{2}\E\left[g^{*2}\left(Z_{i};\t,h_{0},\pi_{0}\right)\right]=0$
implies $g^{*}\left(Z_{i};\t,h_{0},\pi_{0}\right)=0$ almost surely,
and thus $\E\left[w\left(Z_{i},\pi_{0}\right)g^{*}\left(Z_{i};\t,h_{0},\pi_{0}\right)\right]=\E\left[W_{i}W_{i}^{'}\right]\t-\E\left[W_{i}h_{0}\left(Z_{i}\right)\right]={\bf 0}$,
which implies that $\t=\t_{0}$. In the meanwhile, the first-order condition for the minimization of
$Q\left(\t,\pi_{0}\right)=\frac{1}{2}\E\left[g^{2}\left(Y_{i},Z_{i};\t,\pi_{0}\right)\right]$
is given by $\E\left[w\left(Z_{i},\pi_{0}\right)g\left(Y_{i},Z_{i};\t_{0},\pi_{0}\right)\right]=\E\left[W_{i}\left(Y_{i}-W_{i}^{'}\t_{0}\right)\right]={\bf 0}$,
which is equivalent to $\E\left[W_{i}W_{i}^{'}\right]\t_{0}-\E\left[W_{i}Y_{i}\right]={\bf 0}$.
\end{proof}

\subsection{\label{subsec:Pf_AsymNorm}Proof of Theorem \ref{thm:AsymNorm}}
\begin{proof}
It is well known that a Sobolev space of order $s>\frac{d_{z}}{2}$
is a Donsker class of functions. Since the residual functions in our setup are given by
\[
w\left(Z_{i},\pi\right)g^{*}\left(Z_{i},\t,h,\pi\right)=\left(h\left(Z_{i}\right)-\a-Z_{i}^{'}\b-\pi\left(Z_{i}\right)^{'}\g\right)\left(\begin{array}{c}
1\\
Z_{i}\\
\pi\left(Z_{i}\right)
\end{array}\right),
\]
\[
w\left(Z_{i},\pi\right)g\left(Y_{i},Z_{i},\t,\pi\right)=\left(Y_{i}-\a-Z_{i}^{'}\b-\pi\left(Z_{i}\right)^{'}\g\right)\left(\begin{array}{c}
1\\
Z_{i}\\
\pi\left(Z_{i}\right)
\end{array}\right),
\]
which are smooth functions of $\left(\t,h,\pi\right)$,
the function classes 
\begin{align*}
{\cal F}^{*} & :=\left\{ w\left(\cd,\pi\right)g^{*} \left(\cd,\t,h,\pi\right) - w\left(\cd,\pi_0\right)g^{*} \left(\cd,\t,h_0,\pi_0\right):\t\in\R^{d},h\in{\cal H},\pi\in{\cal H}\right\} ,\\
{\cal F}^{*} & :=\left\{ w\left(\cd,\pi\right)g\left(\cd,\cd,\t,\pi\right) - w\left(\cd,\pi_0\right)g\left(\cd,\cd,\t,\pi_0\right):\t\in\R^{d},\pi\in{\cal H}\right\} ,
\end{align*}
are also Donsker, and thus satisfy the stochastic equicontinuity
condition.

We then proceed to derive the influence functions for $\hat{\t}^{*}$
and $\hat{\t}$ separately.
\begin{itemize}
\item[(a)] For $\hat{\t}^{*}$, recall that $g^{*}\left(z;\t,h,\pi\right)=h\left(z\right)-\a-z^{'}\b-\pi\left(z\right)^{'}\g$
with $g^{*}\left(Z_{i};\t_{0},h_{0},\pi_{0}\right)\equiv0.$
Hence,
$\Dif_{\t}Q^{*}\left(\t,h_{0},\pi_{0}\right) =-\E\left[w\left(Z_{i},\pi_{0}\right)g^{*}\left(Z_{i};\t,h_{0},\pi_{0}\right)\right],$ 
with $\Dif_{\t}Q^{*}\left(\t_{0},h_{0},\pi_{0}\right)  =-\E\left[W_{i}0\right]={\bf 0}$
and $\Dif_{\t\t}Q^{*}\left(\t,h_{0},\pi_{0}\right) =\E\left[w\left(Z_{i},\pi_{0}\right)w\left(Z_{i},\pi_{0}\right)^{'}\right] = \E\left[W_iW_i^{'}\right] =\Sigma_{0}.$
Furthermore,
\begin{align*}
 & D_{\left(h,\pi\right)}\left[\Dif_{\t}Q^{*}\left(\t_{0},h_{0},\pi_{0}\right),h-h_{0},\pi-\pi_{0}\right]\\
:=\  & \lim_{t\downto0}\frac{1}{t}\left(\Dif_{\t}Q^{*}\left(\t_{0},h_{0}+t\left(h-h_{0}\right),\pi_{0}+t\left(\pi-\pi_{0}\right)\right)-\Dif_{\t}Q^{*}\left(\t,h_{0},\pi_{0}\right)\right)\\
=\  & -\lim_{t\downto0}\frac{1}{t}\E\left[t\left(\begin{array}{c}
h-h_{0}-\left(\pi-\pi_{0}\right)^{'}\g_{0}\\
Z_{i}\left(h-h_{0}-\left(\pi-\pi_{0}\right)^{'}\g_{0}\right)\\
\pi_{0}\left(h-h_{0}-\left(\pi-\pi_{0}\right)^{'}\g_{0}\right)+\left(\pi-\pi_{0}\right)g^{*}\left(\cd,\t_{0},h_{0},\pi_{0}\right)
\end{array}\right)_{Z_{i}}\right]\\
= & -\E\left[\left(\begin{array}{c}
h-h_{0}-\left(\pi-\pi_{0}\right)^{'}\g_{0}\\
Z_{i}\left(h-h_{0}-\left(\pi-\pi_{0}\right)^{'}\g_{0}\right)\\
\pi_{0}\left(h-h_{0}-\left(\pi-\pi_{0}\right)^{'}\g_{0}\right)
\end{array}\right)_{Z_{i}}\right],
\end{align*}
where the subscript $Z_{i}$ means that all the functions $h,h_{0},\pi,\pi_{0},g\left(\cd,\t_{0},h_{0},\pi_{0}\right)$
are evaluated at $Z_{i}$. and the last equality uses the observation
that $g^{*}\left(z,\t_{0},h_{0},\pi_{0}\right)\equiv0$. Define
\begin{align*}
\psi^{*}\left(Y_{i},X_{i},Z_{i}\right) & :=-\left(\begin{array}{c}
Y_{i}-h_{0}\left(Z_{i}\right)-\left(X_{i}-\pi_{0}\left(Z_{i}\right)\right)^{'}\g_{0}\\
Z_{i}\left(Y_{i}-h_{0}\left(Z_{i}\right)-\left(X_{i}-\pi_{0}\left(Z_{i}\right)\right)^{'}\g_{0}\right)\\
\pi_{0}\left(Z_{i}\right)\left(Y_{i}-h_{0}\left(Z_{i}\right)-\left(X_{i}-\pi_{0}\left(Z_{i}\right)\right)^{'}\g_{0}\right)
\end{array}\right)\\
 & =-\left(\begin{array}{c}
\e_{i}\\
Z_{i}\e_{i}\\
\pi_{0}\left(Z_{i}\right)\e_{i}
\end{array}\right)=-\e_{i}W_{i}
\end{align*}
since, by \eqref{eq:Model} and \eqref{eq:E_YZ},
$Y_{i}-h_{0}\left(Z_{i}\right)-\left(X_{i}-\pi_{0}\left(Z_{i}\right)\right)^{'}\g_{0}=\e_{i}$.
Hence,
\[
\E\left[\psi^{*}\left(Y_{i},X_{i},Z_{i}\right)\right]=-\E\left[\E\left[\rest{\e_{i}}Z_{i}\right]W_{i}\right]={\bf 0},
\]
and $\E\left[\norm{\psi^{*}\left(Y_{i},X_{i},Z_{i}\right)}^{2}\right]<\infty$.
Noting that $g^{*}\left(Z_{i},\t_{0},h_{0},\pi_{0}\right)\equiv0$,
then by the standard theory for semiparametric two-stage estimation, e.g. Theorems 8.1 \& 8.2 of \citet*{newey1994large}, we have
\begin{align*}
\sqrt{n}\left(\hat{\t}^{*}-\t_{0}\right) & =-\Sigma_{0}^{-1}\frac{1}{\sqrt{n}}\sum_{i=1}^{n}\left(-g^{*}\left(Z_{i},\t_{0},h_{0},\pi_{0}\right)W_i+\psi^{*}\left(Y_{i},X_{i},Z_{i}\right)\right)+o_{p}\left(1\right)\\
 & =-\Sigma_{0}^{-1}\frac{1}{\sqrt{n}}\sum_{i=1}^{n}\psi^{*}\left(Y_{i},X_{i},Z_{i}\right)+o_{p}\left(1\right)
  \dto\cN\left({\bf 0},V_{0}=\Sigma_{0}^{-1}\O_{0}\Sigma_{0}^{-1}\right)
\end{align*}
with
$\O_{0} =\E\left[\psi^{*}\left(Y_{i},X_{i},Z_{i}\right)\psi^{*}\left(Y_{i},X_{i},Z_{i}\right)^{'}\right]=\E\left[\e_{i}^{2}W_{i}W_{i}^{'}\right]$.

\item[(b)] For $\text{\ensuremath{\hat{\t}}},$ recall $g\left(y,z;\t,\pi\right):=y-w\left(z,\pi\right)^{'}\t$
and 
$g\left(Y_{i},Z_{i};\t_{0},\pi_{0}\right)=\e_{i}+u_{i}^{'}\g_{0}.$
Hence,
\begin{align*}
\Dif_{\t}Q\left(\t,\pi_{0}\right) & =-\E\left[w\left(Z_{i},\pi_{0}\right)g\left(Y_{i},Z_{i};\t,\pi_{0}\right)\right],
\end{align*}
with
$\Dif_{\t}Q\left(\t_{0},\pi_{0}\right)  =-\E\left[w\left(Z_{i},\pi_{0}\right)\left(\e_{i}+u_{i}^{'}\g_{0}\right)\right] ={\bf 0}$
and $$\Dif_{\t\t}Q\left(\t,\pi_{0}\right) =\E\left[w\left(Z_{i},\pi_{0}\right)w\left(Z_{i},\pi_{0}\right)^{'}\right] = \E\left[W_{i}W_{i}^{'}\right]=\Sigma_{0}.$$
Furthermore,
\begin{align*}
 & D_{\pi}\left[\Dif_{\t}Q\left(\t_{0},\pi_{0}\right),\pi-\pi_{0}\right]\\
:=\  & \lim_{t\downto0}\frac{1}{t}\left(\Dif_{\t}Q\left(\t_{0},\pi_{0}+t\left(\pi-\pi_{0}\right)\right)-\Dif_{\t}Q\left(\t,\pi_{0}\right)\right)\\
=\  & -\lim_{t\downto0}\frac{1}{t}\E\left[t\left(\begin{array}{c}
-\left(\pi-\pi_{0}\right)^{'}\g_{0}\\
-Z_{i}\left(\pi-\pi_{0}\right)^{'}\g_{0}\\
-\pi_{0}\left(\pi-\pi_{0}\right)^{'}\g_{0}+\left(\pi-\pi_{0}\right)g\left(Y_{i},Z_{i},\t_{0},\pi_{0}\right)
\end{array}\right)_{Z_{i}}\right]\\
=\ & \E\left[\left(\begin{array}{c}
\left(\pi-\pi_{0}\right)^{'}\g_{0}\\
Z_{i}\left(\pi-\pi_{0}\right)^{'}\g_{0}\\
\pi_{0}\left(\pi-\pi_{0}\right)^{'}\g_{0}
\end{array}\right)_{Z_{i}}\right]
\end{align*}
where the last equality follows from the Law of Iterated Expectations
and
\[
\E\left[\rest{g\left(Y_{i},Z_{i},\t_{0},\pi_{0}\right)}Z_{i}\right]=\E\left[\rest{\e_{i}+u_{i}^{'}\g_{0}}Z_{i}\right]=0.
\]
Defining
\begin{align*}
\psi\left(Y_{i},X_{i},Z_{i}\right) & :=\left(\begin{array}{c}
\begin{array}{c}
\left(X_{i}-\pi_{0}\left(Z_{i}\right)\right)^{'}\g_{0}\\
Z_{i}\left(X_{i}-\pi_{0}\left(Z_{i}\right)\right)^{'}\g_{0}\\
\pi_{0}\left(Z_{i}\right)\left(X_{i}-\pi_{0}\left(Z_{i}\right)\right)^{'}\g_{0}
\end{array}\end{array}\right)=u_{i}^{'}\g_{0}W_{i},
\end{align*}

we have $\E\left[\psi\left(Y_{i},X_{i},Z_{i}\right)\right]=\E\left[W_{i}\E\left[\rest{u_{i}^{'}}Z_{i}\right]\g_{0}\right]={\bf 0}.$
Again, based on the standard results for the asymptotic theory of
semiparametric two-stage estimators, such as Theorems 8.1 \& 8.2 of
\citet*{newey1994large}, we have
\begin{align*}
\sqrt{n}\left(\hat{\t}-\t_{0}\right) & =-\Sigma_{0}^{-1}\frac{1}{\sqrt{n}}\sum_{i=1}^{n}\left(-g\left(Z_{i},\t_{0},\pi_{0}\right)W_i+\psi\left(Y_{i},X_{i},Z_{i}\right)\right)+o_{p}\left(1\right)\\
 & =-\Sigma_{0}^{-1}\frac{1}{\sqrt{n}}\sum_{i=1}^{n}\left(-\left(\e_{i}+u_{i}^{'}\g_{0}\right)W_{i}+u_{i}^{'}\g_{0}W_{i}\right)+o_{p}\left(1\right)\\
 & =\Sigma_{0}^{-1}\frac{1}{\sqrt{n}}\sum_{i=1}^{n}\e_{i}W_{i}+o_{p}\left(1\right)
  \dto\cN\left({\bf 0},V_{0}=\Sigma_{0}^{-1}\O_{0}\Sigma_{0}^{-1}\right).
\end{align*}
%Note that the error $u_{i}$ enters into both the nonparametric
%estimation of $\pi_{0}$ and the second-stage OLS regression with
%$Y_{i}$ as the outcome variable. However, the calculation above
%shows that the two effects of $u_{i}$ on the two stages of the
%final estimator $\hat{\t}$ exactly cancel out each other (asymptotically),
%leading to the same asymptotic variance matrix as for $\hat{\t}^{*}$.
\end{itemize}
\end{proof}

\subsection{\label{subsec:pf_VarEst}Proof of Theorem \ref{thm:VarEst}}
\begin{proof}
Given finite fourth moment in Assumption \ref{assu:Moment2}, we have
\[
\frac{1}{n}\sum_{i=1}^{n}W_{i}W_{i}^{'}\pto\Sigma_{0},\quad\frac{1}{n}\sum_{i=1}^{n}\e_{i}^{2}W_{i}W_{i}^{'}\pto\O_{0}.
\]
Moreover, given the consistency of the first-stage nonparametric estimator
$\hat{\pi}$ in Assumption \ref{assu:NPFS} and the consistency of estimators
$\hat{\t}$ in Theorem \ref{thm:AsymNorm}, we have
\begin{align*}
\hat{W}_{i}-W_{i} & =\left(0, {\bf 0}', \hat{\pi}\left(Z_{i}\right)'-\pi_{0}\left(Z_{i}\right)'\right)^{'} \pto{\bf 0}, \\
\hat{\e}_{i}-\e_{i} & =\a_{0}-\hat{\a}-Z_{i}^{'}\left(\hat{\b}-\b_{0}\right)-X_{i}^{'}\left(\hat{\g}-\g_{0}\right)\pto0,
\end{align*}
and thus
\begin{align*}
\hat{\Sigma}-\Sigma_{0} & =\frac{1}{n}\sum_{i=1}^{n}\left(\hat{W}_{i}\hat{W}_{i}^{'}-W_{i}W_{i}^{'}\right)+\frac{1}{n}\sum_{i=1}^{n}W_{i}W_{i}^{'}-\Sigma_{0}\pto{\bf 0},\\
\hat{\O}-\O_{0} & =\frac{1}{n}\sum_{i=1}^{n}\left(\hat{\e}_{i}^{2}\hat{W}_{i}\hat{W}_{i}^{'}-\e_{i}W_{i}W_{i}^{'}\right)+\frac{1}{n}\sum_{i=1}^{n}\e_{i}^{2}W_{i}W_{i}^{'}-\O_{0}\pto{\bf 0}.
\end{align*}
Hence, $\hat{V}:=\hat{\Sigma}^{-1}\hat{\O}\hat{\Sigma}^{-1}\pto V_{0}=\Sigma_{0}^{-1}\O_{0}\Sigma_{0}^{-1}$.
\end{proof}

\subsection{\label{subsec:pf_AsymDisc}Proof of Theorem \ref{thm:AsymDisc}}

\begin{proof}
Since $\hat{\t}_{disc}$ is a 2SLS estimator, it is $\sqrt{n}$-consistent
and asymptotic normal, with the asymptotic variance given by the formula
$V_{0,disc}:=\Sigma_{0,disc}^{-1}\O_{0,disc}\Sigma_{0,disc}^{-1}$ 
with
\begin{align*}
\Sigma_{0,disc} & :=\E\left[W_{i}^{'}D_{i}\right]\left(\E\left[D_{i}D_{i}^{'}\right]\right)^{-1}\E\left[D_{i}^{'}W_{i}\right]\\
 & =\sum_{k=1}^{K}\left(p_{k}\ol W_{k}\cd\frac{1}{p_{k}}\cd p_{k}\ol W_{k}^{'}\right)=\sum_{k=1}^{K}p_{k}\ol W_{k}\ol W_{k}^{'},
\end{align*}
and 
\begin{align*}
\O_{0,disc} & :=\E\left[W_{i}^{'}D_{i}\right]\left(\E\left[D_{i}D_{i}^{'}\right]\right)^{-1}\E\left[\e_{i}^{2}D_{i}D_{i}^{'}\right]\left(\E\left[D_{i}D_{i}^{'}\right]\right)^{-1}\E\left[D_{i}W_{i}^{'}\right]\\
 & =\sum_{k=1}^{K}\left(p_{k}\ol W_{k}\cd\frac{1}{p_{k}}\cd p_{k}\ol{\s}_{k}\cd\frac{1}{p_{k}}\cd p_{k}\ol W_{k}'\right)=\sum_{k=1}^{K}p_{k}\ol{\s}_{k}^{2}\ol W_{k}\ol W_{k}^{'}.
\end{align*}
Note that, in the proofs above, we exploited the fact that each $D_{ik}$
is a partition cell dummy along with the associated properties such as, for
all $k$ and $j\neq k$,
$\E\left[D_{ik}\right]=p_{k},\ D_{ik}^{2}=D_{ik},\ D_{ik}D_{ij}=0.$
\end{proof}

\subsection{Proof of Theorem \ref{lem:DiscLessEff}}
\begin{proof}
Under homoskedasticity, the formulas for $V_0$ and $V_{0,disc}$ simplify to 
\[\begin{aligned}
V_0 = \s^2_\e \left(\E\left[W_{i}W_{i}^{'}\right] \right)^{-1},\quad
V_{0,disc} = \s^2_\e \left(\sum_{k=1}^{K} p_k \ol W_{k}\ol W_{k}^{'}\right)^{-1}.
\end{aligned}
\]
Recalling that $\ol{W}_k = \E[\rest{W_i} Z_i \in {\cal Z}_k]$, we have
\[
\begin{aligned}
V_0^{-1}  - V_{0,disc}^{-1} & = \frac{1}{\s^2_\e} \left(\E\left[W_{i}W_{i}^{'}\right] -\sum_{k=1}^{K} p_k\ol W_{k}\ol W_{k}^{'} \right) \\ 
& = \frac{1}{\s^2_\e} \sum_{k=1}^K p_k\left( \E\left[\rest{W_{i}W_{i}^{'}}{Z_i \in \cal Z}_{k} \right]-\ol W_{k}\ol W_{k}^{'}\right) 
 = \frac{1}{\s^2_\e}\sum_{k=1}^K p_k \var(\rest{W_{i}}Z_i \in {\cal Z}_{k}),
\end{aligned}
\]
which is positive semi-definite. Hence, $V_{0, disc} - V_0$ is positive semi-definite.
\end{proof}

\subsection{Proof of Theorem \ref{thm:nonlinear}}

\begin{proof}
Under Assumption \ref{assu:exog}, the parameter $\t_0$ should satisfy 
$g(z, \t_0):=\E[\rest{Y_i}Z_i=z]-m_0(z, \t_0)=0,$
for any $z\in {\cal Z}$. The function $g(z, \cdot)$ is continuously differentiable since the function $m_0$ is continuously differentiable by assumption. Then by the local inverse theorem in \cite{ambrosetti1995primer} (Chapter 2, Theorem 2), $\t_0$ is locally identified if
 there exists $d=dims(\t_0)$ distinct points $z_1,..., z_d\in {\cal Z}$ such that the following condition holds:
\[ G_0:=\left[
\begin{array}{cccc}
 \partial_{\t_1}g(z_1, \t_0), & \partial_{\t_2}g(z_1, \t_0),& ..., &\partial_{\t_d}g(z_1, \t_0)\\
   \partial_{\t_1}g(z_2, \t_0), & \partial_{\t_2}g(z_2, \t_0),& ..., &\partial_{\t_d}g(z_2, \t_0)\\
  \vdots & \vdots & \vdots & \vdots \\
   \partial_{\t_1}g(z_d, \t_0), & \partial_{\t_2}g(z_d, \t_0),& ..., &\partial_{\t_d}g(z_d, \t_0)\\
 \end{array}
 \right] \text{ has full rank.}
 \]
As shown in Lemma  \ref{assu:FullRankW}, the above full rank condition is equivalent to the requirement that
$\partial_{\t_1}g(Z_i, \t_0),  \partial_{\t_2}g(Z_i, \t_0),...,  \partial_{\t_d}g(Z_i, \t_0) \text{ are not multicollinear.}$ The no multicollinearity condition is also equivalent to the requirement that
$\E\left[ \Dif_\t g(Z_i, \t_0)\Dif_{\t^{'}} g(Z_i, \t_0) \right]=\E\left[ \Dif_\t m_0(Z_i, \t_0)\Dif_{\t^{'}} m_0(Z_i, \t_0)  \right] \text{ has full rank.} $

\end{proof}

\subsection{Proof of Lemma \ref{lem:quantile}}
\begin{proof}
Under Assumption
\ref{assu:eps_pdf}, the moment function can be expressed as follows:
\begin{align*}
g\left(z,\t \right) &=  \E\left[\rest{\ind\{Y_i\leq \a +Z_{i}^{'}\b+X_{i}^{'}\g \} } Z_i=z \right] -\tau \\
 & =\E\left[\rest{\E\left[\rest{\ind\left\{ Y_{i}\leq\a+Z_{i}^{'}\b+X_{i}^{'}\g\right\} }X_{i}\right]}Z_{i}=z\right]-\tau \\
 & =\E\left[\rest{\int\ind\left\{ \e \leq\a-\a_{0}+z^{'}\left(\b-\b_{0}\right)+X_{i}^{'}\left(\g-\g_{0}\right)\right\} f_{\e\mid X, Z}\left(\rest{\e}X_{i},z\right)d\e}Z_{i}=z\right]-\tau. 
\end{align*}
The derivative of $\Dif_{\t} g\left(z,\t \right)=\Dif_{\t}m_0\left(z,\t\right)$ with respect to $\t$ is given as
\begin{align*}
&\Dif_{\t} g\left(z,\t \right)=\Dif_{\t}m_0\left(z,\t\right) \\
=& \E\left[\rest{\Dif_{\t}\int\ind\left\{ \e\leq\a-\a_{0}+z^{'}\left(\b-\b_{0}\right)+X_{i}^{'}\left(\g-\g_{0}\right)\right\} f_{\e \mid X, Z}\left(\rest{\e}X_{i},z\right)d\e}Z_{i}=z\right]\\
 =& \E\left[\rest{f_{\e \mid X, Z}\left(\rest{\a-\a_{0}+z^{'}\left(\b-\b_{0}\right)+X_{i}^{'}\left(\g-\g_{0}\right)}X_{i},z\right)\left(1,z^{'},X_{i}^{'}\right)^{'}}Z_{i}=z\right].
\end{align*}
Evaluating at $\t_0$, the derivative $m_{0}\left(z,\t \right)$ is simplified as
\begin{align*}
\Dif_{\t}m_0\left(z,\t_{0}\right) & =\E\left[\rest{f_{\e \mid X, Z}\left(\rest 0X_{i},z\right)\left(\begin{array}{c}
1\\
z\\
X_{i}
\end{array}\right)}Z_{i}=z\right] 
%=\left(\begin{array}{c}
%\E\left[\rest{f_{\e}\left(\rest 0X_{i},z\right)}Z_{i}=z\right]\\
%\E\left[\rest{f_{\e}\left(\rest 0X_{i},z\right)}Z_{i}=z\right]z\\
%\E\left[\rest{f_{\e}\left(\rest 0X_{i},z\right)X_{i}}Z_{i}=z\right]
%\end{array}\right)\\
  =f_{\e \mid Z}\left(\rest 0z\right)\left(\begin{array}{c}
1\\
z\\
\E\left[\rest{\frac{f_{\e \mid X, Z}\left(\rest 0X_{i},z\right)}{f_{\e \mid Z}\left(\rest 0z\right)}X_{i}}Z_{i}=z\right]
\end{array}\right).
\end{align*}
Applying Bayes' rule, we have
\[ \begin{aligned}
\E\left[\rest{\frac{f_{\e \mid X, Z}\left(\rest 0X_{i},z\right)}{f_{\e \mid Z}\left(\rest 0z\right)}X_{i}}Z_{i}=z\right]
&=\int\frac{f_{\e \mid X, Z}\left(\rest {0} x,z\right)}{f_{\e \mid Z}\left(\rest 0z\right)}xf_{X\mid Z}\left(\rest xz\right)dx \\
&=\int xf_{X\mid \e, Z}\left(\rest {x}z, \e=0\right)dx =\E\left[\rest{X_{i}} Z_{i}=z, \e_{i}=0\right].
\end{aligned}
\]
Therefore,
\[
\Dif_{\t}m_0\left(Z_i,\t_{0}\right) 
=f_{\e\mid Z}\left(\rest 0 Z_i\right)
\left(\begin{array}{c}
1\\
Z_i\\
\E\left[\rest{X_{i}}Z_{i},\e_{i}=0\right]
\end{array}\right).
\]
Since $f_{\e\mid Z}\left(\rest 0 Z_i=z\right)>0$ for any $z\in \cal Z$, the full rank condition of $\E\left[\Dif_\t m_0(Z_i, \t_0)\Dif_{\t^{'}} m_0(Z_i, \t_0) \right]$ is equivalent to the no multicollinearity of $1$, $Z_i$, $\tilde{\pi}_0(Z_i):=\E\left[\rest{X_{i}}Z_{i},\e_{i}=0\right]$.

\end{proof}

\newpage
\section{Online Appendix}\label{sec:App_OL}

\subsection{Variance Comparison with the Infeasible Estimator}\label{subsec:VInfeasible}

We now provide a more detailed discussion about the asymptotic variance
of our estimators. We note that $\O_{0}$ and $V_{0}$ are in general different from
the asymptotic variance matrices that would appear in the \emph{infeasible}
regression of $Y_{i}$ on $1,$ $Z_{i}$, and the \emph{true value
of} $\pi_{0}\left(Z_{i}\right)$, which has no endogeneity issue:
\begin{equation}
Y_{i}=\a_{0}+Z_{i}^{'}\b_{0}+\pi_{0}\left(Z_{i}\right)^{'}\g_{0}+v_{i},\label{eq:RegInfeasible}
\end{equation}
where $v_{i}:=\e_{i}+u_{i}^{'}\g_{0},\  u_i=X_i-\pi_0(Z_i),\ \E\left[\rest{v_{i}}Z_{i}\right]=0$.
The infeasible OLS estimator based on \eqref{eq:RegInfeasible} is
then given by
\[
\tilde{\t}_{infeasible}:=\left(\frac{1}{n}\sum_{i=1}^{n}W_{i}W_{i}^{'}\right)^{-1}\frac{1}{n}\sum_{i=1}^{n}W_{i}Y_{i},
\]
which is $\sqrt{n}$-consistent and asymptotically normal with asymptotic
variance matrix $V_{infeasible}:=\Sigma_{0}^{-1}\O_{infeasible}\Sigma_{0}^{-1}$,
where
\[
\O_{infeasible}:=\E\left[v_{i}^{2}W_{i}W_{i}^{'}\right]=\E\left[\left(\e_{i}+u_{i}^{'}\g_{0}\right)^{2}W_{i}W_{i}^{'}\right].
\]
Recalling that $\E\left[\rest{\e_{i}}Z_{i}\right]=0$ and $\E\left[\rest{u_{i}}Z_{i}\right]=0$,
we have
\begin{align}
\O_{infeasible}-\O_{0} & =\E\left[\left(\left(\e_{i}+u_{i}^{'}\g_{0}\right)^{2}-\e_{i}^{2}\right)W_{i}W_{i}^{'}\right]\nonumber \\
 & =\E\left[\left(2\E\left[\rest{\e_{i}u_{i}^{'}}Z_{i}\right]\g_{0}+\g_{0}^{'}\E\left[\rest{u_{i}u_{i}^{'}}Z_{i}\right]\g_{0}\right)W_{i}W_{i}^{'}\right]\nonumber \\
 & =\E\left[\left(2\text{Cov}\left(\rest{\e_{i},u_{i}}Z_{i}\right)\g_{0}+\g_{0}^{'}\text{Var}\left(\rest{u_{i}}Z_{i}\right)\g_{0}\right)W_{i}W_{i}^{'}\right],\label{eq:VarComp}
\end{align}
which can be positive or negative definite (or zero), depending on
the value of $\g_{0}$ and the conditional covariance between $\e_{i}$
and $u_{i}$ given $Z_{i}$.
Hence, our estimators $\hat{\t}^{*},\hat{\t}$ do not dominate, or
are dominated by, the infeasible estimator in terms of asymptotic
efficiency.

As a further illustration, consider the simple case with scalar-valued
$X_{i}$ and $Z_{i}$, and furthermore suppose that $\text{Var}\left(\rest{\e_{i}}Z_{i}\right)=\text{Var}\left(\rest{u_{i}}Z_{i}\right)=1$
and 
\[
\text{Cor}\left(\rest{\e_{i},u_{i}}Z_{i}\right)=\text{Cor}\left(\e_{i},u_{i}\right)=\rho_{0}\in\left[0,1\right].
\]
The parameter $\rho_{0}$ can be viewed as a measure of the extent
of the endogeneity issue in the linear regression \eqref{eq:Model}.
It is then clear from \eqref{eq:VarComp} above that:
\begin{itemize}
\item $\O_{infeasible}-\O_{0}$ is positive definite if and only if $2\rho_{0}\g_{0}+\g_{0}^{2}>0$,
i.e.,
\[
\g_{0}>0\text{ and }\rho_{0}>-\frac{1}{2}\g_{0}\quad\text{or }\quad\g_{0}<0\text{ and }\rho_{0}<-\frac{1}{2}\g_{0}.
\]
In words, this says that our estimators $\hat{\t}^{*}$ and $\hat{\t}$
are asymptotically \emph{more efficient} than the infeasible estimator
obtained with the true values of $\pi_{0}\left(Z_{i}\right)$, when
the endogeneity issue is \emph{not too large in the opposite direction}
of $\g_{0}$. Note that this is trivially satisfied when $\g_{0}\neq0$
and $\rho_{0}=0$, i.e., when $X_{i}$ has some effect on $Y_{i}$
and $X_{i}$ is exogenous. 
\item $\O_{infeasible}-\O_{0}$ is negative definite if and only if 
\[
\g_{0}>0\text{ and }\rho_{0}<-\frac{1}{2}\g_{0}\quad\text{or }\quad\g_{0}<0\text{ and }\rho_{0}>-\frac{1}{2}\g_{0}.
\]
In words, this says that our estimators $\hat{\t}^{*}$ and $\hat{\t}$
are asymptotically \emph{less efficient} than the infeasible estimator
when the endogeneity issue is \emph{sufficiently large in the opposite
direction} of $\b_{0}$.
\item $\O_{infeasible}=\O_{0}$ if and only if
\[
\g_{0}=0\quad\text{or}\quad\rho_{0}=-\frac{1}{2}\g_{0}.
\]
In words, our estimators $\hat{\t}^{*}$ and $\hat{\t}$ are asymptotically
\emph{equally efficient as} the infeasible estimator when $X_{i}$
has no effect on $Y_{i}$, or if the endogeneity works into the opposition
direction of $\g_{0}$ exactly to a certain extent.
\end{itemize}
While it is true that our feasible estimators may be more or less
efficient than the infeasible estimator with known values of $\pi_{0}\left(Z_{i}\right)$,
the analysis in the illustration above suggests that our feasible
estimators are likely to be \emph{more efficient} in scenarios where
applied researchers are somewhat confident that the endogeneity issue
won't overwhelm the true effect of $X_{i}$ on $Y_{i}$ in the opposite
direction.

\subsection{Additional Simulation Results}\label{subsec:Sim_Additional}

Below we report the simulation results for the estimators of the intercept $\a_0$ and the coefficients on the exogenous regressors $\b_0$. The title of each table below designates the underlying DGP and the parameter being estimated. Specifically, ``Bin $X$ with Bin $Z_1, Z_2$" refers to the DGP in Section \ref{subsec: Sim1}, ``Bin $X$ with Cts $Z$" refers that in Section \ref{subsec: Sim2}, while ``Cts $X$ with Cts $Z$" refers that in Section \ref{subsec:Sim3}.

%\begin{table}[!htbp] 
\begin{sidewaystable}[!htbp] 
\centering
\caption{Bin $X$ with Bin $Z_1, Z_2$: Performance of $\hat{\alpha}$ (Intercept)  \\
Different Degrees of Exclusion Violations}
 \label{table:Sim1_alpha_exc}
\begin{tabular}{c|cccc|cccc|cccc}
\hline
\hline
& \multicolumn{4}{c|}{$n=250$ } & \multicolumn{4}{c|}{$n=500$ }& \multicolumn{4}{c}{$n=1000$ }  \\
\hline
 Est& Bias & SD & RMSE & CP &Bias & SD & RMSE & CP & Bias & SD & RMSE & CP \\
\hline
&\multicolumn{12}{c}{$\b_{01}=\b_{02}=1$}  \\
\hline
$\hat{\theta}$ &   0.001 & 0.141&  0.141&  0.957 & -0.001&  0.104 & 0.104&  0.951& -0.003 & 0.074&  0.074  &0.944
  \\[1.5ex]
 $\hat{\theta}^{*}$ &   0.001 & 0.141&  0.141&  0.957 & -0.001&  0.104 & 0.104&  0.951& -0.003 & 0.074&  0.074  &0.944
  \\[1.5ex]
 $\hat{\theta}_{disc}$ &   0.001 & 0.141&  0.141&  0.957 & -0.001&  0.104 & 0.104&  0.951& -0.003 & 0.074&  0.074  &0.944
  \\[1.5ex]
    $\hat{\theta}_{2sls}$ & 1.399 & 17.307 & 17.364 & 0.846 & 2.602 &  59.012 &  59.069 & 0.859 &  -0.254 & 33.300  &  33.301 & 0.854
  \\[1.5ex]
  $\hat{\theta}_{ols}$ &  0.241 & 0.119  &0.269  &0.486 &  0.242 & 0.088&  0.258 & 0.204  & 0.242 & 0.062 & 0.250  & 0.026
       \\[1.5ex]
 \hline 
&\multicolumn{12}{c}{$\b_{01}=\b_{02}=0.5$} \\
\hline
$\hat{\theta}$ &   0.001 &  0.141 &  0.141 &  0.957&  -0.001 &   0.104 &   0.104 &  0.951& -0.003  & 0.074 &  0.074 & 0.944
  \\[1.5ex]
 $\hat{\theta}^{*}$ &  0.001 &  0.141 &  0.141 &  0.957&  -0.001 &   0.104 &   0.104 &  0.951&  -0.003  & 0.074 &  0.074 & 0.944
  \\[1.5ex]
 $\hat{\theta}_{disc}$ &  0.001 &  0.141 &  0.141 &  0.957&  -0.001 &   0.104 &   0.104 &  0.951&  -0.003  & 0.074 &  0.074 & 0.944
  \\[1.5ex]
      $\hat{\theta}_{2sls}$ &  0.829 &  9.015 &  9.054 &  0.876 &  1.405 & 30.121&  30.154 &  0.881 & 0.022 & 16.656 & 16.656 & 0.878
        \\[1.5ex]
  $\hat{\theta}_{ols}$ &  0.241&   0.119&   0.269 &  0.486&   0.242&    0.088 &   0.258&   0.204& 0.242 &  0.062 &  0.250  & 0.026
     \\[1.5ex]
 \hline
 &\multicolumn{12}{c}{$\b_{01}=\b_{02}=0$} \\
\hline
$\hat{\theta}$ &   0.001 &  0.141  & 0.141   &0.957 &-0.001 & 0.104  &0.104  &0.951 & -0.003 & 0.074 & 0.074 & 0.944
  \\[1.5ex]
 $\hat{\theta}^{*}$ & 0.001 &  0.141  & 0.141   &0.957 &-0.001 & 0.104  &0.104  &0.951 & -0.003 & 0.074 & 0.074 & 0.944
   \\[1.5ex]
 $\hat{\theta}_{disc}$ &  0.001 &  0.141  & 0.141   &0.957 &-0.001 & 0.104  &0.104  &0.951& -0.003 & 0.074 & 0.074 & 0.944
   \\[1.5ex]
       $\hat{\theta}_{2sls}$ & 0.259  & 1.883  & 1.901  & 0.992 &0.209 & 2.399  &2.408 & 0.996 & 0.297  &1.787  &1.811 & 0.994
  \\[1.5ex]
  $\hat{\theta}_{ols}$ & 0.241 &  0.119  & 0.269  & 0.486 & 0.242 & 0.088 & 0.258&  0.204& 0.242 & 0.062 & 0.250  & 0.026
     \\[1.5ex]
 \hline
\end{tabular}
%\end{table}
\end{sidewaystable}

\begin{sidewaystable}[!htbp] 
%\begin{table}[!htbp] 
%\setlength{\tabcolsep}{2mm}{
\caption{Bin $X$ with Bin $Z_1, Z_2$: Performance of $\hat{\beta}_1$ (Coef. of $Z_{1}$) \\
Different Degrees of Exclusion Violations}
 \label{table:Sim1_beta1_exc}
\begin{tabular}{c|cccc|cccc|cccc}
\hline
\hline
& \multicolumn{4}{c|}{$n=250$ } & \multicolumn{4}{c|}{$n=500$ }& \multicolumn{4}{c}{$n=1000$ }  \\
\hline
 Est& Bias & SD & RMSE & CP &Bias & SD & RMSE & CP & Bias & SD & RMSE & CP \\
\hline
&\multicolumn{12}{c}{$\b_{01}=\b_{02}=1$}  \\
\hline
$\hat{\theta}$ &   -0.000 &   0.126 & 0.126 & 0.948 & -0.002 & 0.091 & 0.091 & 0.950&  -0.001 & 0.064 & 0.064  &0.947
  \\[1.5ex]
 $\hat{\theta}^{*}$ &  -0.000 &   0.126 & 0.126 & 0.948 & -0.002 & 0.091 & 0.091 & 0.950&  -0.001 & 0.064 & 0.064  &0.947
  \\[1.5ex]
 $\hat{\theta}_{disc}$ &  -0.000 &   0.126 & 0.126 & 0.948 & -0.002 & 0.091 & 0.091 & 0.950&  -0.001 & 0.064 & 0.064  &0.947
  \\[1.5ex]
   $\hat{\theta}_{2sls}$ & - & - & - & - & - & - & - & -  &  - & - & - & - \\[1.5ex]
  $\hat{\theta}_{ols}$ &   0.001 &  0.122  & 0.122 &  0.950 &   -0.002&   0.088 &  0.088 &  0.954&   -0.002&   0.062 &  0.062 &  0.945
       \\[1.5ex]
 \hline 
&\multicolumn{12}{c}{$\b_{01}=\b_{02}=0.5$} \\
\hline
$\hat{\theta}$ &  -0.000  &  0.126 & 0.126 & 0.948&-0.002  & 0.091  & 0.091 &  0.950  &  -0.001 & 0.064 & 0.064 & 0.947
  \\[1.5ex]
 $\hat{\theta}^{*}$ &  -0.000  &  0.126 & 0.126 & 0.948&-0.002  & 0.091  & 0.091 &  0.950  &   -0.001 & 0.064 & 0.064 & 0.947
  \\[1.5ex]
 $\hat{\theta}_{disc}$ & -0.000  &  0.126 & 0.126 & 0.948&-0.002  & 0.091  & 0.091 &  0.950  &   -0.001 & 0.064 & 0.064 & 0.947
  \\[1.5ex]
   $\hat{\theta}_{2sls}$ & - & - & - & - & - & - & - & -  &  - & - & - & - \\[1.5ex]
  $\hat{\theta}_{ols}$ &   0.001 & 0.122 & 0.122 & 0.950 &  -0.002  & 0.088  & 0.088  & 0.954 & -0.002 & 0.062  &0.062 & 0.945
     \\[1.5ex]
 \hline
 &\multicolumn{12}{c}{$\b_{01}=\b_{02}=0$} \\
\hline
$\hat{\theta}$ &  -0.000 &   0.126 & 0.126 & 0.948 & -0.002 & 0.091 & 0.091&  0.950 &  -0.001 & 0.064 & 0.064&  0.947
  \\[1.5ex]
 $\hat{\theta}^{*}$ &  -0.000 &   0.126 & 0.126 & 0.948& -0.002 & 0.091 & 0.091&  0.950 &  -0.001 & 0.064 & 0.064&  0.947
  \\[1.5ex]
 $\hat{\theta}_{disc}$ &  -0.000 &   0.126 & 0.126 & 0.948& -0.002 & 0.091 & 0.091&  0.950 &  -0.001 & 0.064 & 0.064&  0.947
   \\[1.5ex]
    $\hat{\theta}_{2sls}$ & - & - & - & - & - & - & - & -  &  - & - & - & - \\[1.5ex]
  $\hat{\theta}_{ols}$ &   0.001 & 0.122 & 0.122  &0.950 & -0.002 & 0.088&  0.088&  0.954 & -0.002&  0.062 & 0.062  &0.945
     \\[1.5ex]
 \hline
\end{tabular}
%\end{table}
\end{sidewaystable}

\begin{sidewaystable}[!htbp] 
%\begin{table}[!htbp] 
%\setlength{\tabcolsep}{2mm}{
\caption{Bin $X$ with Bin $Z_1, Z_2$: Performance of $\hat{\beta}_2$ (Coef. of $Z_{2}$) \\
Different Degrees of Exclusion Violations}
 \label{table:Sim1_beta2_exc}
\begin{tabular}{c|cccc|cccc|cccc}
\hline
\hline
 & \multicolumn{4}{c|}{$n=250$ } & \multicolumn{4}{c|}{$n=500$ }& \multicolumn{4}{c}{$n=1000$ }  \\
\hline
 Est& Bias & SD & RMSE & CP &Bias & SD & RMSE & CP & Bias & SD & RMSE & CP \\
\hline
&\multicolumn{12}{c}{$\b_{01}=\b_{02}=1$}  \\
\hline
$\hat{\theta}$ &  -0.002&   0.126&   0.126 &  0.951&  0.004&   0.092  & 0.092 &  0.946&  0.001 &  0.064 &  0.064  & 0.956
  \\[1.5ex]
 $\hat{\theta}^{*}$ &   -0.002&   0.126&   0.126 &  0.951&  0.004&   0.092  & 0.092 &  0.946&  0.001 &  0.064 &  0.064  & 0.956
  \\[1.5ex]
 $\hat{\theta}_{disc}$ &   -0.002&   0.126&   0.126 &  0.951&  0.004&   0.092  & 0.092 &  0.946&  0.001 &  0.064 &  0.064  & 0.956
  \\[1.5ex]
   $\hat{\theta}_{2sls}$ & - & - & - & - & - & - & - & -  &  - & - & - & - \\[1.5ex]
  $\hat{\theta}_{ols}$ &  -0.001&   0.121&   0.121 &  0.951&  0.004&   0.089 &  0.089&   0.946&   0.001&   0.062&   0.062  & 0.954
       \\[1.5ex]
 \hline 
&\multicolumn{12}{c}{$\b_{01}=\b_{02}=0.5$} \\
\hline
$\hat{\theta}$ &   -0.002 & 0.126  &0.126 & 0.951& 0.004&   0.092 &  0.092  & 0.946&  0.001  &0.064 & 0.064  &0.956
  \\[1.5ex]
 $\hat{\theta}^{*}$ &  -0.002 & 0.126  &0.126 & 0.951&0.004&   0.092 &  0.092  & 0.946&   0.001  &0.064 & 0.064  &0.956
  \\[1.5ex]
 $\hat{\theta}_{disc}$ &   -0.002 & 0.126  &0.126 & 0.951&0.004&   0.092 &  0.092  & 0.946&   0.001  &0.064 & 0.064  &0.956
  \\[1.5ex]
   $\hat{\theta}_{2sls}$ & - & - & - & - & - & - & - & -  &  - & - & - & - \\[1.5ex]
  $\hat{\theta}_{ols}$ & -0.001 & 0.121 & 0.121 & 0.951 & 0.004  & 0.089 &  0.089&   0.946&  0.001  &0.062 & 0.062 & 0.954
     \\[1.5ex]
 \hline
 &\multicolumn{12}{c}{$\b_{01}=\b_{02}=0$} \\
\hline
$\hat{\theta}$ &-0.002 & 0.126 & 0.126 & 0.951& 0.004 & 0.092 & 0.092 & 0.946& 0.001 & 0.064 & 0.064 & 0.956
  \\[1.5ex]
 $\hat{\theta}^{*}$ &-0.002 & 0.126 & 0.126 & 0.951& 0.004 & 0.092 & 0.092 & 0.946& 0.001 & 0.064 & 0.064 & 0.956
  \\[1.5ex]
 $\hat{\theta}_{disc}$ & -0.002 & 0.126 & 0.126 & 0.951& 0.004 & 0.092 & 0.092 & 0.946& 0.001 & 0.064 & 0.064 & 0.956
    \\[1.5ex]
     $\hat{\theta}_{2sls}$ & - & - & - & - & - & - & - & -  &  - & - & - & - \\[1.5ex]
  $\hat{\theta}_{ols}$ &  -0.001 & 0.121 & 0.121 & 0.951& 0.004  &0.089&  0.089 & 0.946 &  0.001&  0.062  &0.062  &0.954
     \\[1.5ex]
 \hline
\end{tabular}
%\end{table}
\end{sidewaystable}

%\begin{table}[!htbp] 
\begin{sidewaystable}[!htbp] 
\centering
\caption{Bin $X$ with Bin $Z_1, Z_2$: Performance of $\hat{\alpha}$ (Intercept) \\Different Degrees of Endogeneity}
 \label{table:Sim1_alpha}
\begin{tabular}{c|cccc|cccc|cccc}
\hline
\hline
& \multicolumn{4}{c|}{$n=250$ } & \multicolumn{4}{c|}{$n=500$ }& \multicolumn{4}{c}{$n=1000$ }  \\
\hline
 Est& Bias & SD & RMSE & CP &Bias & SD & RMSE & CP & Bias & SD & RMSE & CP \\
\hline
&\multicolumn{12}{c}{$\rho=0.5$}  \\
\hline
$\hat{\theta}$ &   0.001 & 0.141&  0.141&  0.957 & -0.001&  0.104 & 0.104&  0.951& -0.003 & 0.074&  0.074  &0.944
  \\[1.5ex]
 $\hat{\theta}^{*}$ &   0.001 & 0.141&  0.141&  0.957 & -0.001&  0.104 & 0.104&  0.951& -0.003 & 0.074&  0.074  &0.944
  \\[1.5ex]
 $\hat{\theta}_{disc}$ &   0.001 & 0.141&  0.141&  0.957 & -0.001&  0.104 & 0.104&  0.951& -0.003 & 0.074&  0.074  &0.944
  \\[1.5ex]
    $\hat{\theta}_{2sls}$ & 1.399 & 17.307 & 17.364 & 0.846 & 2.602 &  59.012 &  59.069 & 0.859 &  -0.254 & 33.300  &  33.301 & 0.854
  \\[1.5ex]
  $\hat{\theta}_{ols}$ &  0.241 & 0.119  &0.269  &0.486 &  0.242 & 0.088&  0.258 & 0.204  & 0.242 & 0.062 & 0.250  & 0.026
       \\[1.5ex]
 \hline 
&\multicolumn{12}{c}{$\rho=0$} \\
\hline
$\hat{\theta}$ &   0.003 & 0.140 &  0.140  & 0.958&  -0.000 & 0.104&  0.104  &0.950 &  -0.002&  0.073  &0.073  &0.940 
  \\[1.5ex]
 $\hat{\theta}^{*}$ &     0.003 & 0.140 &  0.140  & 0.958&  -0.000 & 0.104&  0.104  &0.950 &  -0.002&  0.073  &0.073  &0.940 
  \\[1.5ex]
 $\hat{\theta}_{disc}$ &   0.003 & 0.140 &  0.140  & 0.958&  -0.000 & 0.104&  0.104  &0.950 &  -0.002&  0.073  &0.073  &0.940 
  \\[1.5ex]
      $\hat{\theta}_{2sls}$ & 1.663 & 16.202  & 16.287 &  0.881 &1.121 & 19.280 &   19.312 & 0.872 &  0.829  &36.898  &36.907  & 0.863
  \\[1.5ex]
  $\hat{\theta}_{ols}$ &  -0.000  &  0.125  &0.125 & 0.952 & -0.000 & 0.090   &0.090 &  0.946&  -0.001 & 0.065 & 0.065&  0.942
     \\[1.5ex]
 \hline
 &\multicolumn{12}{c}{$\rho=-0.5$} \\
\hline
$\hat{\theta}$ & 0.005 &  0.141 &  0.141  & 0.957 &  0.001&   0.104 &  0.104 &  0.952 & -0.002  & 0.073 &  0.073 &  0.940 
  \\[1.5ex]
 $\hat{\theta}^{*}$ &  0.005 &  0.141 &  0.141  & 0.957 &  0.001&   0.104 &  0.104 &  0.952 & -0.002  & 0.073 &  0.073 &  0.940 
  \\[1.5ex]
 $\hat{\theta}_{disc}$ &   0.005 &  0.141 &  0.141  & 0.957 &  0.001&   0.104 &  0.104 &  0.952 & -0.002  & 0.073 &  0.073 &  0.940 
   \\[1.5ex]
       $\hat{\theta}_{2sls}$ & 0.593 & 29.434  &29.440&  0.887&  0.283 &  29.787  & 29.789 &  0.877&   0.707 & 37.594 & 37.601 & 0.872
  \\[1.5ex]
  $\hat{\theta}_{ols}$ &  -0.242 & 0.122  &0.271&  0.496&  -0.242&  0.087 & 0.257  &0.202& -0.243 & 0.062&  0.250&   0.026
     \\[1.5ex]
 \hline
\end{tabular}
%\end{table}
\end{sidewaystable}

\begin{sidewaystable}[!htbp] 
%\begin{table}[!htbp] 
%\setlength{\tabcolsep}{2mm}{
\caption{Bin $X$ with Bin $Z_1, Z_2$: Performance of $\hat{\beta}_1$ (Coef. of $Z_{1}$) \\Different Degrees of Endogeneity}
 \label{table:Sim1_beta1}
\begin{tabular}{c|cccc|cccc|cccc}
\hline
\hline
& \multicolumn{4}{c|}{$n=250$ } & \multicolumn{4}{c|}{$n=500$ }& \multicolumn{4}{c}{$n=1000$ }  \\
\hline
 Est& Bias & SD & RMSE & CP &Bias & SD & RMSE & CP & Bias & SD & RMSE & CP \\
\hline
&\multicolumn{12}{c}{$\rho=0.5$}  \\
\hline
$\hat{\theta}$ &   -0.000 &   0.126 & 0.126 & 0.948 & -0.002 & 0.091 & 0.091 & 0.950&  -0.001 & 0.064 & 0.064  &0.947
  \\[1.5ex]
 $\hat{\theta}^{*}$ &  -0.000 &   0.126 & 0.126 & 0.948 & -0.002 & 0.091 & 0.091 & 0.950&  -0.001 & 0.064 & 0.064  &0.947
  \\[1.5ex]
 $\hat{\theta}_{disc}$ &  -0.000 &   0.126 & 0.126 & 0.948 & -0.002 & 0.091 & 0.091 & 0.950&  -0.001 & 0.064 & 0.064  &0.947
  \\[1.5ex]
   $\hat{\theta}_{2sls}$ & - & - & - & - & - & - & - & -  &  - & - & - & - \\[1.5ex]
  $\hat{\theta}_{ols}$ &   0.001 &  0.122  & 0.122 &  0.950 &   -0.002&   0.088 &  0.088 &  0.954&   -0.002&   0.062 &  0.062 &  0.945
       \\[1.5ex]
 \hline 
&\multicolumn{12}{c}{$\rho=0$} \\
\hline
$\hat{\theta}$ & -0.000 & 0.126 & 0.126  &0.946 &-0.002  &0.090  &0.090 & 0.948 &-0.001 & 0.063 & 0.063  &0.947
  \\[1.5ex]
 $\hat{\theta}^{*}$ & -0.000 & 0.126 & 0.126  &0.946 &-0.002  &0.090  &0.090 & 0.948 &-0.001 & 0.063 & 0.063  &0.947
  \\[1.5ex]
 $\hat{\theta}_{disc}$ &-0.000 & 0.126 & 0.126  &0.946 &-0.002  &0.090  &0.090 & 0.948 &-0.001 & 0.063 & 0.063  &0.947
  \\[1.5ex]
   $\hat{\theta}_{2sls}$ & - & - & - & - & - & - & - & -  &  - & - & - & - \\[1.5ex]
  $\hat{\theta}_{ols}$ &   0.000 & 0.126 & 0.126  &0.944 & -0.002 & 0.090 & 0.090  &0.949 & -0.001 & 0.063&  0.063&  0.946
     \\[1.5ex]
 \hline
 &\multicolumn{12}{c}{$\rho=-0.5$} \\
\hline
$\hat{\theta}$ &   0.000 & 0.126 & 0.126 & 0.948&  -0.002 & 0.090&   0.090&   0.950&  -0.001&  0.063  &0.063 & 0.947
  \\[1.5ex]
 $\hat{\theta}^{*}$ & 0.000 & 0.126 & 0.126 & 0.948&  -0.002 & 0.090&   0.090&   0.950&  -0.001&  0.063  &0.063 & 0.947
  \\[1.5ex]
 $\hat{\theta}_{disc}$ &  0.000 & 0.126 & 0.126 & 0.948&  -0.002 & 0.090&   0.090&   0.950&  -0.001&  0.063  &0.063 & 0.947
   \\[1.5ex]
    $\hat{\theta}_{2sls}$ & - & - & - & - & - & - & - & -  &  - & - & - & - \\[1.5ex]
  $\hat{\theta}_{ols}$ &  -0.000  &0.123&  0.123 & 0.948&  -0.002 & 0.087 & 0.087 & 0.948& -0.001 & 0.061 & 0.061 & 0.950
     \\[1.5ex]
 \hline
\end{tabular}
%\end{table}
\end{sidewaystable}

\begin{sidewaystable}[!htbp] 
%\begin{table}[!htbp] 
%\setlength{\tabcolsep}{2mm}{
\caption{Bin $X$ with Bin $Z_1, Z_2$: Performance of $\hat{\beta}_2$ (Coef. of $Z_{2}$) \\Different Degrees of Endogeneity}
 \label{table:Sim1_beta2}
\begin{tabular}{c|cccc|cccc|cccc}
\hline
\hline
& \multicolumn{4}{c|}{$n=250$ } & \multicolumn{4}{c|}{$n=500$ }& \multicolumn{4}{c}{$n=1000$ }  \\
\hline
 Est& Bias & SD & RMSE & CP &Bias & SD & RMSE & CP & Bias & SD & RMSE & CP \\
\hline
&\multicolumn{12}{c}{$\rho=0.5$}  \\
\hline
$\hat{\theta}$ &  -0.002&   0.126&   0.126 &  0.951&  0.004&   0.092  & 0.092 &  0.946&  0.001 &  0.064 &  0.064  & 0.956
  \\[1.5ex]
 $\hat{\theta}^{*}$ &   -0.002&   0.126&   0.126 &  0.951&  0.004&   0.092  & 0.092 &  0.946&  0.001 &  0.064 &  0.064  & 0.956
  \\[1.5ex]
 $\hat{\theta}_{disc}$ &   -0.002&   0.126&   0.126 &  0.951&  0.004&   0.092  & 0.092 &  0.946&  0.001 &  0.064 &  0.064  & 0.956
  \\[1.5ex]
   $\hat{\theta}_{2sls}$ & - & - & - & - & - & - & - & -  &  - & - & - & - \\[1.5ex]
  $\hat{\theta}_{ols}$ &  -0.001&   0.121&   0.121 &  0.951&  0.004&   0.089 &  0.089&   0.946&   0.001&   0.062&   0.062  & 0.954
       \\[1.5ex]
 \hline 
&\multicolumn{12}{c}{$\rho=0$} \\
\hline
$\hat{\theta}$ &  -0.002 & 0.126 & 0.126 & 0.950 & 0.004  &0.092&  0.092&  0.944&
 0.001  &0.064  &0.064  &0.956
  \\[1.5ex]
 $\hat{\theta}^{*}$ &   -0.002 & 0.126 & 0.126 & 0.950 & 0.004  &0.092&  0.092&  0.944&
 0.001  &0.064  &0.064  &0.956
  \\[1.5ex]
 $\hat{\theta}_{disc}$ &   -0.002 & 0.126 & 0.126 & 0.950 & 0.004  &0.092&  0.092&  0.944&
 0.001  &0.064  &0.064  &0.956
  \\[1.5ex]
   $\hat{\theta}_{2sls}$ & - & - & - & - & - & - & - & -  &  - & - & - & - \\[1.5ex]
  $\hat{\theta}_{ols}$ &  -0.001 & 0.126 & 0.126 & 0.946 & 0.004 & 0.092  &0.092 & 0.946 & 0.001&  0.064 & 0.064 & 0.956
     \\[1.5ex]
 \hline
 &\multicolumn{12}{c}{$\rho=-0.5$} \\
\hline
$\hat{\theta}$ & -0.002 & 0.126 & 0.126 & 0.948& 0.004&  0.092  &0.092&  0.944 &0.001 & 0.064 & 0.064  &0.958
  \\[1.5ex]
 $\hat{\theta}^{*}$ & -0.002 & 0.126 & 0.126 & 0.948& 0.004&  0.092  &0.092&  0.944 &0.001 & 0.064 & 0.064  &0.958
  \\[1.5ex]
 $\hat{\theta}_{disc}$ &  -0.002 & 0.126 & 0.126 & 0.948& 0.004&  0.092  &0.092&  0.944 &0.001 & 0.064 & 0.064  &0.958
   \\[1.5ex]
    $\hat{\theta}_{2sls}$ & - & - & - & - & - & - & - & -  &  - & - & - & - \\[1.5ex]
  $\hat{\theta}_{ols}$ &  -0.001 & 0.124 & 0.124  &0.946 &0.003 & 0.089&  0.089 & 0.942 &0.001  &0.061 & 0.061 & 0.952
     \\[1.5ex]
 \hline
\end{tabular}
%\end{table}
\end{sidewaystable}

\begin{comment}
%\begin{sidewaystable}[!htbp] 
\begin{table}[!htbp] 
\centering
%\setlength{\tabcolsep}{2mm}{
\caption{First-Stage Performance}
 \label{table:Sim1_first}
\centering
\begin{tabular}{c|ccc|ccc|ccc}
\hline
\hline
 & \multicolumn{3}{c|}{$n=250$ } & \multicolumn{3}{c|}{$n=500$ }& \multicolumn{3}{c}{$n=1000$ }  \\
\hline
$\rho$ &    0.5 & 0 & -0.5 &    0.5 & 0 & -0.5 &    0.5 & 0 & -0.5
\\
\hline
$\hat{X}_{rmse}$ & 0.044  &0.044 & 0.044&  0.030  &0.031 & 0.031 & 0.022  & 0.022 & 0.022
\\[1.5ex]
\hline
$\hat{Y}_{rmse}$ &0.111 & 0.124 &0.137  & 0.081 &0.091 & 0.101 & 0.056 &0.064 &0.070
\\[1.5ex]
\hline
\end{tabular}
\end{table}
%\end{sidewaystable}
\end{comment}

\begin{sidewaystable}[!htbp] 
%\begin{table}[!htbp] 
%\setlength{\tabcolsep}{2mm}{
\caption{Bin $X$ with Cts $Z$: Performance of $\hat{\alpha}$ (Intercept) \\
Different Degrees of Exclusion Violations}
 \label{table:Sim2_alpha_exc}
\begin{tabular}{c|cccc|cccc|cccc}
\hline
\hline
& \multicolumn{4}{c|}{$n=250$ } & \multicolumn{4}{c|}{$n=500$ }& \multicolumn{4}{c}{$n=1000$ }  \\
\hline
 Est& Bias & SD & RMSE & CP &Bias & SD & RMSE & CP & Bias & SD & RMSE & CP \\
\hline
&\multicolumn{12}{c}{$\b_0=1$}  \\
\hline
$\hat{\theta}$ &   -0.021 & 0.175  &0.176 & 0.948 & -0.018 &  0.118 & 0.119  &  0.956 &   -0.012  &0.083 & 0.084 & 0.950 
  \\[1.5ex]
 $\hat{\theta}^{*}$ &  0.052 & 0.165 & 0.173 & 0.942 &  0.036 &  0.113 &  0.118  & 0.951&   0.029 & 0.080 &  0.085&  0.936
  \\[1.5ex]
 $\hat{\theta}_{disc}$ &  0.017 & 0.174 & 0.175 & 0.948 &  0.009 &  0.119 &  0.119  & 0.961&   0.007 & 0.086 & 0.086 & 0.950
  \\[1.5ex]
 $\hat{\theta}_{2sls}$ & -2.582 & 0.192 & 2.589 & 0.000 &-2.578 & 0.136  &2.581  &0.000 &-2.583 & 0.098 & 2.585 & 0.000  
   \\[1.5ex]
  $\hat{\theta}_{ols}$ &   0.240 &  0.118 & 0.267 & 0.447 &  0.243 &  0.083 &  0.257&   0.170 &   0.243  &0.058  &0.250 &  0.012
       \\[1.5ex]
 \hline 
&\multicolumn{12}{c}{$\b_0=0.5$} \\
\hline
$\hat{\theta}$ &  -0.021&  0.175 & 0.176 & 0.948&-0.018 & 0.118&  0.119&  0.956& -0.012&  0.083 &  0.084 & 0.950 
  \\[1.5ex]
 $\hat{\theta}^{*}$ &   0.077 & 0.160 &  0.177 & 0.935& 0.057  &0.111&  0.124 & 0.940 &0.046 & 0.079 & 0.092 & 0.916
   \\[1.5ex]
 $\hat{\theta}_{disc}$ &   0.017&  0.174&  0.175&  0.948&  0.009 & 0.119 & 0.119 & 0.961& 0.007 & 0.086&  0.086  &0.950
  \\[1.5ex]
   $\hat{\theta}_{2sls}$ &  -1.291  &0.132 & 1.298 & 0.000  & -1.288  &0.094 & 1.292 & 0.000  &-1.291 & 0.068  &1.293 & 0.000  
   \\[1.5ex]
  $\hat{\theta}_{ols}$ &   0.240  & 0.118&  0.267 & 0.447&  0.243  &0.083 & 0.257&  0.170 & 0.243 & 0.058 & 0.250  & 0.012
     \\[1.5ex]
 \hline
 &\multicolumn{12}{c}{$\b_0=0$} \\
\hline
$\hat{\theta}$ &  -0.021&  0.175 & 0.176 & 0.948& -0.018 & 0.118 & 0.119 &  0.956& -0.012 & 0.083 & 0.084 & 0.950 
  \\[1.5ex]
 $\hat{\theta}^{*}$ &   0.146 & 0.168  &0.222&  0.840 &  0.110&   0.119  &0.162 & 0.836 & 0.084 & 0.086 & 0.120 &  0.808
  \\[1.5ex]
 $\hat{\theta}_{disc}$ &  0.017 & 0.174&  0.175 & 0.948&  0.009&  0.119 & 0.119 & 0.961&  0.007&  0.086 & 0.086 & 0.950 
   \\[1.5ex]
      $\hat{\theta}_{2sls}$ &  0.000 & 0.103 & 0.103 & 0.948 & 0.002 & 0.072 & 0.072&  0.960 &   0.000  &  0.052 & 0.052 & 0.950 
   \\[1.5ex]
  $\hat{\theta}_{ols}$ &0.240 & 0.118 & 0.267 & 0.447 &   0.243  &0.083 & 0.257&  0.170 &   0.243 & 0.058 & 0.250  & 0.012
     \\[1.5ex]
 \hline
\end{tabular}
%\end{table}
\end{sidewaystable}

\begin{sidewaystable}[!htbp] 
%\begin{table}[!htbp] 
%\setlength{\tabcolsep}{2mm}{
\caption{Bin $X$ with Cts $Z$: Performance of $\hat{\beta}$ (Coef. of $Z$)  \\
Different Degrees of Exclusion Violations}
 \label{table:Sim2_beta_exc}
\begin{tabular}{c|cccc|cccc|cccc}
\hline
\hline
& \multicolumn{4}{c|}{$n=250$ } & \multicolumn{4}{c|}{$n=500$ }& \multicolumn{4}{c}{$n=1000$ }  \\
\hline
 Est& Bias & SD & RMSE & CP &Bias & SD & RMSE & CP & Bias & SD & RMSE & CP \\
\hline
&\multicolumn{12}{c}{$\b_0=1$}  \\
\hline
$\hat{\theta}$ &  -0.006 & 0.070 &  0.070&   0.936 & -0.006 & 0.048 & 0.048 & 0.952&-0.004  &0.034 & 0.035 & 0.937
  \\[1.5ex]
 $\hat{\theta}^{*}$ &   -0.016 & 0.066 & 0.068 &  0.943 & -0.016 & 0.045 & 0.048 & 0.948& -0.012  &0.033  &0.035&  0.939
  \\[1.5ex]
 $\hat{\theta}_{disc}$ &  0.007 & 0.073 & 0.073  & 0.944 &  0.003 & 0.050&   0.050&   0.958&  0.003 & 0.037 & 0.037 & 0.943
  \\[1.5ex]
   $\hat{\theta}_{2sls}$ & - & - & - & - & - & - & - & -  &  - & - & - & - \\[1.5ex]
  $\hat{\theta}_{ols}$ &   0.093 & 0.051  & 0.106  & 0.534 &  0.094 & 0.036  &0.100   & 0.247&  0.094  &0.025 & 0.097 & 0.041
       \\[1.5ex]
 \hline 
&\multicolumn{12}{c}{$\b_0=0.5$} \\
\hline
$\hat{\theta}$ &  -0.006 & 0.070&   0.070&   0.936&-0.006 & 0.048 & 0.048 & 0.952& -0.004 & 0.034 & 0.035 & 0.937
 \\[1.5ex]
 $\hat{\theta}^{*}$ &  -0.002 & 0.062 & 0.062 & 0.963& -0.003 & 0.043&  0.043  &0.966& -0.002 & 0.032 & 0.032 & 0.966
  \\[1.5ex]
 $\hat{\theta}_{disc}$ &   0.007&  0.073 & 0.073  &0.944&  0.003&  0.050   &0.050 &  0.958& 0.003 & 0.037 & 0.037 & 0.943
  \\[1.5ex]
   $\hat{\theta}_{2sls}$ & - & - & - & - & - & - & - & -  &  - & - & - & - \\[1.5ex]
  $\hat{\theta}_{ols}$ &   0.093 & 0.051&  0.106 & 0.534&   0.094 & 0.036 & 0.100&  0.247&  0.094 & 0.025 & 0.097 & 0.041
     \\[1.5ex]
 \hline
 &\multicolumn{12}{c}{$\b_0=0$} \\
\hline
$\hat{\theta}$ &  -0.006 & 0.070 &  0.070&   0.936& -0.006 & 0.048 & 0.048 & 0.952& -0.004&   0.034 &  0.035 &  0.937
 \\[1.5ex]
 $\hat{\theta}^{*}$ &  0.042 & 0.059 & 0.073 & 0.938&  0.033 & 0.043  &0.054 & 0.918&  0.026 &  0.033 &  0.042 &  0.880 
  \\[1.5ex]
 $\hat{\theta}_{disc}$ & 0.007 & 0.073&  0.073 & 0.944 &   0.003 & 0.050 &  0.050 &  0.958 &  0.003&   0.037 &  0.037&   0.943
   \\[1.5ex]
    $\hat{\theta}_{2sls}$ & - & - & - & - & - & - & - & -  &  - & - & - & - \\[1.5ex]
  $\hat{\theta}_{ols}$ & 0.093 & 0.051 & 0.106 & 0.534 &  0.094 & 0.036&  0.100 &   0.247&  0.094 &  0.025 &  0.097 &  0.041
     \\[1.5ex]
 \hline
\end{tabular}
%\end{table}
\end{sidewaystable}

\begin{sidewaystable}[!htbp] 
%\begin{table}[!htbp] 
%\setlength{\tabcolsep}{2mm}{
\caption{Bin $X$ with Cts $Z$: Performance of $\hat{\alpha}$ (Intercept) \\Different Degrees of Endogeneity}
 \label{table:Sim2_alpha}
\begin{tabular}{c|cccc|cccc|cccc}
\hline
\hline
& \multicolumn{4}{c|}{$n=250$ } & \multicolumn{4}{c|}{$n=500$ }& \multicolumn{4}{c}{$n=1000$ }  \\
\hline
 Est& Bias & SD & RMSE & CP &Bias & SD & RMSE & CP & Bias & SD & RMSE & CP \\
\hline
&\multicolumn{12}{c}{$\rho=0.5$}  \\
\hline
$\hat{\theta}$ &   -0.021 & 0.175  &0.176 & 0.948 & -0.018 &  0.118 & 0.119  &  0.956 &   -0.012  &0.083 & 0.084 & 0.950 
  \\[1.5ex]
 $\hat{\theta}^{*}$ &  0.052 & 0.165 & 0.173 & 0.942 &  0.036 &  0.113 &  0.118  & 0.951&   0.029 & 0.080 &  0.085&  0.936
  \\[1.5ex]
 $\hat{\theta}_{disc}$ &  0.017 & 0.174 & 0.175 & 0.948 &  0.009 &  0.119 &  0.119  & 0.961&   0.007 & 0.086 & 0.086 & 0.950
  \\[1.5ex]
 $\hat{\theta}_{2sls}$ & -2.582 & 0.192 & 2.589 & 0.000 &-2.578 & 0.136  &2.581  &0.000 &-2.583 & 0.098 & 2.585 & 0.000  
   \\[1.5ex]
  $\hat{\theta}_{ols}$ &   0.240 &  0.118 & 0.267 & 0.447 &  0.243 &  0.083 &  0.257&   0.170 &   0.243  &0.058  &0.250 &  0.012
       \\[1.5ex]
 \hline 
&\multicolumn{12}{c}{$\rho=0$} \\
\hline
$\hat{\theta}$ & -0.036 & 0.174 & 0.178  &0.936 & -0.027  &0.117 & 0.120 &  0.954 & -0.017 &  0.083  &0.085&  0.946
  \\[1.5ex]
 $\hat{\theta}^{*}$ &   0.048 & 0.165&  0.172 & 0.948 &  0.033 &  0.112 &  0.117 & 0.954 &  0.028  &0.080  &0.085 & 0.943
  \\[1.5ex]
 $\hat{\theta}_{disc}$ &  0.001 & 0.175 & 0.175 & 0.950 & -0.001 &  0.119 & 0.119 & 0.958 &  0.003  &0.086 & 0.086 & 0.952
  \\[1.5ex]
   $\hat{\theta}_{2sls}$ &  -2.581 & 0.187 & 2.588 & 0.000 & -2.577&  0.134&  2.581&  0.000 &  -2.581&  0.096 & 2.583 & 0.000 
   \\[1.5ex]
  $\hat{\theta}_{ols}$ &   0.004 & 0.121&  0.121  & 0.940 &  0.002 &  0.086 & 0.086 & 0.948& 0.001 & 0.059 & 0.059 & 0.952
     \\[1.5ex]
 \hline
 &\multicolumn{12}{c}{$\rho=-0.5$} \\
\hline
$\hat{\theta}$ &  -0.053&  0.173 & 0.181 & 0.918 & -0.037 & 0.116 & 0.122&  0.946 &
  -0.023&  0.083&  0.086&  0.940 
  \\[1.5ex]
 $\hat{\theta}^{*}$ &   0.043 & 0.164 &  0.169 & 0.955 & 0.030 &  0.112 & 0.116  & 0.962&  0.027 & 0.080 &  0.084 & 0.944
  \\[1.5ex]
 $\hat{\theta}_{disc}$ &   -0.017&  0.174 & 0.174 & 0.951 & -0.009 & 0.119 & 0.119&  0.958& -0.002 & 0.086 & 0.086 & 0.956
   \\[1.5ex]
      $\hat{\theta}_{2sls}$ &  -2.582 &  0.180 &   2.588&  0.000 & -2.577 & 0.129 & 2.580&  0.000  & -2.581&  0.093  &2.582&  0.000  
   \\[1.5ex]
  $\hat{\theta}_{ols}$ &  -0.239&  0.118 & 0.266 & 0.452 & -0.238 & 0.083 & 0.252 & 0.174& -0.240&   0.058 & 0.247 & 0.017
     \\[1.5ex]
 \hline
\end{tabular}
%\end{table}
\end{sidewaystable}

\begin{sidewaystable}[!htbp] 
%\begin{table}[!htbp] 
%\setlength{\tabcolsep}{2mm}{
\caption{Bin $X$ with Cts $Z$: Performance of $\hat{\beta}$ (Coef. of $Z$) \\Different Degrees of Endogeneity}
 \label{table:Sim2_beta}
\begin{tabular}{c|cccc|cccc|cccc}
\hline
\hline
& \multicolumn{4}{c|}{$n=250$ } & \multicolumn{4}{c|}{$n=500$ }& \multicolumn{4}{c}{$n=1000$ }  \\
\hline
 Est& Bias & SD & RMSE & CP &Bias & SD & RMSE & CP & Bias & SD & RMSE & CP \\
\hline
&\multicolumn{12}{c}{$\rho=0.5$}  \\
\hline
$\hat{\theta}$ &  -0.006 & 0.070 &  0.070&   0.936 & -0.006 & 0.048 & 0.048 & 0.952&-0.004  &0.034 & 0.035 & 0.937
  \\[1.5ex]
 $\hat{\theta}^{*}$ &   -0.016 & 0.066 & 0.068 &  0.943 & -0.016 & 0.045 & 0.048 & 0.948& -0.012  &0.033  &0.035&  0.939
  \\[1.5ex]
 $\hat{\theta}_{disc}$ &  0.007 & 0.073 & 0.073  & 0.944 &  0.003 & 0.050&   0.050&   0.958&  0.003 & 0.037 & 0.037 & 0.943
  \\[1.5ex]
   $\hat{\theta}_{2sls}$ & - & - & - & - & - & - & - & -  &  - & - & - & - \\[1.5ex]
  $\hat{\theta}_{ols}$ &   0.093 & 0.051  & 0.106  & 0.534 &  0.094 & 0.036  &0.100   & 0.247&  0.094  &0.025 & 0.097 & 0.041
       \\[1.5ex]
 \hline 
&\multicolumn{12}{c}{$\rho=0$} \\
\hline
$\hat{\theta}$ & -0.012 & 0.070&   0.071 & 0.929 & -0.010&   0.048 & 0.049 & 0.946& -0.006 & 0.034&  0.035 & 0.939 
  \\[1.5ex]
 $\hat{\theta}^{*}$ &  -0.019 &  0.065 & 0.068 &  0.940 &  -0.018 & 0.045 &  0.048 & 0.947 & -0.013  &0.033&  0.035 & 0.934
  \\[1.5ex]
 $\hat{\theta}_{disc}$ &  0.000  &  0.073 & 0.073 & 0.946 & -0.001 &  0.050 & 0.050 &  0.956&  0.001 & 0.037  &0.037  &0.943
  \\[1.5ex]
   $\hat{\theta}_{2sls}$ & - & - & - & - & - & - & - & -  &  - & - & - & - \\[1.5ex]
  $\hat{\theta}_{ols}$ &   0.001 & 0.051&  0.051 & 0.940 &  0.000  &  0.036  &0.036 & 0.942 & 0.000&    0.026 & 0.026&  0.944
     \\[1.5ex]
 \hline
 &\multicolumn{12}{c}{$\rho=-0.5$} \\
\hline
$\hat{\theta}$ &  -0.018  & 0.070 &   0.072  & 0.920  & -0.013 & 0.048 & 0.049 & 0.942 & -0.008 & 0.034&  0.035 & 0.934
  \\[1.5ex]
 $\hat{\theta}^{*}$ &  -0.021 &  0.065  & 0.068  & 0.940  & -0.019 & 0.045 & 0.049 & 0.945 & -0.014  &0.032 & 0.035 & 0.932
  \\[1.5ex]
 $\hat{\theta}_{disc}$ &   -0.007  & 0.073 &  0.073  & 0.946 & -0.004 & 0.050 &  0.050  & 0.958 & -0.001 & 0.037& 0.037&  0.944
   \\[1.5ex]
    $\hat{\theta}_{2sls}$ & - & - & - & - & - & - & - & -  &  - & - & - & - \\[1.5ex]
  $\hat{\theta}_{ols}$ &  -0.093  & 0.051  & 0.106  & 0.540  & -0.093 & 0.035 & 0.100 &  0.246 & -0.093 & 0.025 & 0.096 & 0.039
     \\[1.5ex]
 \hline
\end{tabular}
%\end{table}
\end{sidewaystable}

\begin{comment}
\begin{table}[!htbp] 
\centering
%\setlength{\tabcolsep}{2mm}{
\caption{Bin $X$ and Con $Z$: First-Stage Performance}
 \label{table:Sim2_first}
\centering
\begin{tabular}{c|ccc|ccc|ccc}
\hline
\hline
 & \multicolumn{3}{c|}{$n=250$ } & \multicolumn{3}{c|}{$n=500$ }& \multicolumn{3}{c}{$n=1000$ }  \\
\hline
$\rho$ &    0.5 & 0 & -0.5 &    0.5 & 0 & -0.5 &    0.5 & 0 & -0.5
\\
\hline
$\hat{X}_{rmse}$ &  0.042 & 0.042 & 0.042&  0.032 &0.032  & 0.031& 0.024 & 0.024  & 0.024
\\[1.5ex]
\hline
$\hat{Y}_{rmse}$ & 0.211 &0.214  & 0.218& 0.163 &0.166 & 0.168 & 0.126 & 0.128 & 0.130
\\[1.5ex]
\hline
\end{tabular}
\end{table}
\end{comment}

\begin{sidewaystable}[!htbp] 
%\begin{table}[!htbp] 
%\setlength{\tabcolsep}{2mm}{
\caption{Cts $X$ with Cts $Z$: Performance of $\hat{\alpha}$ (Intercept) \\
Different Degrees of Exclusion Violations}
 \label{table:Sim3_alpha_exc}
\begin{tabular}{c|cccc|cccc|cccc}
\hline
\hline
& \multicolumn{4}{c|}{$n=250$ } & \multicolumn{4}{c|}{$n=500$ }& \multicolumn{4}{c}{$n=1000$ }  \\
\hline
 Est& Bias & SD & RMSE & CP &Bias & SD & RMSE & CP & Bias & SD & RMSE & CP \\
\hline
&\multicolumn{12}{c}{$\b_0=1$}  \\
\hline
$\hat{\theta}$ &  -0.000 & 0.062 & 0.062 & 0.953 & 0.001 & 0.045 & 0.045 & 0.952 & 0.000 & 0.031 & 0.031 & 0.952
  \\[1.5ex]
 $\hat{\theta}^{*}$ & 0.001 & 0.066 & 0.066 & 0.951 & 0.001 & 0.047 & 0.047 & 0.950 & 0.000 & 0.032 & 0.032 & 0.951
  \\[1.5ex]
 $\hat{\theta}_{disc}$ & 0.000 & 0.063 & 0.063 & 0.951 & 0.001 & 0.046 & 0.046 & 0.952 & 0.000 & 0.032 & 0.032 & 0.951
  \\[1.5ex]
 $\hat{\theta}_{2sls}$ & 1.872 & 61.32 & 61.34 & 1.000 & 0.710 & 101.8 & 101.9 & 1.000 & -0.235 & 142.3 & 142.3 & 1.000
   \\[1.5ex]
  $\hat{\theta}_{ols}$ & -0.002 & 0.057 & 0.057 & 0.951 &  -0.000 & 0.042 & 0.042 & 0.948 & -0.000 & 0.029 & 0.029 & 0.949
       \\[1.5ex]
 \hline 
&\multicolumn{12}{c}{$\b_0=0.5$} \\
\hline
$\hat{\theta}$ & -0.000 & 0.062 & 0.062 & 0.953 & 0.001 & 0.045 & 0.045 & 0.952 & 0.000 & 0.031 & 0.031 & 0.952
  \\[1.5ex]
 $\hat{\theta}^{*}$ & 0.001 & 0.066 & 0.066 & 0.951 & 0.001 & 0.047 & 0.047 & 0.950 & 0.000 & 0.032 & 0.032 & 0.951
   \\[1.5ex]
 $\hat{\theta}_{disc}$ & 0.000 & 0.063 & 0.063 & 0.951 & 0.001 & 0.046 & 0.046 & 0.952 & 0.000 & 0.032 & 0.032 & 0.951
  \\[1.5ex]
   $\hat{\theta}_{2sls}$ & 0.939 & 30.69 & 30.69 & 1.000 & 0.316 & 49.78 & 49.77 & 1.000 & -0.076 & 68.796 & 68.779 & 1.000  
   \\[1.5ex]
  $\hat{\theta}_{ols}$ & -0.002 & 0.057 & 0.057 & 0.951 & -0.000 & 0.042 & 0.042 & 0.948 & -0.000 & 0.029 & 0.029 & 0.949
     \\[1.5ex]
 \hline
 &\multicolumn{12}{c}{$\b_0=0$} \\
\hline
$\hat{\theta}$ & -0.000 & 0.062 & 0.062 & 0.953 & 0.001 & 0.045 & 0.045 & 0.952 & 0.000 & 0.031 & 0.031 & 0.952
  \\[1.5ex]
 $\hat{\theta}^{*}$ & 0.001 & 0.066 & 0.066 & 0.951 & 0.001 & 0.047 & 0.047 & 0.950 & 0.000 & 0.032 & 0.032 & 0.951
  \\[1.5ex]
 $\hat{\theta}_{disc}$ & 0.000 & 0.063 & 0.063 & 0.951 & 0.001 & 0.046 & 0.046 & 0.952 & 0.000 & 0.032 & 0.032 & 0.951
   \\[1.5ex]
 $\hat{\theta}_{2sls}$ & 0.006 & 1.389 & 1.389 & 0.988 & -0.078 & 3.467 & 3.467 & 0.989 & 0.083 & 5.043 & 5.042 & 0.992
   \\[1.5ex]
  $\hat{\theta}_{ols}$ & -0.002 & 0.057 & 0.057 & 0.951 &   -0.000 & 0.042 & 0.042 & 0.948 & -0.000 & 0.029 & 0.029 & 0.949
     \\[1.5ex]
 \hline
\end{tabular}
%\end{table}
\end{sidewaystable}

\begin{sidewaystable}[!htbp] 
%\begin{table}[!htbp] 
%\setlength{\tabcolsep}{2mm}{
\caption{Cts $X$ with Cts $Z$: Performance of $\hat{\alpha}$ (Intercept) \\Different Degrees of Endogeneity}
 \label{table:Sim3_alpha}
\begin{tabular}{c|cccc|cccc|cccc}
\hline
\hline
& \multicolumn{4}{c|}{$n=250$ } & \multicolumn{4}{c|}{$n=500$ }& \multicolumn{4}{c}{$n=1000$ }  \\
\hline
 Est& Bias & SD & RMSE & CP &Bias & SD & RMSE & CP & Bias & SD & RMSE & CP \\
\hline
&\multicolumn{12}{c}{$\rho=0.5$}  \\
\hline
$\hat{\theta}$ &  -0.000 & 0.062 & 0.062 & 0.953 & 0.001 & 0.045 & 0.045 & 0.952 & 0.000 & 0.031 & 0.031 & 0.952
  \\[1.5ex]
 $\hat{\theta}^{*}$ & 0.001 & 0.066 & 0.066 & 0.951 & 0.001 & 0.047 & 0.047 & 0.950 & 0.000 & 0.032 & 0.032 & 0.951
  \\[1.5ex]
 $\hat{\theta}_{disc}$ & 0.000 & 0.063 & 0.063 & 0.951 & 0.001 & 0.046 & 0.046 & 0.952 & 0.000 & 0.032 & 0.032 & 0.951
  \\[1.5ex]
 $\hat{\theta}_{2sls}$ & 1.872 & 61.32 & 61.34 & 1.000 & 0.710 & 101.8 & 101.9 & 1.000 & -0.235 & 142.3 & 142.3 & 1.000
   \\[1.5ex]
  $\hat{\theta}_{ols}$ & -0.002 & 0.057 & 0.057 & 0.951 &  -0.000 & 0.042 & 0.042 & 0.948 & -0.000 & 0.029 & 0.029 & 0.949
       \\[1.5ex]
 \hline 
&\multicolumn{12}{c}{$\rho=0$} \\
\hline
$\hat{\theta}$ & -0.001 & 0.063 & 0.063 & 0.946 & 0.000 & 0.045 & 0.045 & 0.949 & -0.000 & 0.031 & 0.031 & 0.951
  \\[1.5ex]
 $\hat{\theta}^{*}$ & -0.001 & 0.063 & 0.063 & 0.946 & 0.000 & 0.045 & 0.045 & 0.950 & -0.000 & 0.031 & 0.031 & 0.950
  \\[1.5ex]
 $\hat{\theta}_{disc}$ & -0.001 & 0.063 & 0.063 & 0.947 & 0.000 & 0.045 & 0.045 & 0.950 & -0.000 & 0.031 & 0.031 & 0.952
  \\[1.5ex]
   $\hat{\theta}_{2sls}$ & -1.150 & 74.98 & 74.97 & 1.000 & -2.277 & 60.00 & 60.03 & 1.000 & -0.699 & 29.92 & 29.92 & 1.000
   \\[1.5ex]
  $\hat{\theta}_{ols}$ & -0.001 & 0.063 & 0.063 & 0.946 & 0.000 & 0.045 & 0.045 & 0.949 & -0.000 & 0.031 & 0.031 & 0.952
     \\[1.5ex]
 \hline
 &\multicolumn{12}{c}{$\rho=-0.5$} \\
\hline
$\hat{\theta}$ & -0.002 & 0.067 & 0.067 & 0.947 & -0.001 & 0.046 & 0.046 & 0.950 & -0.000 & 0.032 & 0.032 & 0.953
  \\[1.5ex]
 $\hat{\theta}^{*}$ & -0.002 & 0.065 & 0.065 & 0.949 & -0.001 & 0.045 & 0.045 & 0.949 & -0.000 & 0.032 & 0.032 & 0.953
  \\[1.5ex]
 $\hat{\theta}_{disc}$ & -0.001 & 0.064 & 0.064 & 0.952 &-0.000 & 0.045 & 0.045 & 0.949 & -0.000 & 0.032 & 0.032 & 0.952
   \\[1.5ex]
      $\hat{\theta}_{2sls}$ & -0.605 & 53.45 & 53.44 & 1.000 & 2.908 & 192.0 & 192.0 & 1.000 & -0.816 & 121.0 & 121.0 & 1.000
   \\[1.5ex]
  $\hat{\theta}_{ols}$ & 0.001 & 0.058 & 0.058 & 0.945 & 0.001 & 0.041 & 0.041 & 0.950 & 0.000 & 0.029 & 0.029 & 0.946
     \\[1.5ex]
 \hline
\end{tabular}
%\end{table}
\end{sidewaystable}

\begin{sidewaystable}[!htbp] 
%\begin{table}[!htbp] 
%\setlength{\tabcolsep}{2mm}{
\caption{Cts $X$ with Cts $Z$: Performance of $\hat{\beta}$ (Coeff. on $Z$) \\
Different Degrees of Exclusion Violations}
 \label{table:Sim3_beta_exc}
\begin{tabular}{c|cccc|cccc|cccc}
\hline
\hline
& \multicolumn{4}{c|}{$n=250$ } & \multicolumn{4}{c|}{$n=500$ }& \multicolumn{4}{c}{$n=1000$ }  \\
\hline
 Est& Bias & SD & RMSE & CP &Bias & SD & RMSE & CP & Bias & SD & RMSE & CP \\
\hline
&\multicolumn{12}{c}{$\b_0=1$}  \\
\hline
$\hat{\theta}$ & -0.001 & 0.034 & 0.034 & 0.950 & 0.001 & 0.024 & 0.024 & 0.947 & 0.000 & 0.017 & 0.017 & 0.956\\[1.5ex] 
 $\hat{\theta}^{*}$ & -0.001 & 0.036 & 0.036 & 0.952 &   0.001 & 0.025 & 0.025 & 0.950 & 0.000 & 0.018 & 0.018 & 0.955  \\ [1.5ex]
$\hat{\theta}_{disc}$  & -0.001 & 0.035 & 0.035 & 0.950 &   0.001 & 0.025 & 0.025 & 0.949 & 0.000 & 0.017 & 0.017 & 0.957 \\ [1.5ex]
 $\hat{\theta}_{2sls}$ & - & - & - & - & - & - & - & -  &  - & - & - & - \\[1.5ex]
$\hat{\theta}_{ols}$ & -0.001 & 0.032 & 0.032 & 0.946  &  0.001 & 0.022 & 0.022 & 0.953 & 0.000 & 0.016 & 0.016 & 0.950  \\[1.5ex]
 \hline 
&\multicolumn{12}{c}{$\b_0=0.5$} \\
\hline
$\hat{\theta}$ & -0.001 & 0.034 & 0.034 & 0.950 & 0.001 & 0.024 & 0.024 & 0.947 & 0.000 & 0.017 & 0.017 & 0.956
  \\[1.5ex]
 $\hat{\theta}^{*}$ & -0.001 & 0.036 & 0.036 & 0.952 & 0.001 & 0.025 & 0.025 & 0.950 & 0.000 & 0.018 & 0.018 & 0.955
   \\[1.5ex]
 $\hat{\theta}_{disc}$ & -0.001 & 0.035 & 0.035 & 0.950 & 0.001 & 0.025 & 0.025 & 0.949 & 0.000 & 0.017 & 0.017 & 0.957
  \\[1.5ex]
 $\hat{\theta}_{2sls}$ & - & - & - & - & - & - & - & -  &  - & - & - & - \\[1.5ex]
  $\hat{\theta}_{ols}$ & -0.001 & 0.032 & 0.032 & 0.946 & 0.001 & 0.022 & 0.022 & 0.953 & 0.000 & 0.016 & 0.016 & 0.950
     \\[1.5ex]
 \hline
 &\multicolumn{12}{c}{$\b_0=0$} \\
\hline
$\hat{\theta}$ & -0.001 & 0.034 & 0.034 & 0.950 & 0.001 & 0.024 & 0.024 & 0.947 & 0.000 & 0.017 & 0.017 & 0.956 
  \\[1.5ex]
 $\hat{\theta}^{*}$ & -0.001 & 0.036 & 0.036 & 0.952 & 0.001 & 0.025 & 0.025 & 0.950 & 0.000 & 0.018 & 0.018 & 0.955
  \\[1.5ex]
 $\hat{\theta}_{disc}$ & -0.001 & 0.035 & 0.035 & 0.950 &  0.001 & 0.025 & 0.025 & 0.949 & 0.000 & 0.017 & 0.017 & 0.957
   \\[1.5ex]
 $\hat{\theta}_{2sls}$ & - & - & - & - & - & - & - & -  &  - & - & - & - \\[1.5ex]
  $\hat{\theta}_{ols}$ & -0.001 & 0.032 & 0.032 & 0.946 & 0.001 & 0.022 & 0.022 & 0.953 & 0.000 & 0.016 & 0.016 & 0.950
     \\[1.5ex]
 \hline
\end{tabular}
%\end{table}
\end{sidewaystable}

\begin{sidewaystable}[!htbp] 
%\begin{table}[!htbp] 
%\setlength{\tabcolsep}{2mm}{
\caption{Cts $X$ with Cts $Z$: Performance of $\hat{\beta}$ (Coeff. of $Z$) \\Different Degrees of Endogeneity}
 \label{table:Sim3_beta}
\begin{tabular}{c|cccc|cccc|cccc}
\hline
\hline
& \multicolumn{4}{c|}{$n=250$ } & \multicolumn{4}{c|}{$n=500$ }& \multicolumn{4}{c}{$n=1000$ }  \\
\hline
 Est& Bias & SD & RMSE & CP &Bias & SD & RMSE & CP & Bias & SD & RMSE & CP \\
\hline
&\multicolumn{12}{c}{$\rho=0.5$}  \\
\hline
$\hat{\theta}$ & -0.001 & 0.034 & 0.034 & 0.950 & 0.001 & 0.024 & 0.024 & 0.947 & 0.000 & 0.017 & 0.017 & 0.956\\[1.5ex] 
 $\hat{\theta}^{*}$ & -0.001 & 0.036 & 0.036 & 0.952 &   0.001 & 0.025 & 0.025 & 0.950 & 0.000 & 0.018 & 0.018 & 0.955  \\ [1.5ex]
$\hat{\theta}_{disc}$  & -0.001 & 0.035 & 0.035 & 0.950 &   0.001 & 0.025 & 0.025 & 0.949 & 0.000 & 0.017 & 0.017 & 0.957 \\ [1.5ex]
 $\hat{\theta}_{2sls}$ & - & - & - & - & - & - & - & -  &  - & - & - & - \\[1.5ex]
$\hat{\theta}_{ols}$ & -0.001 & 0.032 & 0.032 & 0.946  &  0.001 & 0.022 & 0.022 & 0.953 & 0.000 & 0.016 & 0.016 & 0.950  \\[1.5ex]
 \hline 
&\multicolumn{12}{c}{$\rho=0$} \\
\hline
$\hat{\theta}$ & -0.001 & 0.036 & 0.036 & 0.945 & 0.000 & 0.025 & 0.025 & 0.944 & 0.000 & 0.017 & 0.017 & 0.953
  \\[1.5ex]
 $\hat{\theta}^{*}$ & -0.001 & 0.036 & 0.036 & 0.948 & 0.000 & 0.025 & 0.025 & 0.942 & 0.000 & 0.017 & 0.017 & 0.958
  \\[1.5ex]
 $\hat{\theta}_{disc}$ & -0.001 & 0.036 & 0.036 & 0.946 &0.000 & 0.025 & 0.025 & 0.942 & 0.000 & 0.017 & 0.017 & 0.955
  \\[1.5ex]
 $\hat{\theta}_{2sls}$ & - & - & - & - & - & - & - & -  &  - & - & - & - \\[1.5ex]
  $\hat{\theta}_{ols}$ & -0.001 & 0.036 & 0.036 & 0.948 &0.000 & 0.025 & 0.025 & 0.944 & 0.000 & 0.017 & 0.017 & 0.955
     \\[1.5ex]
 \hline
 &\multicolumn{12}{c}{$\rho=-0.5$} \\
\hline
$\hat{\theta}$ &  0.000 & 0.037 & 0.037 & 0.946 & -0.001 & 0.026 & 0.026 & 0.948 & 0.000 & 0.018 & 0.018 & 0.957
  \\[1.5ex]
 $\hat{\theta}^{*}$ & 0.000 & 0.036 & 0.036 & 0.946 &-0.001 & 0.025 & 0.025 & 0.946 &0.000 & 0.017 & 0.017 & 0.956
  \\[1.5ex]
 $\hat{\theta}_{disc}$ & 0.000 & 0.035 & 0.035 & 0.947 &-0.001 & 0.025 & 0.025 & 0.949 &0.000 & 0.018 & 0.017 & 0.955
   \\[1.5ex]
 $\hat{\theta}_{2sls}$ & - & - & - & - & - & - & - & -  &  - & - & - & - \\[1.5ex]
  $\hat{\theta}_{ols}$ & 0.000 & 0.032 & 0.032 & 0.947 & -0.000 & 0.023 & 0.023 & 0.941 & 0.000 & 0.016 & 0.016 & 0.955
     \\[1.5ex]
 \hline
\end{tabular}
%\end{table}
\end{sidewaystable}

\end{document}